    \newcommand{\ket}[1]{\vert  #1 \rangle}
    \newcommand{\bra}[1]{\langle #1 |}
    \newcommand{\inprod}[2]{\langle #1 | #2 \rangle}
	\newcommand{\id}{\mathbbm{1}}
	\newcommand{\tr}{\operatorname{Tr}  }
	\renewcommand{\vec}[1]{\mathbf{#1}}
\newcommand{\footnoterecall}[1]{\hyperref[#1]{\footnotemark[\value{#1}]}}
\declaretheorem[]{axiom}
\declaretheorem[]{definition}
\declaretheorem[]{corollary}
\newenvironment{proof}{\paragraph{Proof:}}{\hfill$\square$}
\definecolor{processblue}{cmyk}{0.96,0,0,0}
\begin{document}

\title{Any theory that admits a Wigner's Friend type multi-agent paradox is logically contextual}
\author{Nuriya Nurgalieva}
\affiliation{Institute for Theoretical Physics, ETH Z\"{u}rich, 8093 Z\"{u}rich, Switzerland}
\affiliation{Department of Physics, University of Z\"{u}rich, 8057 Z\"{u}rich, Switzerland}
\email{nuriya.nurgalieva@physik.uzh.ch}

\author{V. Vilasini}
\affiliation{Institute for Theoretical Physics, ETH Z\"{u}rich, 8093 Z\"{u}rich, Switzerland}
\affiliation{Inria, Université Grenoble Alpes, 38000 Grenoble, France}
\email{vilasini@inria.fr}

\maketitle

\date{}

\begin{abstract}

Wigner’s Friend scenarios push the boundaries of quantum theory by modeling agents, along with their memories storing measurement outcomes, as physical quantum systems. Extending these ideas beyond quantum theory, we ask: in which physical theories, and under what assumptions, can agents who reason logically about each other’s measurement outcomes encounter apparent paradoxes? To address this, we prove a link between Wigner's Friend type multi-agent paradoxes and contextuality in general theories: if agents who are modeled within a physical theory come to a contradiction when reasoning using that theory (under certain assumptions on how they reason and describe measurements), then the theory must admit contextual correlations of a logical form. This also yields a link between the distinct
fundamental concepts of Heisenberg cuts and measurement contexts in general theories, and in particular, implies that the quantum Frauchiger-Renner paradox is a proof of logical contextuality. Moreover, we identify structural properties of such paradoxes in general theories and specific to quantum theory. For instance, we demonstrate that theories admitting behaviors corresponding to extremal vertices of n-cycle contextuality scenarios admit Wigner's Friend type paradoxes without post-selection, and that any quantum Wigner's Friend paradox based on the n-cycle scenario must necessarily involve post-selection. Further, we construct a multi-agent paradox based on a genuine contextuality scenario involving sequential measurements on a single system, showing that Bell non-local correlations between distinct subsystems is not necessary for Wigner's Friend paradoxes. 
Our work offers an approach to investigate the structure of physical theories and their information-theoretic resources by means of
deconstructing the assumptions underlying multi-agent physical paradoxes.
\end{abstract}

\begingroup
\renewcommand\thefootnote{}
\footnotetext{\hspace{-2em}\emph{Both authors contributed equally to this work.}}
\addtocounter{footnote}{-1}
\endgroup

\setlength{\epigraphwidth}{4in}
\epigraph{What if opposites could be combined and transcended, paradox embraced, a whole life lived in contradictory case?}{Rachel Hartman, \emph{Tess of the Road}}

\epigraph{The `paradox' is only a conflict between reality and your feeling of what reality `ought to be'.}{Richard Feynman, \emph{Volume III, Chapter 18. Angular Momentum, The Feynman Lectures on Physics}}

\tableofcontents

\section{Introduction}
\label{sec:introduction}
This introduction starts with a lie. Or did it? The next sentence is true. The previous one is false. These are examples of logical paradoxes, which when mapped to scenarios where statements are attributed to agents: ``Alice says that Bob lies; Bob says that she told the truth'', constitute a multi-agent paradox (Figure~\ref{fig:epistemic-simple}). The agents, as information carriers and measurement makers, can be embedded in the physical world. We can then consider agents' statements about measurement outcomes that they observe, their associated inferences about the measurement outcomes of another agent and whether this can lead to a physical multi-agent paradox.

An example of such a paradox in a physical theory would be a situation where Alice and Bob measure some systems of the theory at different times, and “Alice upon observing her outcome to be $a=1$ concludes that Bob must have obtained outcome $b=0$” but through another chain of reasoning using the same theory, “Alice knows that upon seeing his outcome to be $b=0$, Bob must have concluded that she will obtain the opposite outcome $a=0$.'' A natural question then arises: can paradoxes and contradictions like these emerge in physical settings, from the experiments, observations and reasoning of truthful, rational agents? If yes, what properties must a physical theory possess, and under what precise assumptions must the reasoning take place, in order to give rise to such paradoxes? 
Developing formal methods to address such questions will shed light on fundamental limits of consistency of scientific reasoning in physical theories, while linking these limits to our physical understanding of measurements and information-theoretic resources of the theory. 

 Within quantum theory, a thought experiment was introduced by Frauchiger and Renner~\cite{Frauchiger2018} (FR)~\footnote{We revisit this scenario from a contextual point of view in Section~\ref{subsec:contextuality}, and discuss its generalizations in Section~\ref{sec:paradox}.}, where agents who  model each other as quantum systems \textit{and} are allowed to reason about each other's measurement outcomes, come to a logical contradiction. This conclusion is made under certain assumptions relating to how agents make inferences and how their measurements are modeled \cite{Nurgalieva2021}. The situation considered by FR constitutes a Wigner's Friend Scenario \cite{Wigner1961}, which are quantum protocols where the agents and their labs are also regarded as possibly unitarily evolving closed quantum systems, on which other agents can perform arbitrary quantum operations. Importantly, in this case, the paradox relies on the ambiguity in modelling quantum measurements: either as yielding definite classical outcomes or as unitary quantum evolutions that are not associated with definite classical records, the distinction being captured by the Heisenberg cut in quantum theory \cite{Vilasini2022}. It was demonstrated in \cite{Vilasini2022} that any multi-agent paradox arising in any Wigner's Friend scenario that can be considered in quantum theory (such as that of FR) can always be resolved by clearly accounting for the choice of Heisenberg cuts under which each prediction and logical statement is derived. 

Given that the Heisenberg cut does not typically play an explicit role in our standard usage of quantum theory, this raises the question: in which situations can the information about Heisenberg cuts be safely ignored and in which situations would this lead to apparent multi-agent paradoxes such as that of Frauchiger and Renner? Previously, \cite{Vilasini2022} showed that in standard quantum protocols where agents do not perform non-trivial quantum operations on each others' labs (in contrast to Wigner's Friend type protocols where they do), the Heisenberg cut can be safely and consistently ignored, recovering our usual intuitions about such quantum experiments. This still leaves open the question of when paradoxes can arise in Wigner's Friend type protocols in quantum or more general theories, and more generally on the necessary resources of a theory that lead to such paradoxes. Such resolutions and classifications of quantum paradoxes are relevant for creating foolproof, non-contradictory logical and inferential computing systems (in quantum and other non-classical theories), where one part of a larger system would need to reason about its other parts. Moreover analysing the landscape of these multi-agent paradoxes from basic theory-independent assumptions can provide valuable fundamental insights on the structure of a physical theory and on theory-independent resolutions, despite the fact that such paradoxes in quantum theory are only apparent and can be fully resolved as shown in \cite{Vilasini2022}.

Wigner's Friend type multi-agent paradoxes were first formulated in a theory-independent manner in our previous work \cite{VNdR}, and it was shown that a post-quantum theory, box-world also admits a Wigner's Friend type paradox akin to FR's proposal. Building on this prior work, we initiate an approach for a systematic study of the similarities and differences between quantum and more general theories in the context of multi-agent paradoxes. This can help understand the assumptions underpinning consistency and usability physical theories by reasoning agents, and shed further light on the much-researched foundational question of why quantum theory is special as compared to other possible theories, through the lens of multi-agent reasoning. 

In this paper, we define multi-agent paradoxes by the following assumptions: all agents know that they apply a common theory and setup to make their predictions (\textbf{common knowledge}); agents can share and use each others' knowledge about measurement outcomes as long as their measurements are compatible or jointly measurable (\textbf{reasoning about compatible agents}); setting labels that specify how a measurement is modelled (as producing classical records vs purely as a transformation on systems of the theory) and which capture the Heisenberg cut can be ignored in all agents’ statements (\textbf{setting-independence}); and, finally, contradictory conclusions about measurement outcomes cannot be reached by any agent (\textbf{non-contradictory outcomes}). These assumptions generalise those of FR's \cite{Frauchiger2018} along with additional implicit assumptions identified in future analyses of FR's arguments \cite{NL2018, Vilasini2022}. Therefore, when these assumptions are applied in quantum theory, as in the FR scenario, they can lead to a physical multi-agent paradox (more specifically, a Wigner's Friend type multi-agent paradox). However, once the assumptions are formulated in a general and theory-independent way (in the context of operational theories), can we identify more general properties implied by the existence of such paradoxes in a theory? 

Contextuality is a fundamental property which is inherent to some theories, including quantum theory. It has been linked to pre- and post-selection paradoxes~\cite{Pusey2015} which are a distinct class of paradoxes arising from studying the structure of states and measurements in the theory (rather than focusing on agents’ reasoning). In the realm of multi-agent paradoxes, quantum theory which admits the FR paradox is contextual and it has been shown that the Spekkens' toy theory which is non-contextual, does not admit an analogue of FR’s paradox~\cite{Hausmann2023}, but a post-quantum contextual GPT does~\cite{VNdR}. Apart from these specific instances, it has generally remained an open question whether any theory that admits such Wigner's Friend type multi-agent paradoxes must be contextual. Moreover questions on which forms of contextuality are necessary, and whether there are structural differences between such paradoxes in quantum vs more general theories were not systematically explored. In this paper, we formally explore this connection and associated questions.

\paragraph{Contribution.}
We prove that \textbf{any theory which admits a Wigner's Friend type multi-agent paradox also admits a logically contextual empirical model}. To achieve this, we identify a set of assumptions for defining a generalised theory-independent version of an FR-type paradox. This includes the definitions of compatibility of agents, 
modelling the concept of Heisenberg cuts (often associated with quantum theory) in this theory-independent context, specifying the rules of reasoning and the language (or set of statements) that agents are allowed to use. As a concrete example, we construct a quantum multi-agent paradox based on the 5-cycle KCBS contextuality scenario, an instance of a genuine contextuality scenario involving sequential measurements on a single system. This shows that Bell non-locality (a special case of contextuality involving multiple subsystems) is not necessary for such paradoxes. We then characterise properties of such paradoxes in quantum and general theories, such as the following.
\begin{itemize}
    \item {\bf Cyclic structure of paradoxes based on reference graphs.} Linking Wigner's Friend paradoxes to classical semantic paradoxes which are analyzed through the concept of reference graphs, we show that all Wigner's Friend multi-agent paradoxes defined here (which involve finitely many measurements) admit the structure of a Liar's cycle, necessarily having a directed cycle in the reference graph. Moreover, we present a conjecture for ruling out Wigner's Friend paradoxes based on infinite but acyclic reference graphs such as the Yablo's paradox, under certain assumptions on the theory and reasoning. 
    \item {\bf Role of post-selection in $n$-cycle scenarios.} Considering $n$-cycle contextuality scenarios with binary outcomes, we show that each extremal vertex of the associated non-disturbance polytope enables the construction of a Wigner's Friend paradox without post-selection (of which the PR-box based paradox of \cite{VNdR} is an instance). Further, any quantum Wigner's Friend paradox based on an $n$-cycle scenario must necessarily involve post-selection (as is the case in FR and the above-mentioned KCBS-based quantum paradox). 
    \item {\bf Paradoxes based on symmetric inferences.} We show that Wigner's Friend type paradoxes where each inference is a two-way implication (e.g., ``Alice upon seeing $a=0$ concludes that Bob must have seen $b=0$'' \emph{and vice versa}, as opposed to only one direction being true) cannot occur in quantum theory, although the PR-box based paradox of \cite{VNdR} is an example of such a paradox in box-world. We discuss the link between this property and the post-selection property of the previous result. 
\end{itemize}

Finally, we discuss how the assumptions going into our definition of multi-agent paradoxes can be modified or relaxed to define other classes of multi-agent paradoxes and to link them to other non-classical resources beyond logical contextuality.

\paragraph{Structure of the paper.}
In Section~\ref{sec:background}, we review the necessary background concerning two key notions relevant to the paper, namely, semantic paradoxes
and logical contextuality. We then define physical multi-agent setups and associated paradoxes in Section~\ref{sec:paradox-theory}, while discussing the assumptions therein. These constitute a class of Wigner's Friend type multi-agent paradoxes in general theories. Section~\ref{sec:paradox} first presents our main theorem which gives a general bridge between Wigner's Friend type multi-agent paradoxes and logical contextuality, and we illustrate an example of such a paradox (analogous to FR's) based on a genuine contextuality scenario, the $5$-cycle or KCBS contextuality scenario. In Section~\ref{sec:character}, we derive properties of Wigner's Friend type multi-agent paradoxes, while linking their structure to those of semantic paradoxes. We identify properties that hold in general theories (\Cref{sec: structure_general}), and ones holding in quantum theory in particular (\Cref{sec: structure_quantum}), while comparing and contrasting them. Finally, we discuss the implications of this work, and possible future directions in Section~\ref{sec:conclusions}.

\paragraph{Note added.} An earlier version of this work, including some of the results—specifically the Wigner's Friend paradox based on the KCBS contextuality scenario—was presented in the PhD thesis of Nuriya Nurgalieva \cite{NuriyaThesis}. While preparing this manuscript based on that thesis, an independent study by Walleghem et al. \cite{Walleghem2024-FR} appeared on arXiv, introducing a similar example to illustrate the non-necessity of Bell non-locality for such paradoxes, derived from quantum-theoretic assumptions. A key distinction in our analysis is that we derive the associated paradox by formulating theory-independent assumptions and explicitly considering the role of Heisenberg-cut (in)dependence, which was identified as a necessary condition for any such quantum paradox in \cite{Vilasini2022}. While our work and \cite{Walleghem2024-FR} overlap in this particular quantum-mechanical example, our primary focus extends beyond this to broader connections between FR-type Wigner's Friend paradoxes in general physical theories, as well as their underlying structural features, which were not explored in previous studies.

\section{Background}
\label{sec:background}
\begin{figure}[t]
\centering
\includegraphics[scale=0.4]{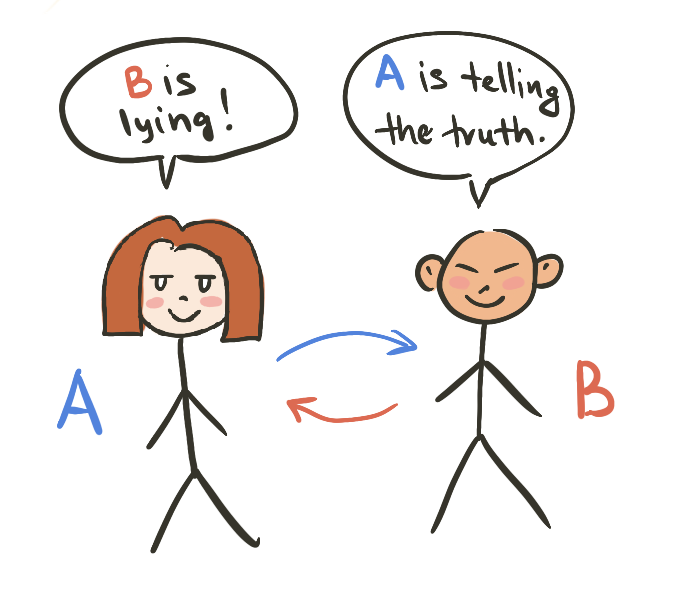}
\caption{{ \bf  A simple example of a multi-agent paradox.} The setting includes two agents, Alice and Bob. Alice says that Bob is lying; Bob says that Alice is telling the truth. Together, their statements are inconsistent with each other, meaning that there exist no global truth value assignment to their statements.}
\label{fig:epistemic-simple}
\end{figure}

In this section, we provide a pedagogical introduction reviewing two important notions in this work -- semantic paradoxes studied in classical logic, and the concept of logical contextuality studied in the quantum theoretic literature. Later in the paper, in Section~\ref{sec:paradox}, we will connect these two notions to Wigner's Friend type paradoxes.\footnote{ We note that connections between such semantic paradoxes and logical contextuality were initially derived in \cite{Abramsky11}, without considering Wigner's Friend type Scenarios.}

\subsection{Liar cycles and reference graphs}
\label{subsec:epistemic-general}

\begin{figure}[t]
\centering
\begin{subfigure}{0.9\textwidth}
\centering
\begin{tikzpicture}[-latex, auto, node distance=1 cm and 2cm, on grid, semithick, state/.style = {circle ,top color=white, bottom color =processblue!20, draw, processblue, text=blue, minimum width=1 cm}]
\node[state] (A) {$S_1$};
\node[state] (B) [right=of A] {$S_2$};
\node[] (N) [right=of B] {$\dots$};
\node[state] (C) [right=of N] {$S_{N-1}$};
\node[state] (D) [right=of C] {$S_N$};
\path (A) edge [right] (B);
\path (B) edge [right] (N);
\path (N) edge [right] (C);
\path (C) edge [right] (D);
\path (D) edge [bend left] (A);
\end{tikzpicture}
\caption{{\bf $G_\text{Liar}$:} graph of the reference relation of the Liar cycle.  Each sentence $S_k$ refers to the the one immediately after,  $S_{k+1}$, cyclically.}
\label{fig:liar-cycle}
\end{subfigure} 
\\
\begin{subfigure}{0.9\textwidth}
\centering
\vspace{1cm}
\begin{tikzpicture}[-latex, auto, node distance=1 cm and 2cm, on grid, semithick, state/.style = {circle ,top color=white, bottom color =processblue!20, draw, processblue, text=blue, minimum width=1 cm}]
\node[state] (A) {$S_1$};
\node[state] (B) [right=of A] {$S_2$};
\node[] (N) [right=of B] {$\dots$};
\node[state] (C) [right = of N] {$S_{N-1}$};
\node[] (D) [right = of C] {$\dots$};
\path (A) edge [right] (B);
\path (A) edge [bend right] (D);
\path (A) edge [bend right] (N);
\path (A) edge [bend right] (C);
\path (B) edge [right] (N);
\path (B) edge [bend right = -25] (C);
\path (B) edge [bend right = -40] (D);
\path (N) edge [right] (C);
\path (C) edge [right] (D);
\path (C) edge [bend right = -25] (D);
\end{tikzpicture}
\caption{{\bf $G_\text{Yablo}$: } graph  of the reference relation of Yablo's paradox. Each sentence $S_k$ refers to all (countably infinite) following sentences $\{S_n\}_{n >k}$.}
\label{fig:yablo}
\end{subfigure}
\caption{{\bf Reference relation graphs for paradoxical chains of statements, finite and infinite.}}
\end{figure}
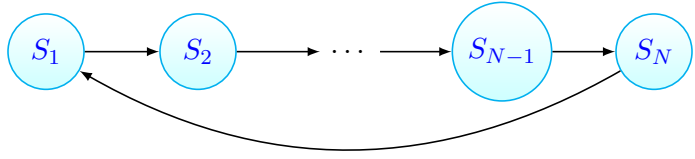
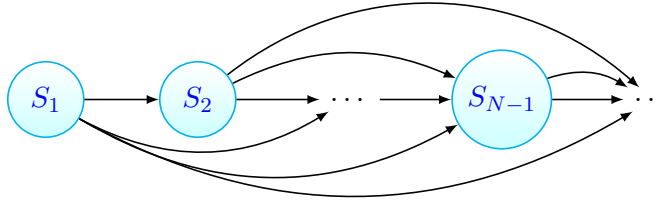

Suppose we have a set of statements that refer to the truth values of each other. Semantic paradoxes entail situations where such a set of statements are mutually inconsistent, i.e. the conditions they specify on truth values cannot be simultaneously satisfied~\cite{StanfordParadoxes}.

\paragraph{Liar's sentence and Liar cycle.}

The simplest example of a semantic paradox is the Liar's sentence $S$,
\begin{center}
    $S$: $S$ is false.
\end{center}

We can extend this construction to a chain of statements of arbitrary finite length, so-called Liar cycle, which also leads to a logical paradox. An example of such a chain consisting of $N$ statements is given below:
\begin{center}
    $S_1$: $S_2$ is true. \\
    $S_2$: $S_3$ is true. \\
    $\dots$ \\
    $S_{N-1}$: $S_N$ is true. \\
    $S_N$: $S_1$ is false.
\end{center}

Formally, this chain of statements can be expressed in terms of Boolean equations~\cite{Abramsky15}, where the paradox is captured by the inconsistency of the system of equations. If $s_k \in \{0,1\}$ is the truth value of statement $S_k$ (a Boolean variable), then we can express the setting as 
\begin{equation}
\label{eq: liar_bool}
    s_1=s_2,\quad s_2=s_3,\quad...,\quad s_{n-1}=s_n, \quad s_N=\neg s_1.
\end{equation}
By instantiating with the two options $s_1 = 0$ (false) and $s_1=1$ (true), we obtain two chains,
\begin{align}
\begin{split}
\label{eq: liar}
    &s_1 = 1 \Rightarrow s_2 = 1 \Rightarrow \dots \Rightarrow s_N = 1 \Rightarrow s_1 = 0, \\
    &s_1 = 0 \Rightarrow s_2 = 0 \Rightarrow \dots \Rightarrow s_N = 0 \Rightarrow s_1 = 1,    
\end{split}
\end{align}
which give us the contradiction. Notice that each chain in \Cref{eq: liar} is independently paradoxical i.e., a paradox ensures even if one such chain of statements is found. For this reason, we will refer to each such chain as a half-Liar's cycle and call it a full Liar's cycle when both the chains hold.

\paragraph{Structure and properties: finite chains.}

The structure of such statement chains can be encapsulated in so-called \emph{reference relation graphs (rfgs)}, where vertices represent the statements, and edges correspond to the references made in the statements~\cite{Rabern2013,Beringer2017}. 
The relationship between two statements of the theory $a$ and $b$ is captured in a \textit{reference relation} $(a,b)$ if $a$ depends on (refers to) $b$~\footnote{A formal way to define this in propositional language is: $a$ refers to $b$ if and only if there is a name $\alpha$ of $b$ such
that `$\alpha$ is true’ is a subformula of $a$. A straightforward way to generalize
this is to regard a quantified sentence as referring to all the sentences that
its instances are referring to. For more precise considerations and definitions, please refer to~\cite{Beringer2017}.}. We denote the set of such reference relations $R$. Consequently, we use these relations to construct a \textit{reference relation graph}.

\begin{definition}[Reference relation graph]
\label{def:rfg}
Given a set of propositions $\Sigma$, and a reference relation $R$ on it, the reference relation graph is defined as a directed graph $\mathcal{G} = (\Sigma,E)$ with $E=\{(a,b), (a,b)\in R; a,b\in\Sigma\}$.
\end{definition}

To identify if a particular rfg can be source of a paradox, we need to check if it admits a consistent assignment of truth values to all of its nodes. It is then possible to identify certain structural properties that all finite rfgs of a setting share as necessary and sufficient condition for its paradoxicality~\cite{Jongeling2002, Rabern2013, Beringer2017}, namely, it must contain a directed cycle. This means that one can find a semantic paradox in a finite setting\footnote{That is, involving a finite number of statements.} if and only if it can be reduced to a structure similar to the Liars cycle (the rfg of which can be seen on Figure~\ref{fig:liar-cycle}).

The intuition behind the proof is simple. Suppose that we are given a statement $S_1$ which refers to other statements $\{S_j\}_j$ (i.e., they have a dependence in truth values), which in turn refer to other sets of statements and so on. 
If the reference graph capturing these reference relations between the statements is finite and contains no self-referential loops (directed cycles), then it is possible, starting at the first statement $S_1$, to move down along the graph, in an unambiguous direction, to the ends of the branches (nodes which have no outgoing arrows). This allows to establish the truth values of the referred statements from the referencing statement in each step, in a clear order, such that subsequent truth assignments do not contradict previous ones (as there is no other directed path going from them towards earlier statements), and hence no paradox can arise.
However, the presence of a loop can result in a inconsistency as it introduces additional directed paths that defy an unambiguous ordering, and can contradict previous truth value assignments. 

\paragraph{Structure and properties: infinite chains.}
The structure of infinite chains forming infinite reference relation graphs, however, does not admit a clear characterization~\cite{Rabern2013}; it is also not agreed upon whether certain chains of this form can be categorized as a self-referential paradox~\cite{StanfordParadoxes}. The common example of such a setting is Yablo's paradox~\cite{Yablo1993}, which makes use of an infinite sequence of statements forming an infinite but acyclic reference graph (Figure~\ref{fig:yablo}),
\begin{center}
    $S_1$: $S_n$ is false $\forall n>1$ . \\
    $S_2$: $S_n$ is false $\forall n>2$. \\
    $\dots$ \\
    $S_k$: $S_n$ is false $\forall n>k$. \\
    $\dots$
\end{center}
Suppose that $S_1$ is true; then $S_2$ is inevitably false, and $\exists j>2$ such that $S_j$ is true. However, according to $S_1$, $S_j$ has to be false; we arrive to a contradiction. On the other hand, if we assume $S_1$ is false, then we are bound to conclude that $\exists j>1$ such that $S_j$ is true, and then applying the above reasoning to the infinite sequence starting from $S_j$, we would again obtain a contradiction. 
Denoting true and false using the binary values 0 and 1, we can also represent the Yablo chain as follows for $i,j\in\{1,2,3,...\}$.

\begin{align}
\label{eq: Yablo_statements}
\begin{split}
    &s_i=1 \implies s_j=0 \quad \forall j>i,\\
&s_i=0 \implies \exists j>i,\quad s_j=1.
\end{split}
\end{align}

Notice that the Yablo sequence does not yield any contradictions in the finite case (i.e., when $i,j\in \{1,...,n\}$ for finite $n$), as it can always be consistently resolved by setting all statements to be false (i.e., 0), except the last statement to be true (i.e., 1). 
While Yablo-type statement chains lead to a semantic paradox, it is not guaranteed that every infinite rfg leading to a paradox can be reduced to a Yablo-type sequence. The conjecture in~\cite{Rabern2013} however suggests that this is the case. 

\subsection{Logical contextuality}
\label{subsec:contextuality}

In this subsection, we discuss the notion of logical contextuality, and give an intuition of what it entails for an arbitrary measurement scenario in a theory. In our definitions, we follow the work of Abramsky et al~\cite{Abramsky11,Abramsky15}.

Generally, a contextual model can be understood as admitting a locally consistent, but globally inconsistent description. One intuitive example of such a (classical) construction are the Penrose stairs~\cite{Penrose1958}. Every flight of stairs is perfectly reasonable on its own, but from the global perspective of all flights combined, such configuration is impossible to build.

In the frameworks of quantum mechanics and similar physical theories (where systems are measured to obtain an outcome), we can characterize the initial setting by specifying: a set of variables one can measure on the system; subsets of variables (\textit{contexts}) which are measured together; and a set of outcomes that can be obtained after the measurement is carried out.

Note that for specifying the \textit{contexts}, we must specify when a subset of the variables can be measured together, we require a general notion of when a set of variables are \textit{compatible}, or \textit{jointly measurable}. This is formally defined below, for any theory $\mathbb{T}$ that has a well-defined notion of systems, measurement outcomes and probabilities, by extending the understanding of the notion in quantum theory (see, for example~\cite{Kunjwal2014}). 

\begin{definition}[Joint measurability or compatibility of measurements]
\label{def:meas_compatibility}

A set $\mathcal{M}_{1},...,\mathcal{M}_{k}$ of measurements with outcome sets $\{O_1, \dots, O_k\}$ acting on a set of systems $\mathcal{S}$ in a theory $\mathbb{T}$ are said the be compatible whenever the following holds. For every initial state $\rho$ of $\mathcal{S}$, there exists a measurement $\mathcal M_\text{joint}$ on $\mathcal{S}$ with outcome set $O_1 \times \dots \times O_k$ such that the probability rule of $\mathbb{T}$ yields a distribution $P(a_1,...,a_k|\rho)$ for the outcomes of $\mathcal M_\text{joint}$ when applied on $\rho$, such that the marginals of this distribution $P(a_i|\rho)=\sum_{j=1,j\neq i}^kP(a_1,...,a_k|\rho)$ equal the probability of the outcome $a_i$ when only the i$^{th}$ measurement $\mathcal{M}_i$ is performed on $\rho$. We can equivalently say that the measurements are jointly measurable, the statistics of each measurement in the set can be recovered as the marginal of a single joint distribution. 
    
\end{definition}

For the purposes of this paper, we emphasize the role of the initial state $\rho$ of the set of the systems on which the measurements are performed, and explicitly include it in definitions of measurement scenario and, consequently, empirical model. Given a set of variables, sets of compatible measurements and a set of outcomes, one can define a measurement scenario:

\begin{definition}[Measurement scenario]
\label{def: meas_scenario}
A measurement scenario is a quadruple $(X, \mathcal{C}, O, \rho)$, where
\begin{itemize}
    \item $X$ is a set of variables;
    \item $\mathcal{C}$ is a family of compatible subsets of $X$ (\textit{contexts});
    \item $O$ is a set of outcomes, or values for the variables, which can be refined to $O_x, x\in X$;
    \item $\rho$ is the initial state of the set of the systems on which the measurements are performed.
\end{itemize}
\end{definition}

Given a measurement scenario, we can start measuring all variables in a single context $C \in\mathcal{C}$, and obtain outcomes. This can be described as a mapping
\begin{gather*}
    s: C \to O^C,
\end{gather*}
where $O^C$ is the set of all possible outcomes of measurements of variables in $C$.
One can also restrict the mapping to a subcontext $C' \subseteq C$: $s|_{C'} : C \to O^{C'}$.
The function $s$ is only a local section, as it is defined only in the context $C$, and not in the whole set of variables $X$.

Bell non-locality scenarios involving non-signalling parties also fit into this framework. Consider a disjoint family of sets $\{X_i\}_{i\in I}$, where $I$ is the set of the parts of the system which are space-like separated, and $X_i$ is the set of measurements that are carried out in the part $i\in I$. Then the set of variables consists of all local variable sets $X=\cup_{i\in I} X_i$, and we define the set of contexts $\mathcal{C}$ to be those subsets of $X$ containing exactly one measurement from each part. This way, the measurements in different parts of the global system are compatible as they act on distinct subsystems, and multiple measurements in the same part are not allowed within the same context (as they may be incompatible). In this approach, Bell non-locality is a special case of contextuality, and all conclusions below can be applied to Bell non-locality scenarios. Here, we focus only on logical forms of contextuality, which subsume logical forms of Bell non-locality such as those of Hardy's scenario \cite{Hardy1993}.

Next we define empirical models for measurement scenarios by considering a probability distribution $e_C$ associated to variables in each context $C$.
\begin{definition}[Empirical model]
\label{def: emp_model}
An empirical model $e$ for the measurement scenario $(X, \mathcal{C}, O, \rho)$ is a set of local empirical probabilistic descriptions $e_C$ of all contexts $C$
\begin{gather*}
    e = \{e_C \in Prob(O^C) | C\in\mathcal{M}\}, 
\end{gather*}
where $Prob(O^C)$ is the set of probability distributions on the set $O^C$ of all possible outcomes of measurements of variables in $C$. 
\end{definition}

Note that the measurement scenario includes a particular initial state of the systems $\rho$, meaning that the empirical model is also given with respect to this particular $\rho$.

\begin{figure}[t]
    \centering
    \includegraphics[scale=0.3]{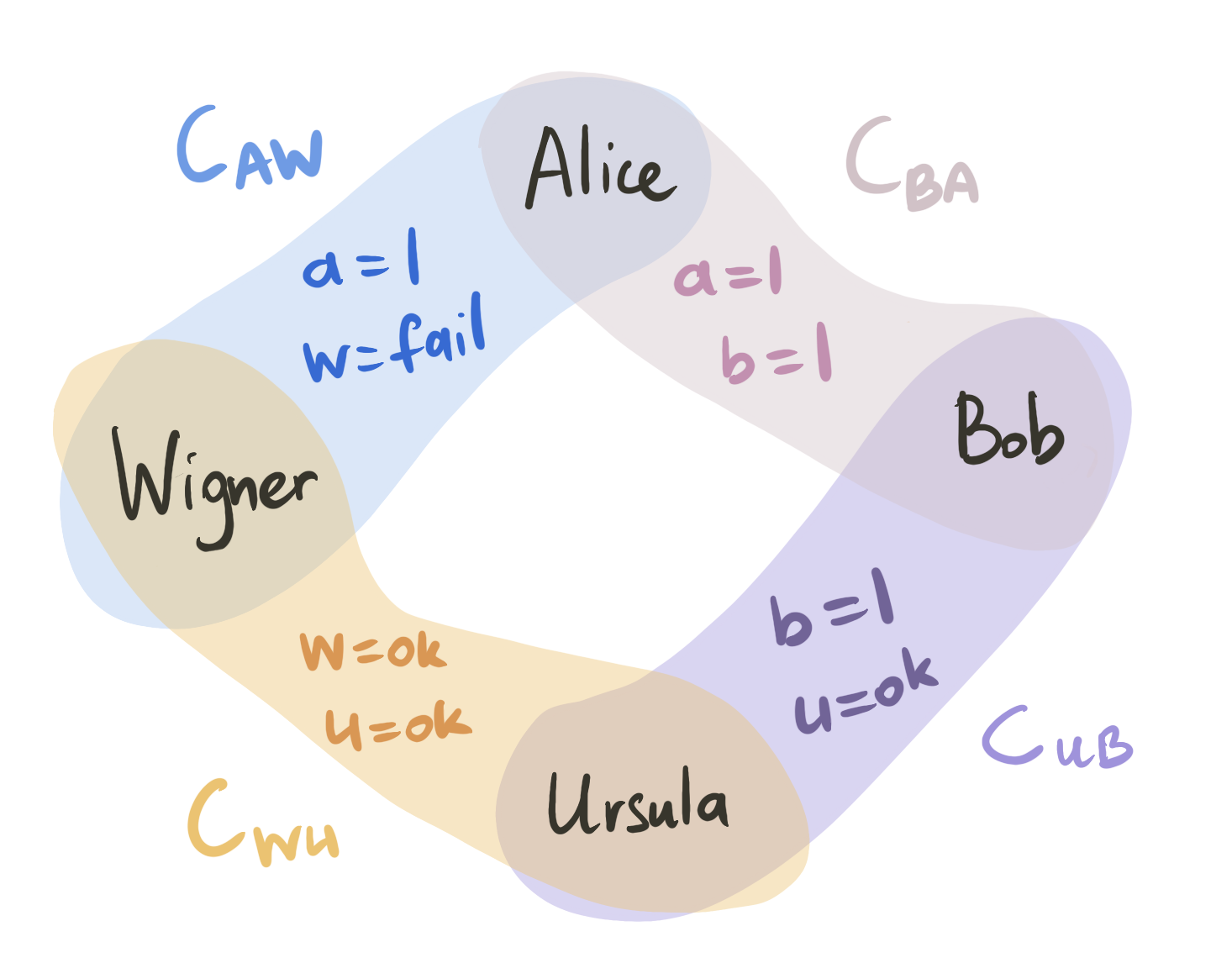}
    \caption{{ \bf Frauchiger-Renner setup~\cite{Frauchiger2018} setting and contextuality.} In Frauchiger-Renner setup, as well as in Hardy's paradox, there are four contexts $C_{UB}, C_{BA}, C_{AW}, C_{WU}$ in which the outcome assignments are made. However, there exists an outcome assignment (illustrated above) which does not belong to any compatible family of assignments, as it does not agree on the intersection between contexts $C_{AW}$ and $C_{WU}$, where the value assignment on the latter is the result of post-selection. For the specific details of the setup, please refer to Figure~\ref{fig:fr-entanglement}.}
    \label{fig:FR_contextuality}
\end{figure} 

Contextuality in general is characterised by locally consistent by globally inconsistent \textit{probability assignments} for an empirical model -- for example, a model which is non-contextual always admits a global probability assignment (a joint probability distribution on all variables in $X$) which agrees with all local probabilistic descriptions (the marginals over each context $C$). In this paper, we focus on a subset of contextuality scenarios where the contextuality is witnessed by locally consistent by globally inconsistent \textit{assignments of outcome values} for an empirical model, and are usually referred to as \textit{logically contextual}. The notion of logical contextuality is typically formulated within the sheaf-theoretic framework of \cite{Abramsky11, Abramsky15}. Here we briefly review the definitions as set out in these works, but without focusing on the underlying sheaf-theoretic structure, as this would not be directly relevant for the results of the current paper. 

In the above, $O^C$ was the set of outcomes that can be assigned to measurements in a context $C$. However, the probability distribution $e_C$ may associate zero probability to some of these value assignments, such that the set of all possible outcome assignments $s: C \to O^C$  for a measurement context $C$ is actually a subset $\mathcal{S}(C)$ of $O^C$. Explicitly, this set is defined as follows
\begin{gather*}
    \mathcal{S}(C):=\{s\in O^C: \forall C'\in \mathcal{M},\quad s|_{C\cap C'}\in \mathbf{supp}(e|_{C\cap C'})\},
\end{gather*}
where $\mathbf{supp}(e|_{C\cap C'})$ is the support of the marginal of $e_C'$ at $C\cap C'$, i.e. each possible assignment of outcomes for a context $C$ must be compatible with the support of the distribution $e_{C'}$ for all overlapping contexts $C'$. Further, it is assumed that every possible joint measurement of a context $C$ yields a possible joint outcome $s_C$ i.e., $\mathcal{S}(C)\neq \emptyset$, $\forall C \in \mathcal{C}$. 

We can now formally define what it means to have a compatible family of outcome assignments.

\begin{definition}[Compatible family of outcome assignments \cite{Abramsky15}] A compatible family of outcome assignments for a measurement scenario $(X,\mathcal{C},O,\rho)$ is a family $\{s_C\}_{C\in \mathcal{M}}$ with $s_C\in \mathcal{S}(C)$ such that for all $C, C'\in \mathcal{C}$,
\begin{gather*}
    s_C|_{C\cap C'} = s_{C'}|_{C\cap C'}.
\end{gather*}
Such a compatible family, if it exists, induces a unique global section $\mathcal{S}(X)$. 
\end{definition}

Having a contextual model means that some outcome assignment which is allowed in the model is not a member of such a compatible family. Here, one can distinguish between two types of contextuality: \textit{logical}, where we simply point out a particular incompatible outcome assignment, and \textit{strong}, where all outcome assignments are incompatible. We formally define this pair below.
 
\begin{definition}[Logical contextuality]
\label{def: logical_contextuality}
Consider a measurement scenario $(X,\mathcal{C},O,\rho)$ associated with a set $\mathcal{S}$ of possible outcome assignments as characterised above. This constitutes an empirical model. Then for any $C\in \mathcal{C}$, the empirical model is said to be \emph{logically contextual} if there exists an outcome assignment $s\in \mathcal{S}(C)$ that does not belong to any compatible family of outcome assignments. 
\end{definition}
 
\begin{definition}[Strong contextuality]
Consider a measurement scenario $(X,\mathcal{C},O,\rho)$ associated with a set $\mathcal{S}$ of possible outcome assignments as characterised above. Then for any $C\in \mathcal{C}$, the corresponding empirical model is said to be \emph{strongly contextual} if it is logically contextual at all $s\in \mathcal{S}(C)$. 
\end{definition}
 
Then, every strongly contextual empirical model is logically contextual and every logically contextual empirical model is contextual. Hence, none of these models admits a joint probability distribution over $X$. However, they may admit a global section $\mathcal{S}(X)$ in terms of the outcome assignments as we now explain. Logical contextuality by itself does not rule out the existence of a global section $\mathcal{S}(X)$ i.e., a possible assignment of outcome values to all the measurements. This is because a scenario may be logically contextual with respect to a particular assignment $s\in \mathcal{S}(C)$ for a context $C$, but this only rules out the existence of a compatible family of assignments that contain $s$, and we can nevertheless have a compatible family of assignments that contains a different assignment $s'\in \mathcal{S}(C)$ for measurements in $C$. An example is the measurement scenario for Hardy's paradox \cite{Hardy1993}/the original FR paradox \cite{Frauchiger2018}, where the paradox only arises when a particular set of outcomes are observed for the measurement context where both agents measure in the ``X'' basis i.e., when $u=w=ok$ is obtained in the FR case, which is why post-selection is required for the original FR paradox (see Figure~\ref{fig:FR_contextuality}, and the results of \Cref{sec: structure_quantum} regarding post-selection). Strong contextuality on the other hand, implies the non-existence of a global section i.e., $\mathcal{S}(X)=\emptyset$. An example is the PR box \cite{Popescu1994}/our PR box paradox \cite{VNdR} where the paradox ensures irrespective of the outcomes obtained in any of the measurement contexts and consequently no post-selection is required for obtaining the paradox.
We refer the reader to~\cite{Abramsky15} for an analysis of Hardy's paradox, the PR box examples and their logical/strong contextuality. Our results of \Cref{sec: paradox_implies_contextual} imply in particular, the logical contextuality of the FR \cite{Frauchiger2018} and PR-box based \cite{VNdR} Wigner's Friend paradoxes.

\section{Multi-agent setups: predictions, reasoning and paradox in general theories }
\label{sec:paradox-theory}

Having reviewed the notion of paradox in its logical and abstract sense, let us now consider what it means to encounter a paradox in a physical theory. While logical paradoxes can occur in classical scenarios involving lying agents, the paradoxes we consider in physical theories involve reliable agents who all apply the same physical theory in their reasoning. Thus, the paradox arises due to fundamental properties of the theory. Nevertheless, we will see that the paradoxical structures obtained are similar to those of classical liar paradoxes. Our analysis will apply to theories with a well-defined notion of systems, states of systems, transformations on systems (including the identity transformation, ``doing nothing'') and measurement outcomes along with a rule for computing the probabilities of measurement outcomes\footnote{Which are associated with positive and normalised probability distributions respecting the standard rules of conditional probabilities.} in a given scenario. Therefore these definitions and considerations can be instantiated in a number of existing frameworks for describing such general physical theories, such as formalisms for generalised, operational or categorical probabilistic theories \cite{Hardy01, Chiribella2010,dAriano2017,coecke2016categoricalquantummechanicsi, Coecke_Kissinger_2017,Ormrod2023}.

We formulate a paradox in a physical theory as a contradiction between a set of assumptions that refer to how agents in the theory conduct their reasoning about a given protocol or scenario. This approach is analogous to the style of the no-go theorem by Frauchiger and Renner~\cite{Frauchiger2018} for reasoning agents in quantum theory. However, following a recent work \cite{Vilasini2022}, we will make explicit certain assumptions (relating to the modeling of measurements and choices of Heisenberg cuts) that are implicit in FR's result. Although \cite{Vilasini2022} demonstrates that making these assumptions explicit and rigorous fully resolves FR-type apparent paradoxes and ensures the consistency of agents' use of quantum theory, our motivation to analyze such paradoxes stems from their independent capacity to shed light on the fundamental properties of the theory, irrespective of whether they are resolved in a physical theory. Thus, for brevity, we will continue to refer to these as paradoxes, even though they are only apparently so and can be resolved without regarding the theory as fundamentally pathological.

We therefore begin by explaining the role of measurements and their modeling in multi-agent scenarios before defining the general setup.

\subsection{Modelling measurements: generalising the Heisenberg cut}
\label{sec:meas_model}
\begin{figure}[t]
    \centering
    \includegraphics[width=0.6\linewidth]{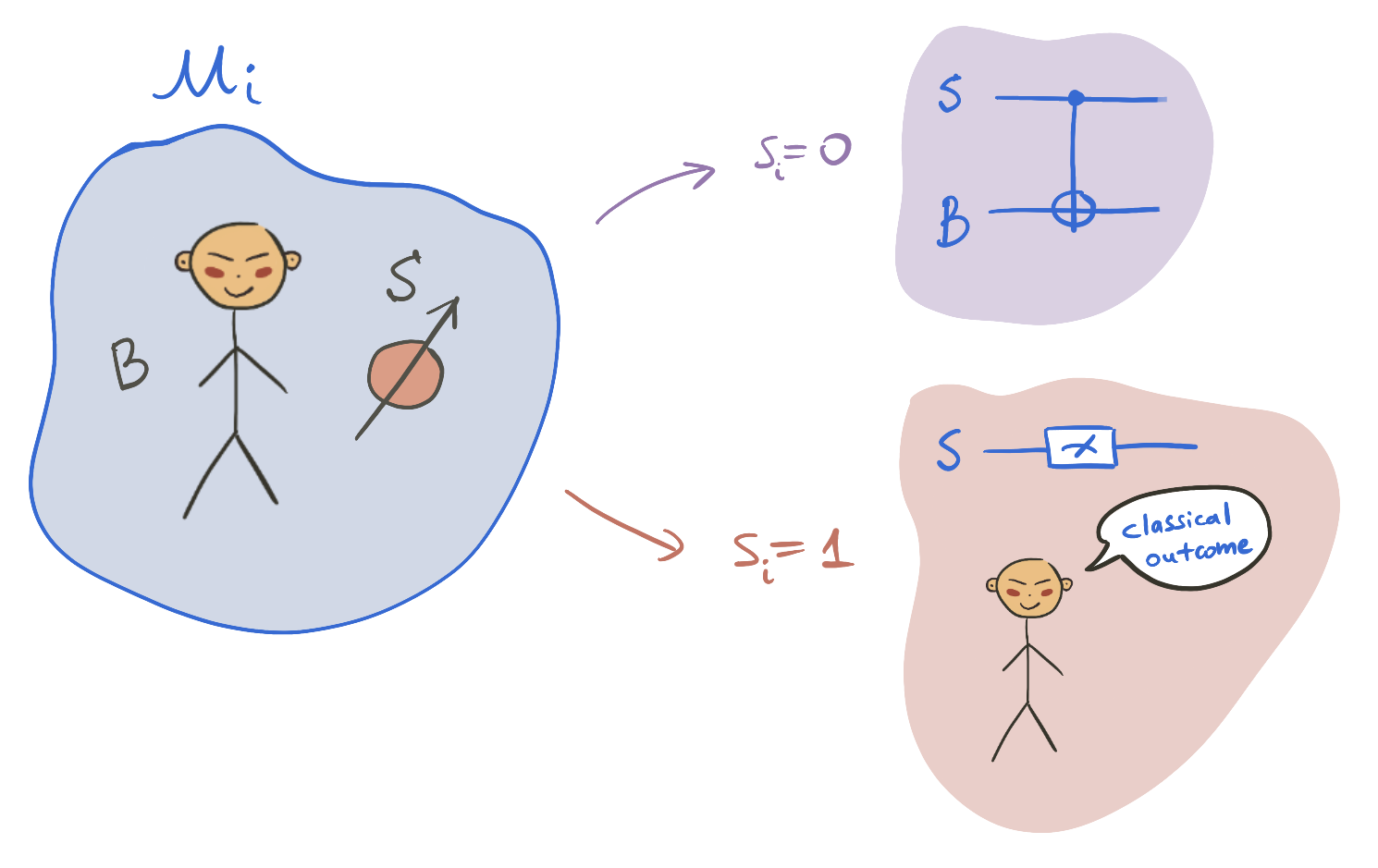}
    \caption{{\bf Two views on measurement in quantum theory.} Suppose that Bob is measuring a quantum system $S$ (measurement $\mathcal M_i$). There are two ways this measurement can be modelled in quantum theory, depending on whether Bob is considered to be a quantum system (under the Heisenberg cut) or not. This choice can be seen as a choice of a \textit{setting variable} $s_i$. If $s_i=0$ then the evolution is modelled as a unitary one; if $s_i=1$ then it is assumed that classical outcomes are observed (identified by the measurement projectors). This formalisation of measurement models/Heisenberg cuts in terms of settings was first introduced for quantum theory in \cite{Vilasini2022}. Here we consider an extension of the concept to general physical theories, and relate it to contextuality of the theory under other assumptions.}
    \label{fig:measurement-views}
\end{figure}

Consider a physical theory, a system $S$ in the theory, and a state $\rho_S$
of the system. The transformations allowed by the theory map between possible states of the system $S$. A measurement, however, is a special kind of operation that translates between states of the theory and classical outcomes. This connection to classical data is necessary to discuss the observable predictions of the theory. Alternatively, one may view the measurement as a physical transformation within the theory, including the apparatus and agent performing it as physical systems. See also previous works \cite{VNdR, Ormrod2023} for formal discussions on these different views of a measurement in general physical theories.

In quantum theory, these two models of a measurement are informed by the projection and unitarity postulates. When considering an agent who observes a classical measurement outcome, in usual quantum theory, we apply the projection postulate to describe their post-measurement state and use the same projectors in computing the outcome probabilities via the Born rule. On the other hand, when regarding an agent's lab as a closed quantum system (e.g., the view of an outside agent), the agents' measurement is modeled as a unitary evolution of the closed system that is their lab. This is a purely quantum description not associated with any classical records.
Thus, a physical theory can associate two types of descriptions to a measurement $\mathcal{M}_i$, depending on whether it is regarded as producing classical records (denoted as $s_i=1$) or as the evolution of a closed system of the theory (denoted as $s_i=0$). Following \cite{Vilasini2022}, we refer to these $s_i$ as ``settings''. This distinction can be understood as representing the Heisenberg cut.\footnote{While we have described the setting $s_i=1$ in the quantum case in terms of the projection postulate, this is not always necessary. The measurement model associated with classical records ($s_i=1$) can be understood as requiring knowledge of the projectors (or basis) identifying the classical outcome (needed to compute their Born rule probabilities) even when applying a unitary description, while the $s_i=0$ model in the quantum case does not invoke the knowledge of these projectors/basis. Thus it is also possible to resolve the quantum FR paradox without assuming the projection postulate, as discussed in \cite{Vilasini2022}.} In quantum theory, such a cut specifies which aspects of a given experiment are modeled classically versus quantum mechanically. Similarly, we can distinguish in a general physical theory the aspects modeled classically versus using that theory. In certain theories, this distinction may be irrelevant or the two cases may coincide, such as in fully classical theories.

In quantum theory, the $s_i=0$ setting corresponds to a unitary description of the measurement $\mathcal{M}_i$. In more general theories, the concept of an information-preserving memory update was introduced in our previous work \cite{VNdR} to capture measurements as evolutions of the theory. In a multi-agent context, this can be seen as the physical evolution associated with a measurement from the perspective of an outside agent (as opposed to that of an inside agent who performs the measurement and observes classical outcomes). While the choice of such settings does not matter in typical quantum experiments, Wigner's original thought experiment illustrates that this choice does have empirical consequences in quantum scenarios where the outside agent can perform arbitrary quantum operations on the lab of the inside agent.
It is shown in \cite{Vilasini2022} that explicitly accounting for the implicit choices of such Heisenberg cuts in the theory's predictions and agents' statements fully resolves multi-agent paradoxes in quantum theory. Therefore, to recover any FR-like apparent paradox in quantum theory, it is necessary to impose the independence of the predictions and statements from the choice of such settings (or Heisenberg cuts) in scenarios where this choice matters.

We will therefore model setting-independence as an explicit condition in our general formulation of multi-agent paradoxes in physical theories, and take these settings into account when defining the predictions of our model. However, our results do not rely on the exact description of the setting $s_i=1$ and $s_i=0$ models of a measurement, it does not require $s_i=0$ to correspond to an information-preserving memory update according to \cite{VNdR} or a reversible unitary evolution as in quantum theory, nor does it require $s_i=1$ in the quantum case to correspond to the projection postulate, these are only examples. These labels capture the distinction between a measurement regarded as producing classical records vs an evolution of systems of the theory, without specifying their exact description (in the interest of generality of the results). In certain specific theories, these two evolutions can coincide.

\subsection{Multi-agent setup}

We first define the physical setup that we consider. We assume that the theory under consideration has a well-defined notion of time according to which operations can be ordered. We first define a multi-agent setup in any physical theory that admits a well-defined notion of systems, states, operations and measurements, and provides a rule for computing empirical probabilities for classical outcomes of measurements. This is based on a definition of Extended Wigner's Friend Scenarios originally proposed in \cite{Vilasini2022} for the quantum case. This definition encompasses finite multi-agent protocols where agents' memories (in which they store the measurement outcome) or equivalently agents' labs are modelled as physical systems of the theory, and where one agent can have full control (perform arbitrary operations of the theory) over the labs of other agents in the scenario.

\begin{definition}[Multi-agent setup]
\label{def: MAsetup}
A multi-agent setup $\mathcal{M}\mathcal{A}$ in a theory $\mathbb{T}$ consists of 
\begin{itemize}

    \item A finite set of agents $\mathcal{A}:=\{A_i\}_{i=1}^N$,
    \item A finite set of systems $\mathcal{S}:=\{S_j\}_{j=1}^m$ of the theory,
    \item For each agent $A_i$, an additional system $L_i$ of the theory that models the memory/rest of the lab of the agent with $\mathcal{L}:=\{L_i\}_{i=1}^N$. We will simply refer to this as the memory and consider that each memory $L_i$ is initialised to some fixed state $\rho^0_{L_i}$.
    \item A set of measurements $\{\mathcal{M}_i\}_{i=1}^N$, where each measurement $\mathcal{M}_i$ is conducted by an agent $A_i$ on some subset $\mathcal{S}_i\subset \mathcal{S}\cup \mathcal{L} \backslash \{L_i\}$ of the all the systems and memories of the remaining agents, at some time $t_i$. The result of this measurement is stored in the memory $L_i$ of the agent $A_i$ and we take $t_i<t_j$ for $i<j$.
 \item For each agent $A_i$, a finite set $\mathtt{O}_i:=\{0,1,...,d_{\mathtt{S}_i}-1\}$ in which their outcome $a_i$ takes values.
    \item An initial joint state $\rho_{S_1,...,S_m}$ of all the systems in $\mathcal{S}$ prepared at time $t_0<t_i$.
\end{itemize} 
\end{definition}

We note that the original definition of \cite{Vilasini2022} additionally includes the possibility for operations to be performed between measurements, we ignore this here as the operation can be absorbed into the definition of the measurements. This way, the focus is entirely on a set of measurements and their properties such as joint measurability, which will be relevant for our main results.

{\bf Predictions of the theory.} The above definition does not refer to the settings ($s_i\in\{0,1\}$) used to describe each measurement, leaving an ambiguity in how the measurements are to be modelled. This becomes important to specify when considering the predictions or probabilities of measurement outcomes in a given multi-agent set-up. Agents will use such probabilities in their reasoning about each others' knowledge and to arrive at logical statements.

We assume that the physical theory gives a probability rule to compute probabilities of measurement outcomes given a description of the states, transformations (or channels) and measurements in the setup along with a specification of how each measurement is modelled. In quantum theory, the Born rule does this job. In the following, for any set $a_{j_1},...,a_{j_p}$ of outcomes in a setup, we will conveniently denote this as a vector $\vec{a}_j:=(a_{j_1},...,a_{j_p})$, and assigning values $v_i$ to each $a_i$ in the set, is then denoted as $\vec{a}_j=\vec{v}_j$. In a mild but harmless abuse of notation, we will use $a_i\in \vec{a}_j$ to denote that the outcome $a_i$ belongs to the set of outcomes that defines the vector $\vec{a}_j$, and $a_i\in\vec{a}_j\cup\vec{a}_l$ to denote $a_i$ that belongs in the union of the sets associated with two such vectors.

\begin{definition}[Default predictions of a multi-agent setup]
\label{def: default_predictions}
 A default prediction of a multi-agent setup $\mathcal{MA}$ involving $N$ agents is a conditional probability $P(\vec{a}_j|\vec{a}_l,\vec{s})$ where $\vec{a}_j$ and $\vec{a}_l$ denote outcomes of disjoint subsets of measurements in  $\mathcal{MA}$, where the former set is non-empty and the latter possibly empty. Further, $\vec{s}=(s_1,...,s_N)$ is a vector encoding the values of the settings of all $N$ measurements in the scenario, where the settings take values as follows: for all $a_i\in \vec{a}_j \cup \vec{a}_l$, $s_i=1$ and $s_i=0$ otherwise. The prediction is computed through the probability rule of the theory for the given setup and choice of settings. 
\end{definition}

The above concept of a default prediction is based on a definition of a setting-conditioned prediction introduced in \cite{Vilasini2022}, where the default value assignment corresponds to the implicitly used model of measurements in existing Wigner's Friend arguments and no-go theorems (including FR). Notice that this default assignment is relative to the given prediction (or probability), and not a global assignment of setting choices for all measurements. Here, all the classical outcomes ($\vec{a}_j \cup \vec{a}_l$) appearing in the given probability expression $P(\vec{a}_j|\vec{a}_l,\vec{s})$ are assigned setting 1, which means that the associated measurements are regarded as producing classical records (indeed the very distribution under consideration refers to these records). On the other hand, all outcomes which do not appear in the probability, the corresponding measurement is modelled purely as an evolution of the given physical theory (in the quantum case, as a unitary evolution). The description of this evolution, is to be specified by the theory, the details of which will not affect our results.  See also \cref{fig:fr-entanglement} for details of the settings and default predictions in the quantum FR scenario.

In \cite{Vilasini2022}, other setting choices from the default assignment were also considered in order to compare different interpretations of quantum theory. In this work, only the default setting assignment described above will be relevant, we will therefore refer to the default predictions simply as predictions of a multi-agent setup.

\paragraph{Statements of a physical theory.} 
Based on the measurement outcomes agents observe and using the correlations encoded in the predictions they compute, they can make various statements. For example, ``I observed the outcome $a=1$.'', or ``Given that I observed the outcome $a=0$ and given my knowledge of the protocol, I am certain that Bob will observe the outcome $b=1$''. Notice that the first statement is only possible if the theory assigns a non-zero probability to the outcome $a=1$ and the second entails that the theory assigns a conditional probability of 1 for $b=1$ given $a=0$ and other information about the protocol/scenario.

Building on this intuition, we now specify the structure of statements we will consider in a concrete multi-agent setup in a physical theory, while determining whether the theory can or cannot lead to paradoxes. This forms the language of the theory, for the given setup. 

\begin{figure}
\centering
    \includegraphics[scale=0.3]{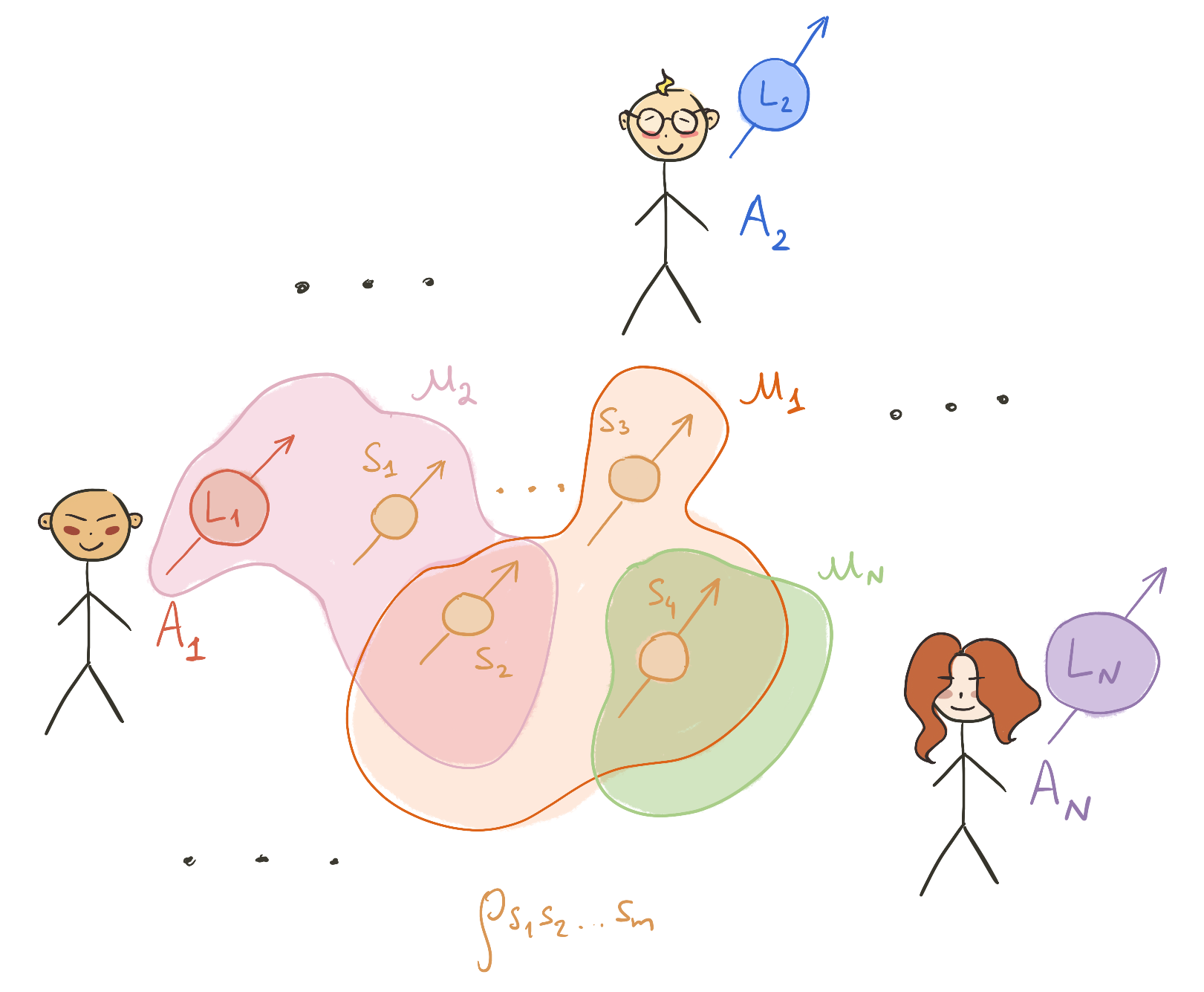}
    \caption{{ \bf Multi-agent setup.} Consists of: a set of agents, a set of their memories, a set of additional systems, and a set of measurements agents perform. The measurement of one agent can act on any subset of the memories of other agents and the additional systems, and for simplicity we assume each agent carries out only one measurement. Each memory $L_i$ is initialised to some state $\rho^0_{L_i}$ and the systems start out in the initial state $\rho_{S_1,...,S_m}$.}
        \label{fig:multi-setup}    
\end{figure}

\begin{definition}[Set of statements of a setup]
\label{def:language_v2}
Let $\mathcal{MA}$ be a multi-agent setup in a theory $\mathbb{T}$. The set $\Sigma_{\mathcal{MA}}$ of statements of the setup $\mathcal{MA}$ consists of all statements of the following types that can be derived in $\mathcal{MA}$ using $\mathbb{T}$.

    \begin{itemize}
        \item \textit{Atomic outcome statements}, which concern the outcomes of a set of measurements: Consider $\phi_{\vec{s}}:=\Big(\vec{a}_j = \vec{v}_j\Big)_{\vec{s}}$, where $\vec{a}_j$ denotes a set of measurement outcomes (each outcome associated with one agent) in the scenario and $\vec{v}_j$ denotes a set of values, one associated with each outcome in $\vec{a}_i$ and $\vec{s}$ denotes a setting choice for the scenario under which the statement is made. Then, $\phi_{\vec{s}}\in \Sigma_{\mathcal{MA}}$ whenever the corresponding prediction of the scenario is non-zero, i.e., $P(\vec{a}_j=\vec{v}_j|\vec{s})>0$.
        \item \textit{Atomic inferences}, which combine two atomic outcome statements into an inferential chain: Consider $\psi_{\vec{s}}:=\Big((\vec{a}_l = \vec{v}_l) \Rightarrow (\vec{a}_j = \vec{v}_j)\Big)_{\vec{s}}$. Then, $\psi_{\vec{s}}\in \Sigma_{\mathcal{MA}}$ whenever the corresponding prediction of the scenario satisfies $P(\vec{a}_j=\vec{v}_j|\vec{a}_l=\vec{v}_l,\vec{s})=1$ or equivalently $P(\vec{a}_j=\neg \vec{v}_j, \vec{a}_l=\vec{v}_l|\vec{s})=0$, where $\neg$ denotes negation.
    \end{itemize}
    The setting unlabelled version of a statement $\phi_{\vec{s}}\in \Sigma_{\mathcal{MA}}$ corresponds to the same statement but with the subscript $\vec{s}$ removed. 
\end{definition}

\subsection{Compatibility of measurements and agents}

In \Cref{def:language_v2} of the statements of the theory, atomic inferences can relate any two subsets of measurement outcomes $\vec{a}_j$ and $\vec{a}_l$. However, in the FR scenario, such inferences are only made between compatible (or jointly measurable) pairs of measurements. Therefore, to define the appropriate generalisation of an FR-style multi-agent paradox, we must first define the compatibility of agents through their measurements. For this, we note that although different measurements $\mathcal{M}_i$ and $\mathcal{M}_j$ in a multi-agent setup generally can act on different subsets $\mathcal{S}_i$ and $\mathcal{S}_j$ of systems (that can include the labs of other agents), we can trivially extend all measurements to measurements on the total set $\mathcal{S}\cup \mathcal{L}$ of all systems and labs by appending the identity transformation on the remaining systems. In the following definition, we consider the measurements of a multi-agent setup as acting on the same space through such a trivial extension.

In order to meaningfully speak about measurement compatibility of a subset of measurements in a multi-agent setup, and thereby meaningfully speak about measurement contexts and contextuality,
we need to create an additional bridge. This is because in a multi-agent setup, all $N$ measurements are actually performed in every given realization or experimental round of the protocol specified by the setup. On the other hand, when testing compatibility or joint measurability of a set of measurements in the usual sense, only the measurements in the given set are applied and their probabilities are simulated through a single joint measurement \Cref{def: meas_scenario}. Therefore, the following definition of compatibility in a multi-agent setup includes an additional step before invoking the usual definition of joint measurability/compatibility of measurements. We state the definition and then illustrate it with concrete and intuitive examples.

\begin{sloppypar}

\begin{definition}[Compatibility of measurements in a multi-agent setup]
\label{def:compatibility_v2}
Consider a multi-agent setup $\mathcal{MA}$ involving $N$ agents $\{A_i\}_{i=1}^N$ and a corresponding set of $N$ measurements $\{\mathcal{M}_i\}_{i=1}^N$. Let $\rho$ denote the joint initial state of all systems and memories $\mathcal{S}\cup \mathcal{L}$ in the setup. A subset $\{\mathcal{M}_{j_1},...,\mathcal{M}_{j_p}\}$ of the $N$ measurements is
said to be compatible in the setup $\mathcal{MA}$ if the following conditions hold. For each $\mathcal{M}_{j_k}$, there exists a corresponding measurement $\mathcal{M}'_{j_k}$ defined on $\mathcal{S}\cup \mathcal{L}$ and having the same outcomes $a_{j_k}$ as the original measurement such that 

\begin{itemize}
    \item The default prediction $P(a_{j_1},...,a_{j_p}|\vec{s})$ of the original setup is equivalent to the probability $P(a_{j_1},...,a_{j_p}|\rho, \{\mathcal{M}'_{j_k}\}_{k=1}^p)$ of applying the primed measurements to $\rho$, $$P(a_{j_1},...,a_{j_p}|\vec{s})=P(a_{j_1},...,a_{j_p}|\rho, \{\mathcal{M}'_{j_k}\}_{k=1}^p),$$
    and the marginals $P(a_{j_k}|\vec{s})=\sum_{a_{j_l},l\neq k} P(a_{j_1},...,a_{j_p}|\vec{s})$ of the former equal the corresponding marginals $P(a_{j_k}|\rho, \{\mathcal{M}'_{j_k}\}_{k=1}^p)=\sum_{a_{j_l},l\neq k} P(a_{j_1},...,a_{j_p}|\rho, \{\mathcal{M}'_{j_k}\}_{k=1}^p)$ of the latter.

      \item The primed set of measurements $\{\mathcal{M}'_{j_k}\}_{k=1}^p$ are jointly measurable or compatible (see Definition~\ref{def:meas_compatibility}).
\end{itemize}

If a subset of measurements $\{\mathcal{M}_{j_1},...,\mathcal{M}_{j_p}\}$ is compatible, we will say that the corresponding subset $\{A_{j_1},...,A_{j_p}\}$ of agents performing those measurements form a compatible set of agents. 

\end{definition}
\end{sloppypar}
Note that even though a multi-agent setup specifies a particular input state, the concept of compatibility of measurements and agents in the setup is defined relative to all possible input states as joint measurability applies to all initial states (\Cref{def:meas_compatibility}).

\paragraph{Examples of compatible and incompatible measurements in a multi-agent setup.}  In quantum theory, one generally associates joint measurability to the compatibility of the measurement basis, in particular regarding two measurements $\mathcal{M}_1$ and $\mathcal{M}_2$ of a system in the same basis as being compatible. This is only meaningful when $\mathcal{M}_1$ and $\mathcal{M}_2$ are the only measurements being applied on the systems. In a multi-agent setup where two agents perform these measurements, in the presence of further agents who may measure in other bases, we can no longer regard these measurements as being compatible as other intervening measurements can disturb the system.

Consider a simple quantum multi-agent setup illustrated in \Cref{fig:agent-compatibilityA} with an initial state $\ket{\psi_{RS}}$ of two qubits $R$ and $S$. The set-up involves four agents Alice, Bob, Charlie and Debbie and we take their memories $A$, $B$, $C$ and $D$ to be initialied in the $\ket{0}$ state. Alice and Debbie both measure a system $R$ in the $\{\ket{0},\ket{1}\}$ basis but at different times. At a time between Alice and Debbie's measurement, Charlie measures $R$ in the complementary, Hadamard basis $\{\ket{+},\ket{-}\}$. Additionally, we can have another agent Bob who measures a system $S$ entangled with $R$. All agents store the measurement outcome in their respective memories. In quantum theory considering projective measurements and taking the memory systems to be initialised to $\ket{0}$, setting $s_i=0$ for a measurement $\mathcal{M}_i$ corresponds to a unitary evolution $U_{\mathcal{M}_i}$ which is a CNOT (from the system to the memory) in the basis of the measurement while setting $s_i=1$ corresponds to additionally applying the measurement projectors after this CNOT (e.g., projectors of $\ket{00}$, $\ket{11}$ on the system and memory for a computational basis measurement) \cite{Vilasini2022}.\footnote{Note that applying a projective measurement e.g., $\{\ket{0},\ket{1}\}$ on the system alone is operationally equivalent in terms of probabilities, to applying the unitary model $U_{\mathcal{M}_i}$ of the measurement (in this case, a usual CNOT from system to memory) and then the projective measurement $\{\ket{00},\ket{11}\}$ on the system and memory. This corresponds to the information-preserving property of the quantum memory update $U_{\mathcal{M}_i}$ \cite{VNdR, Ormrod2023}.} 

Consider the default prediction (\cref{def: default_predictions}) for Alice and Debbie, which is explicitly given as $P(ad|\vec{s}):=P(ad|s_A=s_D=1,s_C=s_B=0)$. Now let $\mathcal{M}'_D:=\mathcal{M}_D\circ U_{\mathcal{M}_C}$ which denotes the sequential composition of the unitary description of Charlie's measurement (since $s_C=0$) followed by Debbie's projective measurement (since $s_D=1$). This defines a new binary outcome projective measurement, w.l.o.g. we can view this as having the same outcome $d$ as $\mathcal{M}_D$. Since Bob acts on a distinct subsystem than all other agents, the non-signalling property of quantum theory allows us to ignore Bob when computing predictions for other agents. Therefore the default prediction $P(ad|\vec{s})$ can equivalently be obtained by applying only $\mathcal{M}_A$ and $\mathcal{M}'_D$ to the initial state $\ket{\psi_{RS}}$. However, the unitary $ U_{\mathcal{M}_C}$ now transforms the basis of Debbie's measurement such that $\mathcal{M}_A$ and $\mathcal{M}'_D$ are not jointly measurable in the usual sense, according to \Cref{def:meas_compatibility}. This highlights that two agents measuring in the same basis is insufficient for compatibility between them in a multi-agent setup, and is formalised through the first condition of our \Cref{def:compatibility_v2}.

In comparison, consider another setup illustrated in \Cref{fig:agent-compatibilityB}. Here Ursula acts jointly on Alice's system $R$ and memory $A$ after Alice's measurement in the computational basis $\{\ket{0}_R,\ket{1}_R\}$. Here there are no intermediate agents, applying our definition, it is easy to see that if Ursula measures in the $\{\ket{00}_{AR},\ket{11}_{AR}\}$ basis, she would be compatible with Alice (physically, this would correspond to Ursula simply asking Alice what outcome she obtained) but if she measures in the Bell basis which is particular includes the elements $\frac{1}{\sqrt{2}}(\ket{00}\pm \ket{11})_{AR}$, then she would not be compatible with Alice (as Ursula would be a superagent who is ``Hadamarding'' Alice's brain, in colloquial terms). In both the examples of \Cref{fig:agent-compatibility}, it is easy to verify that Bob is compatible with all other agents according to our definition.

We note that in any multi-agent setup within a theory that has a well-defined compositional structure, one can generalise the idea of the first example discussed above, which involves ``grouping together'' a measurement with the setting $s_i=0$ descriptions of intermediate measurements, to construct the primed measurements needed in the first condition of our \Cref{def:compatibility_v2}. The primed measurements act in the same order (here $A$ before $D$) and have the same outcomes as the original measurements, and yield equivalent probabilities. While the ability to make this construction is not relevant to our main results, this highlights that the first condition of our \Cref{def:compatibility_v2} can generally be satisfied in multi-agent scenarios in a large class of theories, therefore (in)compatibility of measurements in a multi-agent setup according to our definition \Cref{def:compatibility_v2}, does generally indicate (in)compatibility or joint (non-)measurability in the usual sense of this concept, \Cref{def:meas_compatibility}.

\begin{figure}
    \centering
    \begin{subfigure}{0.49\textwidth}
        \centering
        \includegraphics[width=1\linewidth]{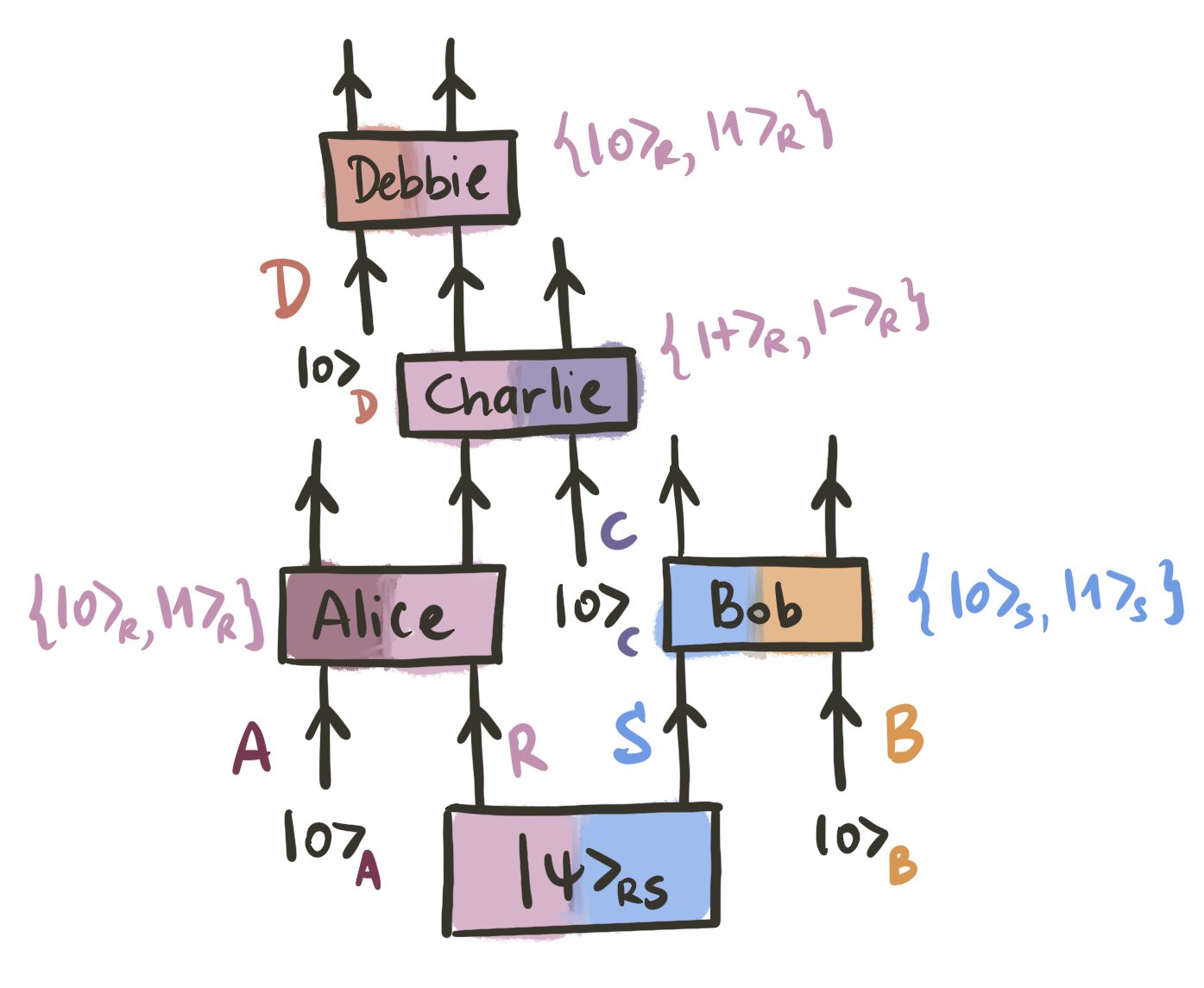}
        \caption{{\bf Compatibility in sequential measurement setups.} Two qubits $R$ and $S$ are prepared in the initial state $\ket{\psi_{RS}}$. Alice and Debbie both measure a system $R$ in the $\{\ket{0},\ket{1}\}$ basis but at different times; between their measurements, Charlie measures $R$ in the complementary, Hadamard basis $\{\ket{+},\ket{-}\}$. Additionally, we can have another agent Bob who acts on a system $S$ entangled with $R$. Note that Alice and Debbie cannot be regarded as performing compatible measurements, as Charlie's intervening measurement induces disturbance. This means that two agents measuring in the same basis is insufficient for compatibility between them in a multi-agent setup, as formalised through the first condition of our \Cref{def:compatibility_v2}.}
        \label{fig:agent-compatibilityA}
    \end{subfigure}
    \
    \begin{subfigure}{0.49\textwidth}
        \centering
        \includegraphics[width=1\linewidth]{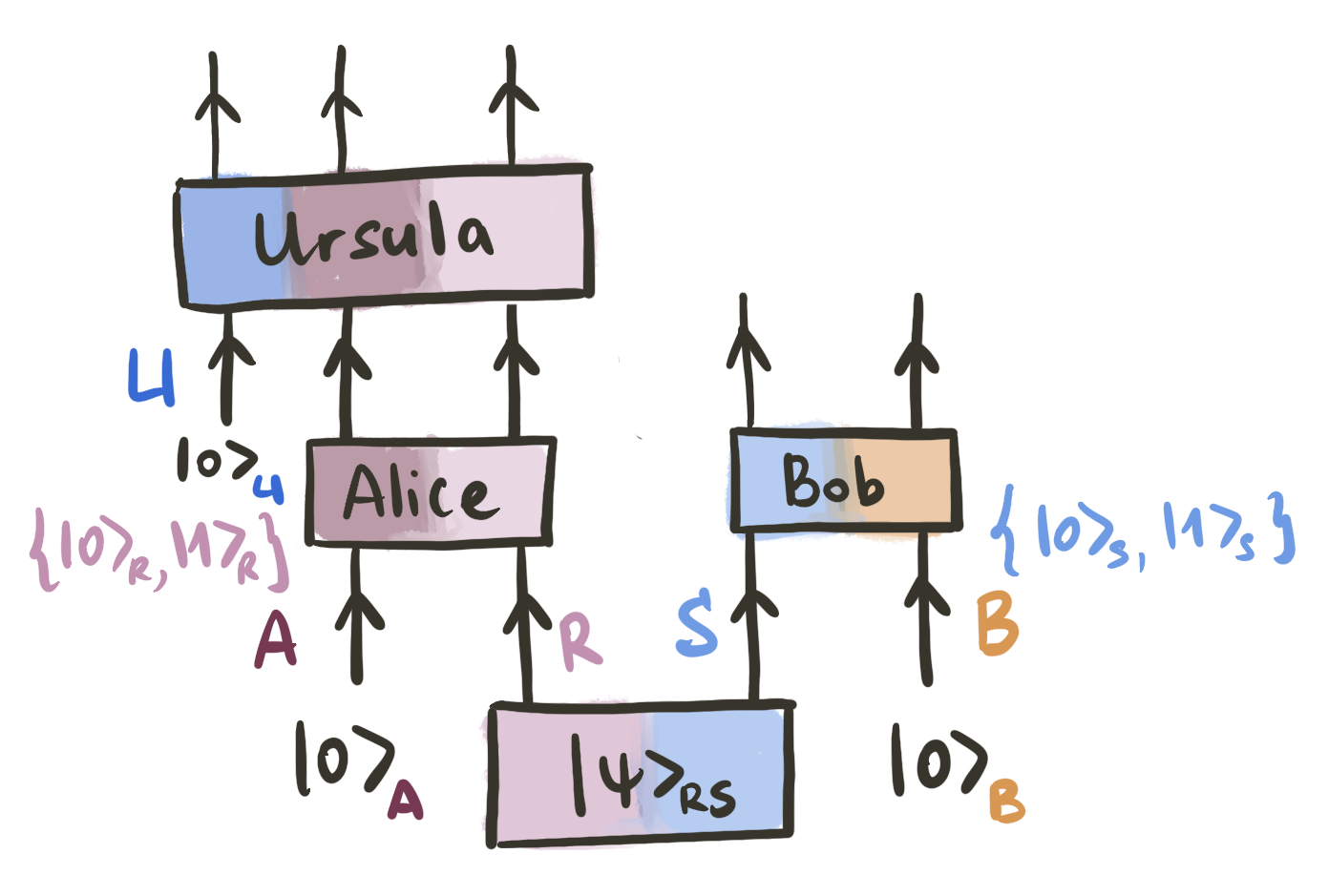}
        \caption{{\bf Compatibility and superagents.} In this example, Ursula, posing as a `superagent', acts jointly on Alice's system $R$ and memory $A$ after Alice's measurement in the computational basis $\{\ket{0}_R,\ket{1}_R\}$. If she measures in the $\{\ket{00}_{AR},\ket{11}_{AR}\}$ basis, they would be compatible; if she chooses to measure in the Bell basis, they are not. Note that Bob is compatible with both Alice and Ursula, independent of their measurement choices, as he acts on a different part of the system.}
        \label{fig:agent-compatibilityB}
    \end{subfigure}
    \caption{{\bf Compatibility of measurements/agents in a multi-agent setup.} We explain \cref{def:compatibility_v2} of compatibility in two cases through quantum examples: when simple sequential measurements on a system are made, and when superagents measuring other agents' memories are involved. All memory systems $A$, $B$, $C$, $D$ and $U$ are taken to be initialised in the $\ket{0}$ state.}
    \label{fig:agent-compatibility}
\end{figure}

\subsection{Knowledge axioms}

The knowledge of agents is governed by a set of axioms, or rules, which describe how it is operated~\cite{LogicStanford}\footnote{For a quick recap, see~\cite{NL2018}.}. We first discuss these axioms, at an abstract level of generic statements and then instantiate these concepts in a physical theory by linking them to the predictions of the theory for a given multi-agent setup.

We state the axioms relevant to how agents reason and pass on their knowledge. This can be formalised by associating a knowledge operator $K_i$ to each agent $A_i$, such that $K_i\phi$ denotes that agent $A_i$ knows the statement $\phi$ (see \cite{Kripke2007} for a formal definition in terms of Kripke structures of modal logic). Then we have the distribution axiom~\cite{Kripke2007,LogicStanford} which allows agents combine statement which contain inferences:
\begin{axiom}[Distribution axiom.]
\label{axiom:distribution}
If an agent $A_i$ knows a statement $\phi$ and that a statement $\psi$ follows from the statement $\phi$, then the agent can conclude that $\psi$ holds:
$$(K_i\phi\wedge K_i(\phi\Rightarrow\psi))\Rightarrow K_i(\phi\wedge \phi\Rightarrow\psi) \Rightarrow K_i\psi.$$
\end{axiom}
The distributivity of knowledge is an essential property which allows, given a set of facts and a set of inferences, to make conclusions about the world. Weakening, or excluding this rule would lead to agents restricted, or not being able to make their conclusions, and in turn, restrict our analysis of the theory they use. 

In logical puzzles, as well as in knowledge transfer scenarios, agents are not always combining their knowledge directly -- sometimes they reason from the viewpoint of each other, for example, ``agent $A_i$ knows that agent $A_j$ knows that ...''. Then to use the distribution axiom like above, we require an additional step of inheriting or trusting the knowledge of another agent as well as their own,
\begin{gather*}
    K_i (\phi \Rightarrow \psi) \wedge K_i K_j (\psi \Rightarrow \chi) \ \Longrightarrow \ K_i (\phi \Rightarrow \psi) \wedge K_i (\phi \Rightarrow \chi) \ \Longrightarrow \ K_i (\phi \Rightarrow \chi).
\end{gather*} 

\begin{figure}
    \centering
    \begin{subfigure}{0.45\textwidth}
    \centering
    \includegraphics[scale=0.3]{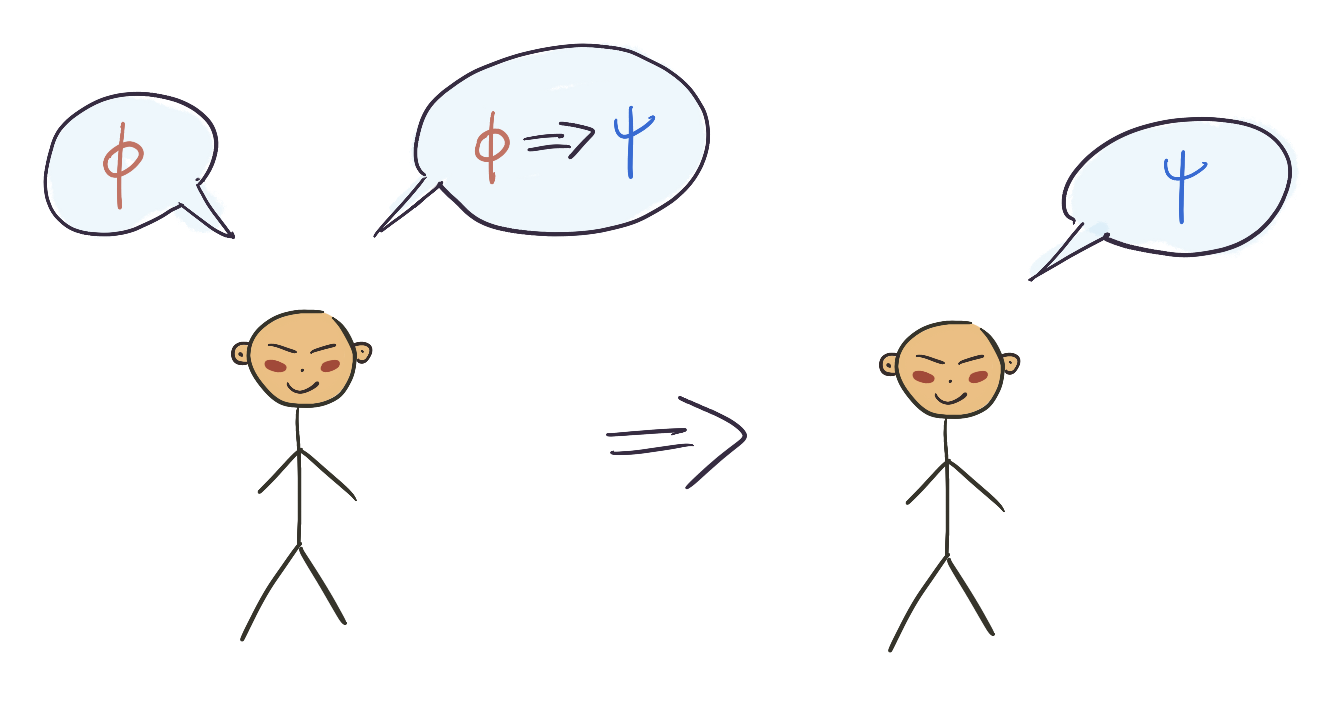}
    \caption{{ \bf Distribution axiom.} This basic logical axiom allows agents to use inferences in their reasoning; if $B$ knows $\phi$ and also knows that $\psi$ follows from $\phi$, he can conclude $\psi$, where $\psi$ and $\phi$ are some statements within the language of the theory, in the case of this paper, statements about outcomes.}
        \label{fig:distribution}
    \end{subfigure}
    \
    \begin{subfigure}{0.45\textwidth}
    \centering
    \includegraphics[scale=0.25]{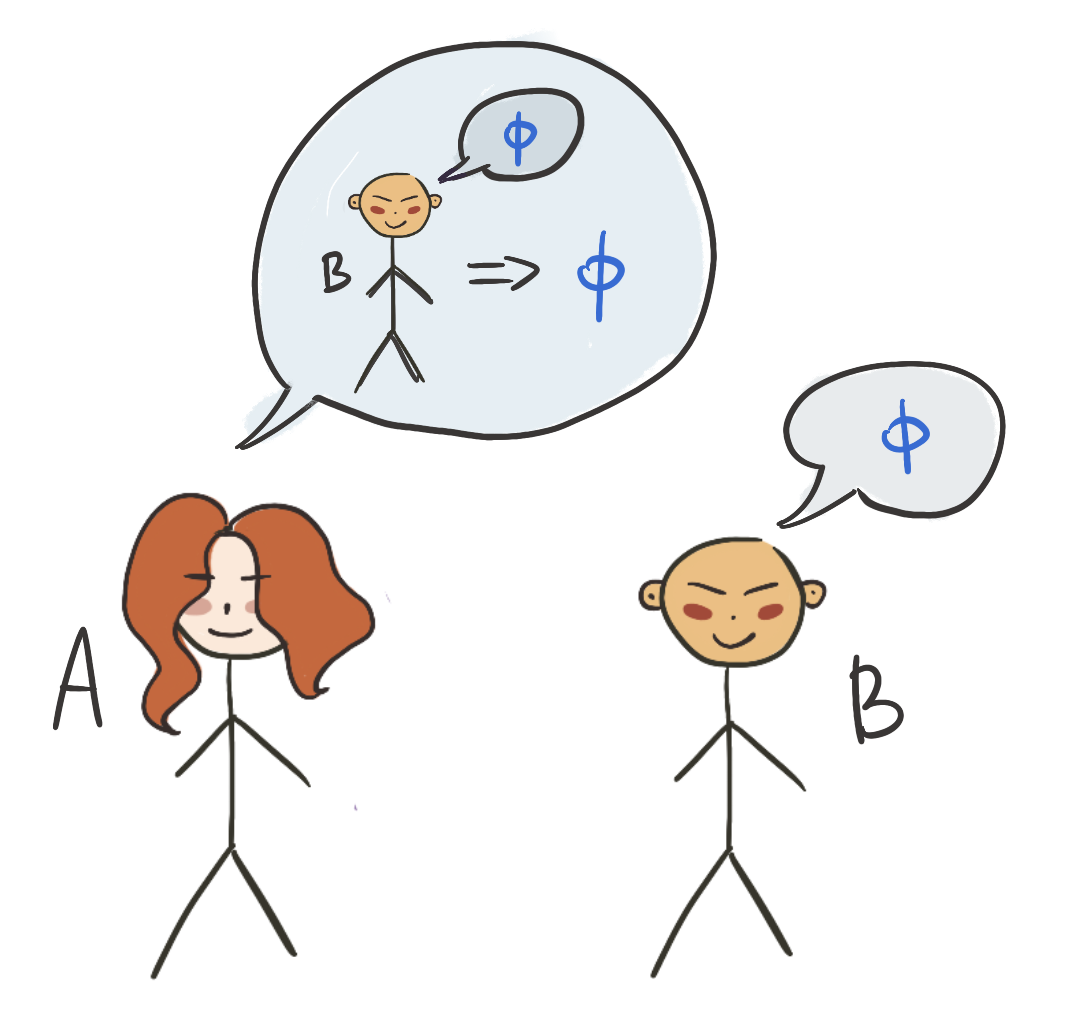}
    \caption{{ \bf Trust relation.} If agent $A$ trusts agent $B$, this means that she can treat $B$'s knowledge as her own.}
        \label{fig:trust}
    \end{subfigure}
    \caption{{ \bf Knowledge axioms.} Illustrating Axiom~\ref{axiom:distribution} and Definition~\ref{def:trust}.}
\end{figure}

\begin{definition}[Trust]
\label{def:trust}
    We say that an agent $A_i$ trusts the knowledge of an agent $A_j$ (and denote it by $A_j \leadsto A_i$) if and only if $$K_i K_j \phi \implies K_i \phi,$$ for all $\phi$. 
\end{definition}

The condition for trust needs to be instantiated in a given theory and scenario. In the next subsection, we will instantiate this axiom through the concept of compatibility between agents in a multi-agent setup and define a multi-agent paradox in a physical theory.

\subsection{Defining a multi-agent paradox in a theory}

We are ready to formally define our main object of interest in this paper -- a Wigner's Friend type multi-agent paradox. The assumptions below are a generalisation of the quantum theory specific assumptions of the Frauchiger-Renner paper \cite{Frauchiger2018,NL2018}, and a recent paper \cite{Vilasini2022} that proves the necessity of a ``setting-independence'' assumption for deriving apparent Frauchiger-Renner type paradoxes in quantum theory.

\begin{definition}[Wigner's Friend type multi-agent paradox in a physical theory]
\label{def: paradox}
If a theory $\mathbb{T}$ includes a multi-agent setup $\mathcal{M}\mathcal{A}$
where it is impossible to satisfy the following four assumptions simultaneously, then we say that the theory entails a Wigner's Friend type multi-agent paradox,
\begin{enumerate}
    \item (\textit{\textbf{common knowledge}}) all conclusions made by agents are based on the common theory $\mathbb{T}$ they use and the same multi-agent setup $\mathcal{MA}$. The theory and setup are taken to be common knowledge.
    
    \item (\textit{\textbf{reasoning about compatible agents}}) 
Two agents $A$ and $B$ can only apply the trust rule (\Cref{def:trust}) to statements $\phi_{\vec{s}}$ of a multi-agent set-up, i.e., $K_AK_B(\phi_{\vec{s}})\implies K_A\phi_{\vec{s}}$ when $A$ and $B$ are compatible. Additionally, agents can only reason using statements $\phi_{\vec{s}}$ (either atomic outcome statement or atomic inference) of a multi-agent set-up (\Cref{def:language_v2}) where the set of all outcomes appearing in $\phi_{\vec{s}}$ correspond to those of a set of compatible measurements.

    \item (\textit{\textbf{setting-independence}}) Setting labels (modeling a generalised notion of Heisenberg cuts, \cref{sec:meas_model}) can be ignored, every statement $\phi_{\vec{s}}\in \Sigma_{\mathcal{MA}}$ can be replaced with its setting-unlabelled version $\phi$.

    \item (\textit{\textbf{non-contradictory outcomes}}) No agent can arrive at contradictory conclusions about the value of any set of outcomes in the setup
i.e., there exists no agent $A$ such that  $$K_A(\vec{a}_j=\vec{v}_j \land \vec{a}_j=\neg \vec{v}_j)$$ for some value assignment $\vec{v}_j$ for a set $\vec{a}_j$ of outcomes in the scenario, where $\neg$ denotes negation.
   
\end{enumerate}
\end{definition}

When reasoning using the assumptions of the above definition, compatible agents can fully ignore the settings they label their statements with. To show this, consider two compatible agents Alice $A$ and Bob $B$, who have knowledge of statements $\phi_{\vec{s}_1}$ and $\chi_{\vec{s}_2}$ derived under two different setting vectors $\vec{s}_1$ and $\vec{s}_2$ respectively. Then if $A$ makes use of $B$'s knowledge, she obtains
\begin{align*}
    & K_A \phi_{\vec{s}_1} \wedge K_A K_B \chi_{\vec{s}_2} \\
    \xrightarrow{\text{trust}} &K_A \phi_{\vec{s}_1} \wedge K_A \chi_{\vec{s}_2} \\
    \xrightarrow{\text{distrib. axiom}} &K_A \left(\phi_{\vec{s}_1} \wedge \chi_{\vec{s}_2}\right) \\
    \xrightarrow{\text{setting-indep. }} &K_A (\phi \wedge \chi).
\end{align*}
Notice that this conclusion is identical to beginning with the setting independent statements $\phi$ and $\chi$ and applying the trust and distributive axioms.
Therefore, for simplicity, we can ignore the setting labels in the explanations and analysis concerning multi-agent paradoxes by imposing the setting-independence assumption first. Below, we provide an overview of the Wigner's Friend type multi-agent paradox arising in the Frauchiger-Renner set-up \cite{Frauchiger2018}, as an example.

\begin{figure}[h!]
    \centering
    \includegraphics[width=0.8\linewidth]{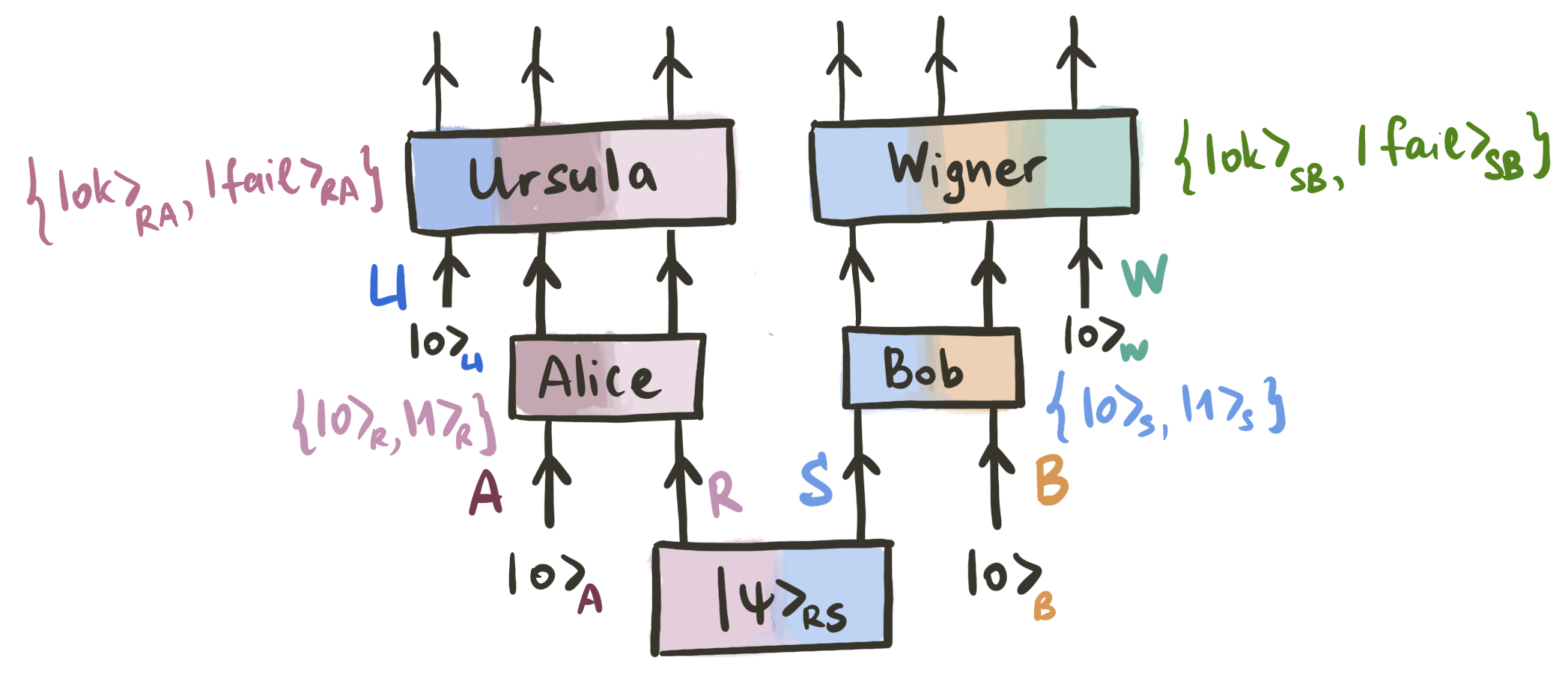}
    \caption{The Frauchiger-Renner setup \cite{Frauchiger2018}  $\mathcal{M}\mathcal{A}_{FR}$ as an example of a multi-agent paradox.}
    \label{fig:fr-entanglement}
\end{figure}
The multi-agent setup $\mathcal{M}\mathcal{A}_{FR}$ consists of: a set of agents $\mathcal A_{FR} = \{\text{Alice, Bob, Ursula, Wigner}\}$; a set of systems $\mathcal S_{FR} = \{R,S\}$; a set of memories, one for each agent $\mathcal L_{FR} = \{ A,B,U,W \}$ (each initialised in the state $\ket{0}$); an initial joint state of all systems $\mathcal S_{FR}$,  $\ket{\psi}_{RS} = (\ket{00}_{RS}+\ket{10}_{RS}+\ket{11}_{RS})/\sqrt{3}$, and
     a set of measurements conducted by the agents $\mathcal M_{FR} = \{\mathcal M_A,\mathcal M_B,\mathcal M_U,\mathcal M_W\}$, given by the bases
    \begin{align*}
\mathcal M_A: \{\ket{0}_R&,\ket{1}_R\},\\
\mathcal M_B : \{\ket{0}_S&,\ket{1}_S\},\\
\mathcal M_U: \{\ket{ok}_{RA}=(\ket{00}_{RA}-\ket{11}_{RA})/\sqrt{2}&,\ket{fail}_{RA}=(\ket{00}_{RA}+\ket{11}_{RA})/\sqrt{2}\},\\
\mathcal M_W: \{\ket{ok}_{SB}=(\ket{00}_{SB}-\ket{11}_{SB})/\sqrt{2}&,\ket{fail}_{SB}=(\ket{00}_{SB}+\ket{11}_{SB})/\sqrt{2}\}.
    \end{align*} 
    For each measurement $i\in\{A,B\}$, the setting $s_i=0$ description is a unitary corresponding to the CNOT gate on the respective system and memory, with the system as control. Then it is easy to see that, for example, when computing probabilities of Ursula’s measurement $\mathcal M_U$ acting on $R$ and $A$ while using the setting $s_A=0$ for Alice, this can be equivalently reduced to a simple $X$-basis measurement $\mathcal{M}'_U:=\{(\ket{0}_{R}-\ket{1}_{R})/\sqrt{2},(\ket{0}_{R}+\ket{1}_{R})/\sqrt{2}\}$ on the system $R$ alone. Thus the default prediction $P(u,w|\vec{s})$ (which uses $s_A=s_B=0$ by \cref{def: default_predictions}) is equivalent to $P(u,w|\mathcal{M}’_U,\mathcal{M}’_W,\ket{\psi}_{RS})$ where the primed measurements are $X$-basis measurements directly on $R$ and $S$ and are compatible (as required by \cref{def:compatibility_v2}). Similarly, the default predictions for the outcome pairs $(a,b)$, $(a,w)$ and $(u,b)$ can be written in terms of measurements (either $Z$ or $X$ basis) acting directly on the state $\ket{\psi}_{RS}$. Dropping the setting labels via the setting-independence assumption and the conditioning on the measurements (as these are clear from context), the probabilities of the scenario can be expressed as $P(u,w)$, $P(a,w)$, $P(a,b)$ and $P(u,b)$, and yield the following (setting-unlabelled) inferences according to Definition 10. These probabilities (as well as the state and the primed measurements) are in fact isomorphic to those of Hardy's paradox \cite{Hardy1993}.
    \begin{align*}
  \text{ Alice}:    P(w=fail|a=1)=1 \ &\Rightarrow \ K_A(a=1 \Rightarrow \ w=fail);\\
\text{Bob}: P(a=1|b=1)=1 \ &\Rightarrow \ K_B(b=1 \Rightarrow \ a=1);\\
 \text{Ursula}:  P(b=1|u=ok)=1 \ &\Rightarrow \ K_U(u=ok \Rightarrow \ b=1).
    \end{align*}
    Finally, we post-select on the outcomes $u=w=ok$ which occur with probability $1/12$ in this scenario, and in such a run $K_W(u=ok \wedge w=ok)$. The pairs of agents performing compatible measurements in the setup are: $(Alice,Bob)$, $(Bob,Ursula)$, $(Alice,Wigner)$ and $(Ursula,Wigner)$. These pairs of agents can then use the Distribution and Trust Axioms to combine the above inferences in the given experimental run. For instance, $K_W(u=ok \wedge w=ok)$, and $K_WK_U(u=ok \Rightarrow \ b=1)$ can be combined to give $K_W(u=ok \wedge w=ok \Rightarrow \ b=1)$, and by combining all the above statements in this manner, the agents reach a contradiction between the assumptions of Definition 13, i.e., a multi-agent paradox.

\section{Multi-agent paradoxes and contextuality in general theories}
\label{sec:paradox}

In this section, we present our main result linking multi-agent paradoxes and contextuality, which implies that that no logically non-contextual physical theory can lead to such paradoxes. Then, we illustrate an example within quantum theory, by constructing a multi-agent paradox based on a genuine contextuality scenario in quantum theory, which establishes that non-locality of a theory is not a necessary condition for the theory to admit multi-agent paradoxes in the sense of \Cref{def: paradox}. 

\subsection{Multi-agent paradoxes are proofs of logical contextuality}
\label{sec: paradox_implies_contextual}

The following theorem captures our main result, a proof of the theorem can be found in \Cref{appendix: proof_main}.

\begin{restatable}[Multi-agent paradoxes imply logical contextuality]{theorem}{ParadoxImpliesContextuality}
\label{theorem: main}

Any theory $\mathbb{T}$ which admits a multi-agent paradox (\Cref{def: paradox}) also admits a logically contextual empirical model (\Cref{def: logical_contextuality}) i.e., multi-agent paradoxes in a theory prove the logical contextuality of the theory.
\end{restatable}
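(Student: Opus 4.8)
The plan is to start from a multi-agent setup $\mathcal{MA}$ in $\mathbb{T}$ that admits a Wigner's Friend type paradox (\Cref{def: paradox}) and explicitly construct from it a measurement scenario $(X,\mathcal{C},O,\rho)$ together with an empirical model, and then exhibit a particular outcome assignment $s \in \mathcal{S}(C)$ for some context $C$ that belongs to no compatible family, thereby witnessing logical contextuality in the sense of \Cref{def: logical_contextuality}. The natural choice is to take $X := \{\mathcal{M}_i\}_{i=1}^N$ (the $N$ measurements of the setup, trivially extended to act on all of $\mathcal{S}\cup\mathcal{L}$ as in the discussion before \Cref{def:compatibility_v2}), $O := \bigcup_i \mathtt{O}_i$ refined to $O_x = \mathtt{O}_i$ for $x = \mathcal{M}_i$, $\rho$ the joint initial state of $\mathcal{S}\cup\mathcal{L}$, and $\mathcal{C}$ the family of subsets of $X$ that are compatible in the setup (\Cref{def:compatibility_v2}). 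The empirical probabilities $e_C$ on each context are then precisely the default predictions $P(\vec a|\vec s)$ restricted to that context, which are well-defined joint distributions because compatibility in the setup reduces, by \Cref{def:compatibility_v2}, to genuine joint measurability of the primed measurements, whose single joint measurement supplies $e_C$.

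First I would argue that the paradox, together with assumptions (1)--(3) of \Cref{def: paradox}, must take the following normal form: agents, reasoning only about compatible sets (assumption 2), via the Trust and Distribution axioms and after dropping setting labels (assumption 3, using the displayed derivation $K_A\phi_{\vec s_1}\wedge K_AK_B\chi_{\vec s_2} \to K_A(\phi\wedge\chi)$ just below \Cref{def: paradox}), chain together a collection of atomic inferences $\psi = \big((\vec a_l = \vec v_l) \Rightarrow (\vec a_j = \vec v_j)\big)$ and atomic outcome statements $\phi = (\vec a_j = \vec v_j)$, each of which lives on a single context of $\mathcal{C}$, and derive $K_A(\vec a = \vec v \wedge \vec a = \neg \vec v)$ for some outcome $\vec a$ — contradicting assumption (4). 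I would then unwind this chain into a statement purely about outcome assignments: there is an atomic outcome statement $\phi_0 = (\vec a_{j_0} = \vec v_{j_0})$, valid in some context $C_0$ (so $P(\vec a_{j_0}=\vec v_{j_0}|\vec s)>0$, which by definition of $\mathcal{S}(C_0)$ means $\vec v_{j_0}$ extends to an $s_0 \in \mathcal{S}(C_0)$), such that following the chain of two-way-or-one-way implications forces, on the overlaps between successive contexts, a cascade of forced values that eventually contradicts $\phi_0$ itself (this is the ``half-Liar's cycle'' structure of \Cref{eq: liar}, made explicit only later in \Cref{sec:character} but morally present here). Concretely: suppose, toward a contradiction, that $s_0$ \emph{did} extend to a compatible family $\{s_C\}_{C\in\mathcal{C}}$; I would show that each atomic inference $\psi$ used in the paradox is ``respected'' by any compatible family, in the sense that if $s_C|_{C\cap C'} = \vec v_l$ and $\psi$ asserts $(\vec a_l=\vec v_l)\Rightarrow(\vec a_j=\vec v_j)$ with $\vec a_l$ in $C$ and $\vec a_j$ in $C'$, then necessarily $s_{C'}|_{C\cap C'}$ (and hence $s_{C'}$ on the $\vec a_j$ coordinates) equals $\vec v_j$ — because $P(\vec a_j = \neg\vec v_j, \vec a_l = \vec v_l|\vec s)=0$ means the combined assignment is not in the support, hence not compatible with $\mathcal{S}$. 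Iterating this ``forcing'' lemma along the reference chain of the paradox propagates the value $\vec v_{j_0}$ around the cycle and eventually forces the $\vec a_{j_0}$ coordinates of the family to take the value $\neg\vec v_{j_0}$ as well, contradicting $s_C|$ being a single well-defined function (or contradicting $s_0\in\mathcal{S}(C_0)$). Hence no compatible family contains $s_0$, so the model is logically contextual at $s_0\in\mathcal{S}(C_0)$.

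The step I expect to be the main obstacle is the bookkeeping in the normal-form reduction: showing rigorously that the epistemic derivation permitted by assumptions (1)--(4) can, without loss of generality, be reorganized as a linear chain of atomic inferences each supported on a single compatible context, with consecutive inferences sharing the outcome-variables on which the ``forcing'' is applied. One has to check that (a) the Trust axiom is only invoked between compatible agents, so every time an inference is inherited it still concerns a compatible set of measurements, hence a legitimate context of $\mathcal{C}$; (b) the Distribution axiom only ever conjoins statements whose combined set of outcome-variables remains inside a single context (this is exactly what assumption (2)'s second clause guarantees, and it is what makes the resulting statements refer to well-defined marginals of a single $e_C$); and (c) the setting labels can be uniformly dropped at the start, as established by the derivation below \Cref{def: paradox}, so that all statements entering the chain are setting-unlabelled and thus interpretable purely in terms of $\mathcal{S}$ and the supports of $e_C$. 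A secondary subtlety is making sure the contradiction $K_A(\vec a_{j_0} = \vec v_{j_0} \wedge \vec a_{j_0} = \neg\vec v_{j_0})$ really does pin down a \emph{single} assignment $s_0\in\mathcal{S}(C_0)$ that fails to extend (rather than merely showing the empirical model is contextual in the weaker, probabilistic sense) — this is where one uses that atomic outcome statements require strictly positive probability, guaranteeing $\mathcal{S}(C_0)\neq\emptyset$ and that $\vec v_{j_0}$ is realized in some $s_0\in\mathcal{S}(C_0)$, so the obstruction is genuinely logical.
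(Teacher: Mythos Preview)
Your proposal is correct and follows essentially the same approach as the paper's proof: construct the measurement scenario and empirical model from the multi-agent setup via \Cref{def:compatibility_v2} and the default predictions, reduce the paradox to a half-Liar's-cycle chain of atomic inferences on overlapping contexts, and then show that any compatible family of assignments would be forced by the zero-probability support conditions along the chain to assign both $\vec v_{j_0}$ and $\neg\vec v_{j_0}$ to the starting variables. Your explicit flagging of the normal-form bookkeeping (points (a)--(c)) is, if anything, more careful than the paper's treatment, which simply asserts the chain structure \eqref{eq: proof_liarcycle} without unpacking the reorganization of the epistemic derivation.
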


\paragraph{Contrasting Wigner's original paradox with FR-type paradoxes.}
The {\bf reasoning about compatible agents} assumption in \Cref{def: paradox} is key to our main result and the connection between multi-agent paradoxes and contextuality. However, this restriction on reasoning is not fundamental; similar paradoxes can arise even when incompatible agents reason about each other. A simple example is Wigner's original thought experiment in quantum theory. Here, an agent (the Friend) measures a system $S$ in the state $\frac{1}{\sqrt{2}}(\ket{0}+\ket{1})_{S}$ in the computational basis, recording the outcome in memory $F$ (initialised to $\ket{0}_F$). Wigner then measures the Friend's lab, modeled by the joint system $SF$, in the Bell basis, using states $\ket{ok/fail}_{SF}:=\frac{1}{\sqrt{2}}(\ket{00}\mp \ket{11})_{SF}$.

The Friend's measurement, when described using the setting $s_F=0$, is a unitary implementing a CNOT gate, with the system as control and memory as target (just as in the FR case, \Cref{fig:fr-entanglement}). In this scenario one would obtain the probabilities, $P(w=ok|s_F=0)=0$, but $P(w=ok|s_F=1)>0$ (see \cite{Vilasini2022} for further details), which would allow the statements $(w=fail)_{s_F=0}$ and $(w=ok)_{s_F=1}$ to be made according to the language of the theory (\Cref{def:language_v2}). Imposing {\bf setting-independence} reduces these statements to $w=fail$ and $w=ok$, leading to a violation of the {\bf non-contradictory outcomes assumption}, thereby suggesting a paradox. However, since Wigner and the Friend are not compatible agents in this set-up\footnote{We note that the probabilities  $P(w=ok|s_F=0)=0$ and $P(w=ok|s_F=1)>0$ also have $s_W=1$ in both cases (although not written above, in the interest of brevity) as they refer to Wigner's classical outcome, and the latter probability is equivalent to $\sum_{f} P(w=ok,f|s_W=s_F=1)$. But this is a joint probability of the Wigner and the Friends' outcomes computed in the order in which they act, these agents perform incompatible measurements and using this probability in the reasoning would violate the {\bf reasoning about compatible agents} assumption.}, this does not constitute a paradox under \Cref{def: paradox}. Additionally, there are no non-trivial measurement contexts in this example, preventing any link to contextuality. This provides a clear contrast between Wigner’s original paradox and the one by FR, only the latter is possible when restricting agents to only reason about other compatible agents.

We leave open the question of characterizing the properties a theory must have to generate paradoxes where the compatibility assumption in \Cref{def: paradox} is relaxed. Results from \cite{Vilasini2022} suggest that in quantum theory, setting-dependence or dependence on Heisenberg cuts is necessary for all such paradoxes. Specifically, in Wigner's example, the predictions for $w$ show setting-dependence, as the probability of $w$ depends on the choice of setting $s_F$, as seen in $P(w=ok|s_F=0)=0$ and $P(w=ok|s_F=1)>0$. 

\paragraph{Link between contextuality and setting (or Heisenberg cut) dependence.}
As noted above, \cite{Vilasini2022} shows that for any logical paradox in quantum theory, even where incompatible agents reason about each other, setting-dependence is necessary. Any such paradox can only arise when these settings are ignored by imposing the setting-independence assumption of \Cref{def: paradox}, in a scenario where the predictions are in fact sensitive to the choice of settings. On the other hand, for paradoxes restricted to reasoning between compatible agents, we have shown that logical contextuality is necessary. This raises questions about the relationship between settings and contexts, or setting-(in)dependence and logical (non)-contextuality.

Since logical contextuality requires measurement compatibility, a meaningful comparison is possible only in paradoxes where the compatibility assumption holds. The default prediction $P(a_{j_1},\ldots,a_{j_p}|\vec{s})$ for every subset $\{a_{j_1},\ldots,a_{j_p}\}$ of outcomes assigns setting $s_i=1$ for each $i\in\{j_1,...,j_p\}$, and $s_i=0$ otherwise. When these outcomes result from compatible measurements, then the set $\{a_{j_1},\ldots,a_{j_p}\}$ specifies a measurement context, and this context information can be read-off from the $s_i=1$ entries of the setting vector. Thus, in multi-agent paradoxes where statements refer only to outcomes of compatible agents, each distinct setting vector uniquely specifies a measurement context. Dependence on settings therefore translates into dependence on contexts, leading to contextuality.

For example, suppose $C_1=\{a_1,a_2\}$ and $C_2=\{a_2,a_3\}$ are two sets of outcomes of compatible measurements, representing two contexts that overlap on $a_2$. The default predictions for the two subsets are $P(a_1,a_2|s_1=s_2=1,s_3=0)$ and $P(a_2,a_3|s_1=0,s_2=s_3=1)$. The marginals for $a_2$ would not match if the prediction for $a_2$ is setting-dependent, i.e., 
\begin{align*}
\begin{split}
&\sum_{a_1}  P(a_1,a_2|s_1=s_2=1,s_3=0)=    P(a_2|s_1=s_2=1,s_3=0)\\
\neq &P(a_2|s_1=0,s_2=s_3=1)=\sum_{a_3}P(a_2,a_3|s_1=0,s_2=s_3=1).
\end{split}
\end{align*}

 The setting choices in these cases reveal the distinct contexts $C_1$ and $C_2$ under which the predictions for $a_2$ were derived, and we would therefore have that $P(a_2|C_1)\neq P(a_2|C_2)$ in this scenario. Thus, setting-dependence translates directly into context-dependence. In this paper, we have focused on logical contextuality, where value assignments for outcomes are context-dependent (i.e., no consistent global, context independent assignment). More generally, contextuality entails that the probabilities of outcomes are context-dependent, as shown in this example.

Physically, this correspondence suggests that Wigner's Friend scenarios involving multi-agent paradoxes, as defined by \Cref{def: paradox}, translate different measurement contexts into different choices of Heisenberg cuts (which determine whether measurements are treated as producing classical records or only as evolutions of systems of the theory), and consequently context-dependence (or contextuality) to Heisenberg-cut-dependence of predictions.

\subsection{Wigner's Friend type paradox from a genuine contextuality scenario}
\label{sec:kcbs}

In this section, we describe the construction of a Wigner's Friend type multi-agent paradox based on a genuine contextuality scenario (distinct from Bell non-locality) in quantum theory, as opposed to FR's paradox which involves Bell non-locality, a special form of contextuality. Further details, explicit instantiation of the theory-independent assumptions of \Cref{def: paradox} and demonstration of a contradiction between these assumptions in this set-up are given in \Cref{appendix: kcbs}. The particular example, the states and measurements involved in the set-up we propose here are very similar to the example constructed in \cite{Walleghem2024-FR}, as noted in the introduction. However, there are slight differences. For instance, here we derive the paradox by instantiating the theory-independent assumptions and definition of paradox given in \Cref{def: paradox}, while \cite{Walleghem2024-FR} derived the paradox from quantum-theory specific assumptions, such as those referring to commutation of quantum measurements. The two are however closely related as our general assumption on reasoning about compatible agents is instantiated through commuting quantum measurements in this example. 
Another crucial aspect where our analysis of this example differs, is in explicitly accounting for the setting (or Heisenberg cut) independence assumption, noting that this is a necessary assumption for deriving apparent Wigner's Friend paradoxes in quantum theory \cite{Vilasini2022}. 

In his seminal paper~\cite{Hardy1993}, Hardy showed that the Bell non-locality of quantum theory can be established through a simple logical argument, without the need for inequalities. As has been previously pointed out, and we have discussed earlier, this proof is the basis of Hardy's paradox and is also for the original Frauchiger-Renner paradox which is constructed using a state and measurements that are isomorphic to those of Hardy's paradox (\Cref{fig:fr-entanglement}). Therefore, in order to construct FR-like multi-agent paradoxes using genuine contextuality scenarios (such as temporally sequential measurements on a single system as opposed to measurements on space-like separated subsystems), it is natural to consider a Hardy-like, logical proof of quantum contextuality in such settings. Such a proof has been provided in \cite{Cabello2013} using the KCBS contextuality scenario, that corresponds to a $n=5$-cycle i.e., a contextuality scenario with 5 measurements $X_1,...,X_5$ such that the only pairs of compatible measurements are $X_i$ and $X_{i+1 (\text{mod } 5)}$ for each $i\in \{1,2,3,4,5\}$. This is established using the following qutrit state $\ket{\psi}$ and binary valued measurements $X_1,...,X_5$ on it, where the projector corresponding to the ``1'' outcome of $X_i$ is given by the vector $\ket{v_i}$, defined as follows. The ``0'' outcome simply corresponds to its complement i.e., $\mathds{1}-\ket{v_i}\bra{v_i}$.
\begin{subequations}
\begin{equation}
\label{eq: KCBSstate}
    \ket{\psi}=\frac{1}{\sqrt{3}}(\ket{0}+\ket{1}+\ket{2}),
\end{equation} 
\begin{equation}
\label{eq: KCBSmmt1}
    \ket{v_1}=\frac{1}{\sqrt{3}}(\ket{0}-\ket{1}+\ket{2}),
\end{equation}
\begin{equation}
\label{eq: KCBSmmt2}
    \ket{v_2}=\frac{1}{\sqrt{2}}(\ket{0}+\ket{1}),
\end{equation}
\begin{equation}
\label{eq: KCBSmmt3}
    \ket{v_3}=\ket{2},
\end{equation}
\begin{equation}
\label{eq: KCBSmmt4}
    \ket{v_4}=\ket{0},
\end{equation}
\begin{equation}
\label{eq: KCBSmmt5}
    \ket{v_5}=\frac{1}{\sqrt{2}}(\ket{1}+\ket{2}).
\end{equation}
\end{subequations}
Note that $\ket{v_i}$ and $\ket{v_{i+1 (\text{mod }5) }}$ are orthogonal for each $i$ and hence the measurements defined by them are compatible (and jointly measurable) as required. Using the above state and measurements, one can readily obtain the following probabilities for the corresponding outcomes $a_i$. 
\begin{subequations}
\begin{equation}
\label{eq: KCBSpr1}
    P(a_2=1,a_1=1|\psi)=0,
\end{equation}
\begin{equation}
\label{eq: KCBSpr2}
    P(a_3=0,a_2=0|\psi)=0,
\end{equation}
\begin{equation}
\label{eq: KCBSpr3}
    P(a_4=1,a_3=1|\psi)=0,
\end{equation}
\begin{equation}
\label{eq: KCBSpr4}
    P(a_5=0,a_4=0|\psi)=0,
\end{equation}
\begin{equation}
\label{eq: KCBSpr5}
    P(a_5=0,a_1=1|\psi)>0.
\end{equation}
\end{subequations}
The proof of quantum contextuality proceeds by showing (through a logical argument) that in any non-contextual hidden variable model, the first four of the above equations would necessarily imply that $P(a_5=0,a_1=1|\psi)=0$ which contradicts the quantum observation embodied in the last equation.

Now, in an experimental test of such a contextuality scenario, each measurement context i.e., set of compatible measurements would be measured on one copy of the initial state among an ensemble of identical preparations. For the current example this can be achieved by preparing an ensemble $\ket{\psi}\otimes\ket{\psi}\otimes\ket{\psi}\otimes...$ and measuring one of the 5 measurement contexts (chosen freely) $\{(X_1,X_2),(X_2,X_3),(X_3,X_4),(X_4,X_5),(X_5,X_1)\}$ on each copy of the state to estimate the joint probability of the two outcomes in each context. For a FR-like multi-agent paradox however, we would require a way of implementing all 5 measurements on a single copy of $\psi$ by allowing agents to manipulate the memories of other agents using quantum operations to effectively ``undo'' their measurements.

Below, we present such a set-up consisting of 5 agents $\{A_i\}_{i=1}^5$, 5 memory systems, one per agent $\{M_i\}_{i=1}^5$ and a single additional system $S$ which they measure and store the resulting outcome in their respective memories. The 5 agents will each apply the 5 measurements of \cite{Cabello2013} for the KCBS contextuality scenario described above, on the state (also that of the KCBS scenario) of a single system $S$. Thus the resulting multi-agent setup will have default setting-conditioned predictions equivalent to  Equations~\eqref{eq: KCBSpr1}-\eqref{eq: KCBSpr5}, and allow agents to arrive the following paradoxical chain when reasoning using the assumptions of \Cref{def: paradox} after post-selecting on the outcomes $a_5=0,a_1=1$. 
\begin{equation}
\label{eq: 5chain}
    a_1=1\land a_5=0\Rightarrow a_4=1 \Rightarrow a_3=0\Rightarrow a_2=1\Rightarrow a_1=0.
\end{equation}
This chain corresponds to a Liar cycle of length 5. The schematic and circuit diagram for our KCBS-based multi-agent setup are given in Figures~\ref{fig:KCBS_scheme} and~\ref{fig:KCBS_circuit} respectively. In the following, we provide an overview of the proposed multi-agent set-up and the main steps of the associated protocol. Further details of the protocol along with an explicit demonstration of a multi-agent paradox (according to \Cref{def: paradox}) in this set-up are given in \Cref{appendix: kcbs}.
  
\begin{figure}[!ht]
     \centering
     \includegraphics[scale=0.75]{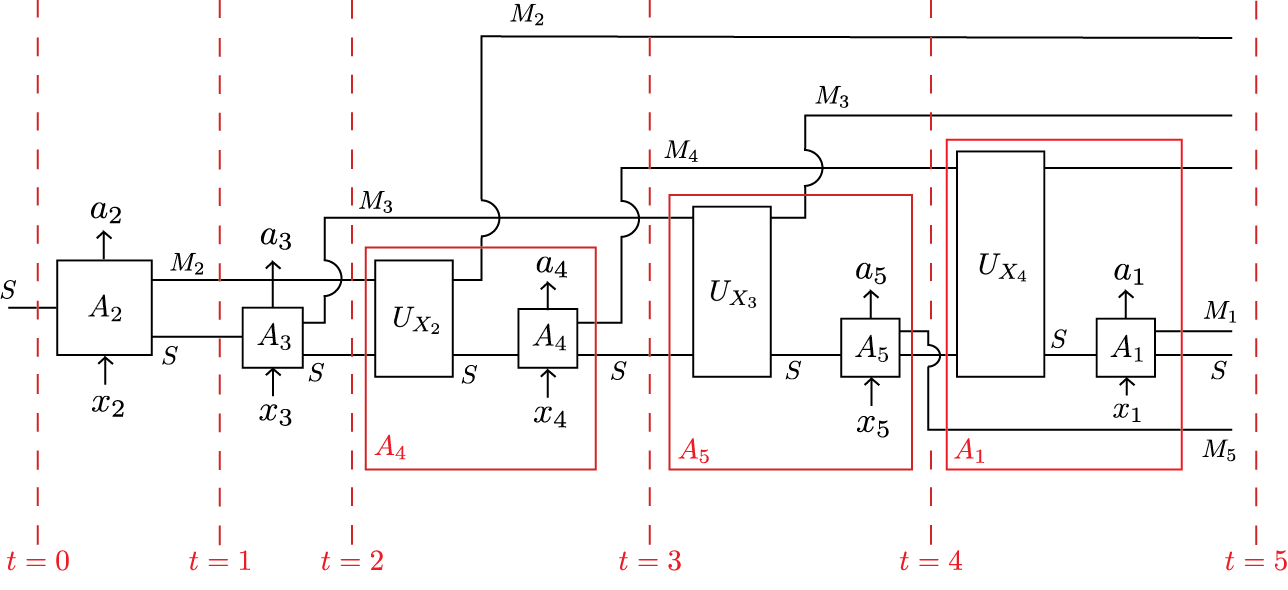}
     \caption{{\bf Circuit schematic for our protocol for an Wigner's Friend type multi-agent paradox using the KCBS contextuality scenario.}}
     \label{fig:KCBS_scheme}
\end{figure}
 
\begin{figure}
    \centering
    \begin{subfigure}{0.9\textwidth}
    \centering
  \includegraphics[scale=1.0]{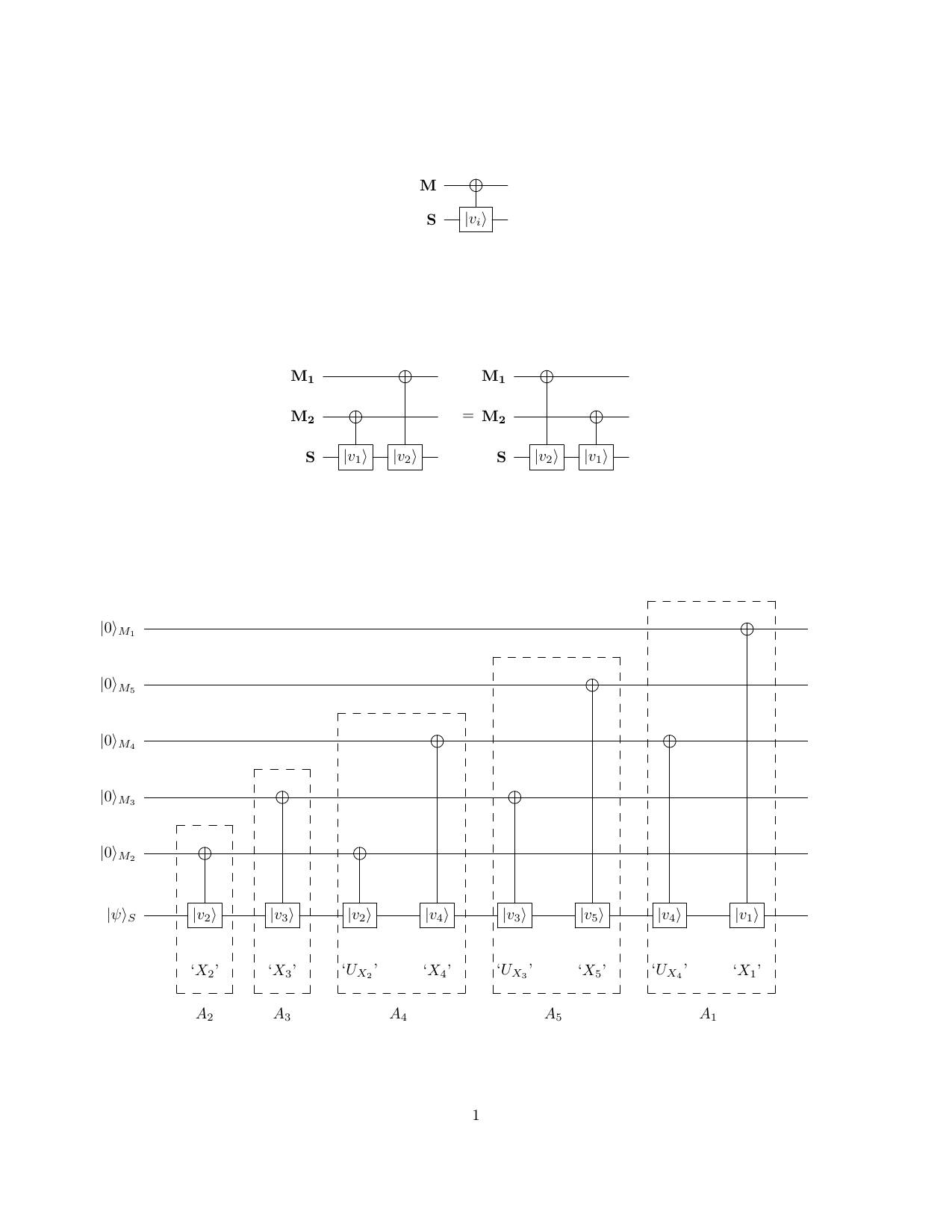}
    \caption{{ \bf  CNOT gate in the observable basis.} Denotes CNOT gate, controlled in the basis of $X_i$: $CNOT = \ket{v_i}\bra{v_i}_S\otimes(\sigma_x)_M + (\id_S - \ket{v_i}\bra{v_i}_S) \otimes \id_M$. The memory qubit $M$ is flipped if the control qubit is in the state $\ket{v_i}_S$ (defined in equations~\eqref{eq: KCBSmmt1}-\eqref{eq: KCBSmmt5}), and stays the same otherwise.}
        \label{fig:cnot-kcbs}
    \end{subfigure}
    \\
    \begin{subfigure}{0.9\textwidth}
    \centering
   \includegraphics[scale=1.0]{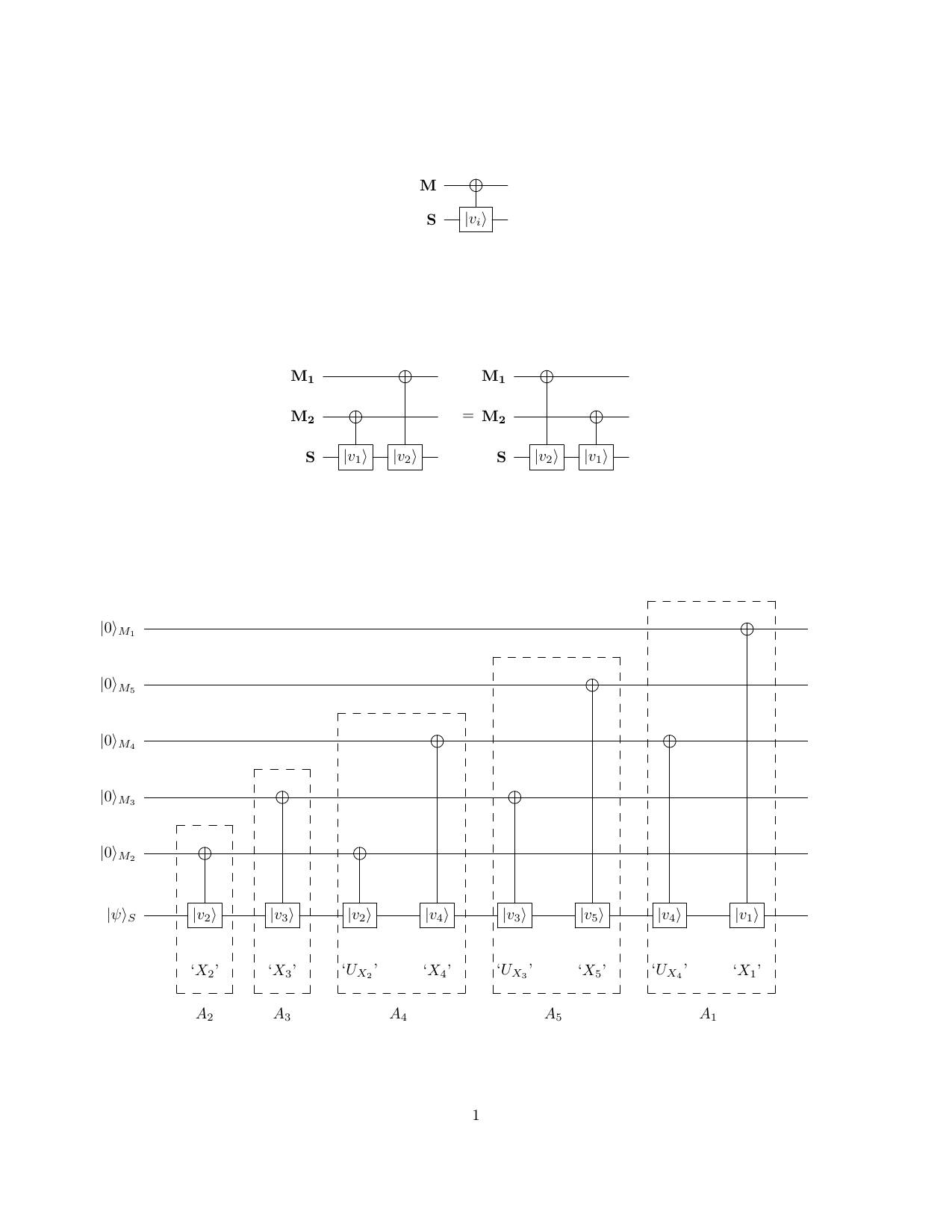}
    \caption{{ \bf  Circuit reduction rule.} Two memory updates commute iff they correspond to orthogonal measurements $\inprod{v_1}{v_2} =0$.}
        \label{fig:cnot-commute}
    \end{subfigure}
    \\
    \begin{subfigure}{0.9\textwidth}
    \includegraphics[scale=0.9]{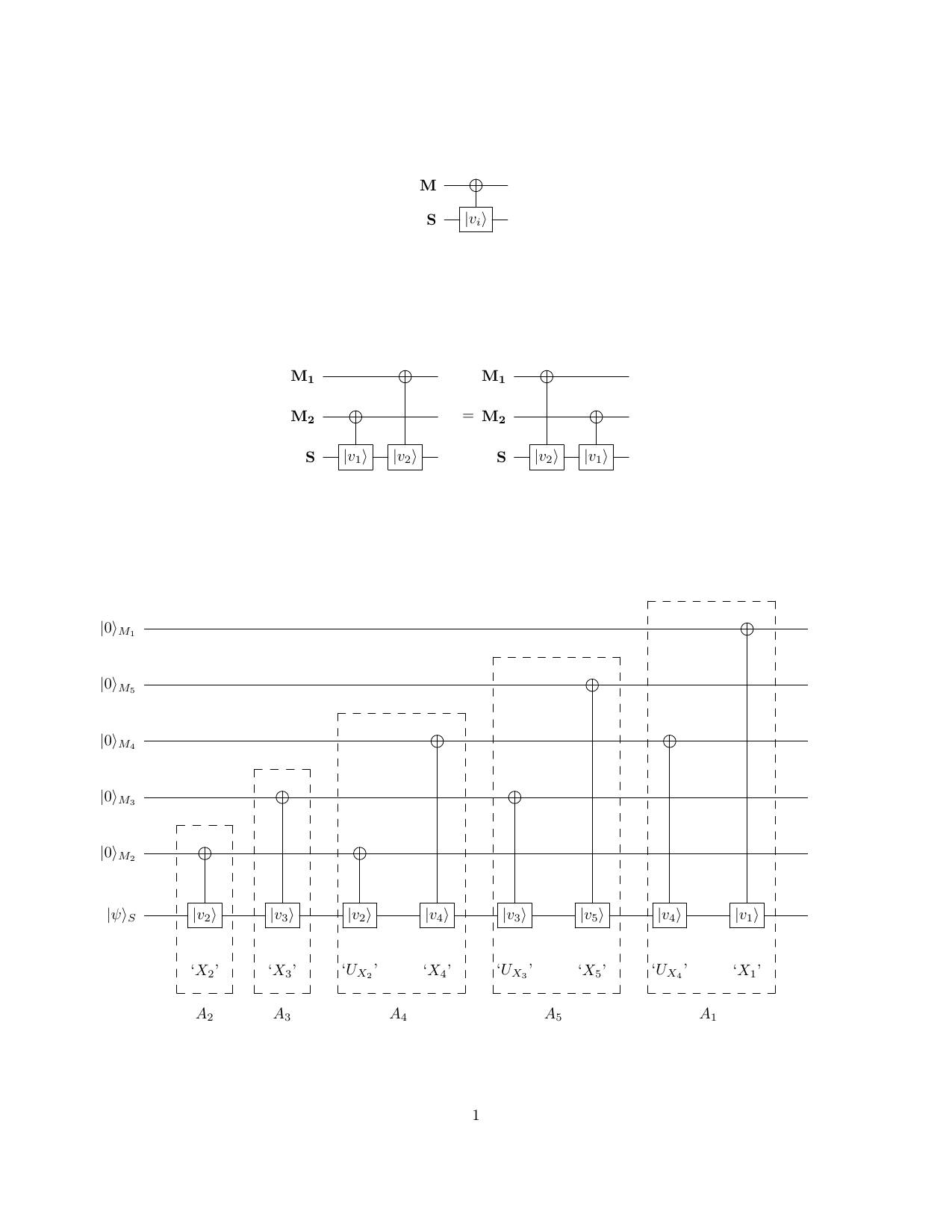}
    \caption{{ \bf Unitary circuit associated to our Wigner's Friend type paradox based on the KCBS contextuality scenario.} This circuit represents all the projective measurements of the set-up in their unitary description (i.e., setting 0 description), along with the additional unitaries. The operations $U_{X_2}$, $U_{X_3}$ and $U_{X_4}$ are designed to ``undo'' the measurements $X_2$, $X_3$ and $X_4$ respectively. }
        \label{fig:fr-kcbs}
    \end{subfigure}
    \caption{{\bf Constructing a  Wigner's Friend type multi-agent paradox paradox from KCBS contextuality scenario.}}
    \label{fig:KCBS_circuit}
\end{figure} 

\paragraph{Main information-processing steps of the setup: }

\begin{itemize}
    \item[$t=0$] Agent $A_2$ performs the measurement $X_2$ on the system $S$. 
    \item[$t=1$] Agent $A_3$ performs the measurement $X_3$ on the system $S$.
    \item[$t=2$] Agent $A_4$ performs a measurement on the joint system $SM_2$ that is operationally identical to the measurement $X_4$ on $S$. $A_4$'s measurement is decomposed as a unitary $U_{X_2}$ acting on $SM_2$ followed by a projective measurement on $S$.
    \item[$t=3$] Agent $A_5$ performs a measurement on the joint system $SM_3$ that is operationally identical to the measurement $X_5$ on $S$. $A_5$'s measurement is decomposed as a unitary $U_{X_3}$ acting on $SM_3$ followed by a projective measurement on $S$.
    \item[$t=4$] Agent $A_1$ performs a measurement on the joint system $SM_4$ that is operationally identical to the measurement $X_1$ on $S$. $A_1$'s measurement is decomposed as a unitary $U_{X_4}$ acting on $SM_4$ followed by a projective measurement on $S$.
    \item[$t=5$] Agents $A_5$ and $A_1$ (whose memories have not been tampered with) announce their outcomes. If the post-selection condition $a_1=1\land a_5=0$ is met, they halt the experiment and if not, the protocol is repeated until the post-selection succeeds (which it will with non-zero probability).
\end{itemize}

One can see that the above set-up now involves 5 measurements, all of which act on $S$ alone, along with 3 unitaries $U_{X_2}$, $U_{X_3}$ and $U_{X_4}$ acting on $SM_2$, $SM_3$ and $SM_4$ respectively. The 5 measurements on $S$ are precisely those of the KCBS scenario, given by equations~\eqref{eq: KCBSmmt1}-\eqref{eq: KCBSmmt5}, while the unitaries are specified further in \Cref{fig:KCBS_circuit}. We defer further details on the instantiation of settings and derivation of the multi-agent paradox in this set-up to \Cref{appendix: kcbs}.

\section{Structure and characterisations of multi-agent paradoxes}
\label{sec:character}
In this section, we present results on various properties of multi-agent paradoxes in physical theories, staring by deriving properties that hold in any theory, and then presenting properties which are applicable specifically when the theory in question is quantum (noting examples of violations in beyond-quantum theories).

\subsection{General theories}
\label{sec: structure_general}

\paragraph{Multi-agent paradoxes and liar cycles.} The following lemma on the structure of multi-agent paradoxes immediately follows from the proof of \Cref{theorem: main} (in particular from \Cref{eq: proof_liarcycle} established in this proof, along with the known fact that every liar cycle of statements corresponds to a cyclic reference graph, as discussed in \Cref{sec:background}). Note that for the purposes of the reference relation graph construction and simplicity, we label the nodes of the graph by atomic statements about outcomes (which can refer to a set of outcomes $\vec{a}_j$ produced by a set of agents, \Cref{def:language_v2}); two such vertices are related by a reference relation (a directed edge) if and only if there is an atomic inference connecting these two atomic statements and the agents producing them are connected by a trust relation. Alternatively, if we are additionally considering a scenario where each atomic statement involves the outcome of exactly one agent, we can also associate the vertices of the graph to agents directly.

\begin{restatable}[Structure of a multi-agent paradox]{lemma}{ma-paradox}
\label{lemma:multi-agent-paradox}
In every multi-agent paradox (\Cref{def: paradox}), the associated contradiction can be reduced to a half Liar's cycle type chain of statements between compatible sets of outcomes i.e., the following structure, and this implies a cyclic reference graph. 
\begin{equation}
\label{eq: structure_paradox}
     \vec{a}_j=\vec{v}_j  \Rightarrow \vec{a}_{l_1}=\vec{v}_{l_1}\Rightarrow \vec{a}_{l_2}=\vec{v}_{l_2}\Rightarrow\dots \Rightarrow   \vec{a}_{l_q}=\vec{v}_{l_q} \Rightarrow  \vec{a}_j=\neg \vec{v}_j,  
\end{equation}
where $ \vec{a}_j$, $ \vec{a}_{l_1}$,...,$\vec{a}_{l_q}$ are arbitrary disjoint subsets of outcomes,and $ \vec{a}_j\cup  \vec{a}_{l_1}$ and well as $ \vec{a}_{l_k}\cup  \vec{a}_{l_{k+1}}$ for $k\in \{1,...,p-1\}$ are all compatible sets of outcomes. 
\end{restatable}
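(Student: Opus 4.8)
The plan is to unpack \Cref{def: paradox} and show that the negation of its four assumptions, when combined, must collapse into a single cyclic chain of implications between atomic outcome statements. I would start by invoking the first three assumptions to strip away the setting labels entirely: by the argument already given after \Cref{def: paradox}, compatible agents applying {\bf common knowledge}, {\bf reasoning about compatible agents}, and {\bf setting-independence} can replace every $\phi_{\vec{s}}$ with its setting-unlabelled version $\phi$, so from here on all statements are atomic outcome statements $\vec{a}_j=\vec{v}_j$ or atomic inferences $(\vec{a}_l=\vec{v}_l)\Rightarrow(\vec{a}_j=\vec{v}_j)$ where all outcomes appearing in any such statement come from a compatible set of agents.

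Next, I would analyze what a contradiction must look like. Since the paradox consists of an inability to simultaneously satisfy the four assumptions, and assumptions 1--3 are constructive (they tell us what reasoning is permitted), the failure must occur at assumption 4, {\bf non-contradictory outcomes}: some agent $A$ derives $K_A(\vec{a}_j=\vec{v}_j \land \vec{a}_j=\neg\vec{v}_j)$ for some outcome set $\vec{a}_j$ and value assignment $\vec{v}_j$. By the Distribution Axiom (\Cref{axiom:distribution}) and the Trust relation (\Cref{def:trust}), the only way $A$ can know both $\vec{a}_j=\vec{v}_j$ and $\vec{a}_j=\neg\vec{v}_j$ is to have assembled two inferential chains starting from commonly-known facts and composing atomic inferences, each link of which connects outcome sets of compatible agents (by assumption 2, trust and hence composition is only licensed between compatible agents, and the language only contains inferences over compatible outcome sets). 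One of these chains yields $\vec{a}_j=\vec{v}_j$ and the other yields $\vec{a}_j=\neg\vec{v}_j$; concatenating them (reversing one is not needed since we only need a half-cycle) gives a single directed chain
\begin{equation*}
\vec{a}_j=\vec{v}_j \Rightarrow \vec{a}_{l_1}=\vec{v}_{l_1} \Rightarrow \dots \Rightarrow \vec{a}_{l_q}=\vec{v}_{l_q} \Rightarrow \vec{a}_j=\neg\vec{v}_j,
\end{equation*}
which is exactly the half Liar's cycle structure of \eqref{eq: structure_paradox}, with each consecutive pair $\vec{a}_{l_k}\cup\vec{a}_{l_{k+1}}$ (and the endpoints with $\vec{a}_j$) being compatible sets of outcomes because every atomic inference used is, by assumption 2, between compatible outcome sets. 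I would note that the precise bookkeeping of how the two chains meet — ensuring that after collapsing repeated atoms and reconciling value assignments on overlapping outcomes the result is genuinely a single directed cycle and not merely a tree with two leaves — is where the main care is needed; this is essentially the content of \eqref{eq: proof_liarcycle} referenced in the lemma statement, and I would lean on it directly.

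Finally, the implication ``this structure forces a cyclic reference graph'' follows from the background material of \Cref{subsec:epistemic-general}: by \Cref{def:rfg}, each atomic outcome statement is a vertex and each atomic inference between trust-connected agents is a directed edge, so the chain above traces a directed walk that returns to (a negation of) its starting vertex $\vec{a}_j$; identifying $\vec{a}_j=\vec{v}_j$ and $\vec{a}_j=\neg\vec{v}_j$ as referring to the same underlying outcome set means the reference graph contains a directed cycle, matching the known characterization (\cite{Jongeling2002,Rabern2013,Beringer2017}) that a finite paradoxical chain must contain such a cycle — indeed it is literally a Liar's cycle as in \Cref{fig:liar-cycle}. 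The main obstacle I anticipate is not any deep step but rather the careful argument that the two one-directional inference chains produced by the two contradictory conclusions can always be spliced into the single canonical half-cycle form \eqref{eq: structure_paradox} without loss of generality — handling the cases where the chains share intermediate atoms, or where an agent's knowledge is built from sub-chains reasoned ``from another agent's viewpoint'' via nested $K_iK_j$ — and this is precisely where I would defer to the detailed construction in the proof of \Cref{theorem: main}.
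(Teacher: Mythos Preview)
Your proposal is correct and follows essentially the same route as the paper's own argument: the paper states that the lemma ``immediately follows from the proof of \Cref{theorem: main} (in particular from \Cref{eq: proof_liarcycle} established in this proof, along with the known fact that every liar cycle of statements corresponds to a cyclic reference graph),'' which is precisely the scaffolding you describe. The one small simplification the paper makes that you do not quite exploit is that it takes $\vec{a}_j=\vec{v}_j$ to be the value actually \emph{observed} in a run (an atomic outcome statement, not the endpoint of a derived chain), so only a single chain to $\vec{a}_j=\neg\vec{v}_j$ is needed --- your ``two chains, concatenated'' phrasing is slightly more cumbersome, but you already note it is where the care lies and correctly defer to \eqref{eq: proof_liarcycle}.
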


Notice that \Cref{eq: structure_paradox} is a generalisation of a half-Liar's cycle chain (i.e., one of the chains of \Cref{eq: liar}), the sets of outcomes $\vec{a}_j,\vec{a}_{l_1},...,\vec{a}_{l_q}$ play the role of the statements $s_1,...,s_n$ and the value assignments $\vec{v}_k$ generalise the binary values. For this to be a full Liar's cycle, the following chain would also need to be derivable from the same multi-agent set-up, which as we will see in \Cref{thm:symmetricB}, is not always possible.

\begin{equation}
     \vec{a}_j=\neg \vec{v}_j  \Rightarrow \vec{a}_{l_1}=\neg \vec{v}_{l_1}\Rightarrow \vec{a}_{l_2}=\neg \vec{v}_{l_2}\Rightarrow\dots \Rightarrow   \vec{a}_{l_q}=\neg \vec{v}_{l_q} \Rightarrow  \vec{a}_j= \vec{v}_j. 
\end{equation}
\begin{figure}[t]
\centering
\includegraphics[scale=0.35]{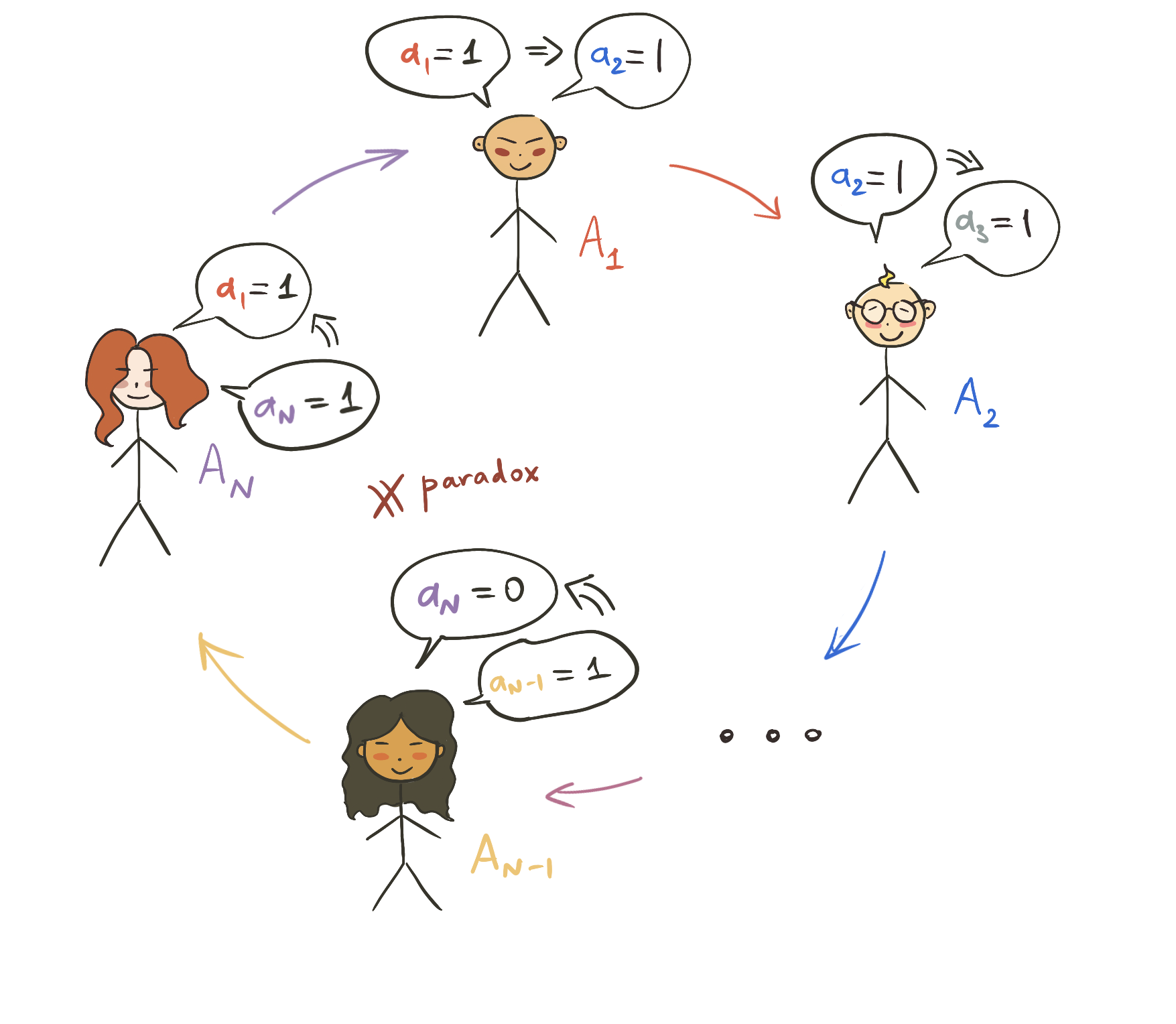}
\caption{{ \bf  The minimal structure of the physical multi-agent paradox.} Every Wigner's Friend type multi-agent paradox in a physical theory (\Cref{def: paradox}) can be reduced to a half Liar's cycle type chain of statements (\Cref{lemma:multi-agent-paradox}).}
\label{fig:epistemic-quantum}
\end{figure}
\paragraph{Absence of paradoxes based on Yablo type chains.}

As we have seen in \Cref{subsec:epistemic-general}, the Yablo’s paradox requires an infinite chain of statements. In this work, we have considered statements associated with agents’ measurement outcomes and only scenarios with a finite number of agents and outcomes. Thus it would seem that we cannot have Yablo type paradoxes by construction. However, one may ask whether we could have a finite paradox by building on a finite Yablo-type chain and combining it with other statements (which could introduce cycles in the reference graph, that the original Yablo graph of \cref{fig:yablo} does not have). The following theorem shows that this is not possible in any theory.

\begin{restatable}[No paradox building on finite Yablo chains]{theorem}{yabloB}
\label{thm: yabloB}
Consider a multi-agent set-up with a finite number $N$ of agents. Then no theory can contain a multi-agent paradox (\Cref{def: paradox}) in such a set-up where the set of statements involved in deriving the paradox includes a Yablo type chain (\Cref{eq: Yablo_statements}) for the $N$ agents. 
\end{restatable}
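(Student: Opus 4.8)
The idea is to combine the structural characterization of multi-agent paradoxes (\Cref{lemma:multi-agent-paradox}) with the fact that any Yablo-type chain on finitely many statements is consistently satisfiable, and therefore cannot by itself supply the contradiction needed for a paradox. Suppose, for contradiction, that there is a theory $\mathbb{T}$ and a multi-agent setup $\mathcal{MA}$ with $N$ agents admitting a multi-agent paradox in the sense of \Cref{def: paradox}, and suppose moreover that the set of statements used in deriving that paradox is exactly (or is contained in) a finite Yablo-type chain for the $N$ agents, i.e.\ the atomic inferences available have the form \eqref{eq: Yablo_statements} with $i,j \in \{1,\dots,N\}$: $s_i = 1 \Rightarrow s_j = 0$ for all $j > i$, and $s_i = 0 \Rightarrow \exists j>i,\ s_j = 1$. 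By \Cref{lemma:multi-agent-paradox}, any multi-agent paradox reduces to a half-Liar's-cycle chain of the form \eqref{eq: structure_paradox}, which requires a directed cycle in the associated reference relation graph.

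First I would make precise what the reference graph of a finite Yablo-type chain looks like: as noted in \Cref{subsec:epistemic-general} (see \cref{fig:yablo}), the implications $s_i = 1 \Rightarrow s_j = 0$ induce directed edges only from $i$ to strictly larger indices $j>i$, and the existential clauses $s_i = 0 \Rightarrow \exists j>i,\ s_j=1$ likewise only reference strictly later statements. Hence the reference graph is a subgraph of the transitive tournament on $\{1,\dots,N\}$ with all edges pointing from smaller to larger index — in particular it is a directed acyclic graph (a strict partial order refinement). Since it is acyclic, it admits no directed cycle, and therefore, by \Cref{lemma:multi-agent-paradox}, it cannot be the reference graph of a multi-agent paradox. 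This already yields the claim if ``the set of statements involved in deriving the paradox'' is taken to consist solely of Yablo-type inferences.

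The subtlety — and the step I expect to be the main obstacle — is handling the possibility that the paradox-deriving statement set \emph{contains} a finite Yablo chain as a sub-collection but also contains other statements (atomic outcome statements, or additional atomic inferences) that could close a cycle through the acyclic Yablo skeleton. Here I would argue that the finite Yablo chain, being consistently satisfiable on its own (set $s_N=1$ and $s_i = 0$ for all $i<N$, as observed right after \eqref{eq: Yablo_statements}), cannot be the source of the inconsistency; any contradiction must come from the additional statements together with at most the ``forward'' Yablo edges. But by \Cref{lemma:multi-agent-paradox} the contradiction must take the form of a cyclic chain \eqref{eq: structure_paradox}, and every directed edge in that cycle is a trust-linked atomic inference between compatible sets of outcomes. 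If all inference-edges among the $N$ Yablo agents point strictly forward (smaller index to larger), then no directed cycle can be formed using only those edges; any cycle must traverse a ``backward'' edge, which by hypothesis is not a Yablo statement. One then argues that removing all Yablo-only edges leaves a reference graph that must still contain the cycle (since the Yablo edges alone are acyclic and cannot participate in closing it in a way that the non-Yablo edges don't already), so the contradiction is in fact derivable without the Yablo chain at all — contradicting the stipulation that the Yablo chain is among the statements \emph{involved in deriving} the paradox, or else showing the ``Yablo chain'' is not genuinely of Yablo type in the $N$-agent setup. Making this last reduction airtight — pinning down exactly what ``involved in deriving the paradox'' means relative to the minimal cyclic chain of \Cref{lemma:multi-agent-paradox}, and ruling out degenerate cases where a single non-Yablo edge plus the forward Yablo edges conspire — is the delicate part; I would handle it by invoking \eqref{eq: proof_liarcycle} from the proof of \Cref{theorem: main} to extract the minimal cycle and then observing that its edge set, restricted to Yablo-type inferences, is a forward-only hence acyclic set, so the cycle's existence does not depend on the Yablo statements.
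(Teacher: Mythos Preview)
Your approach via acyclicity of the Yablo reference graph is genuinely different from the paper's, and it has a real gap in the ``includes'' case that you yourself flag as delicate. The step where you assert that ``removing all Yablo-only edges leaves a reference graph that must still contain the cycle'' is simply false: take $N=3$ with forward Yablo edges $1\to 2$, $2\to 3$, $1\to 3$ and a single additional non-Yablo backward edge $3\to 1$. The cycle $1\to 2\to 3\to 1$ uses two Yablo edges essentially; deleting them leaves only the lone edge $3\to 1$ and no cycle. So nothing in your argument rules out a paradox whose minimal Liar cycle (from \Cref{lemma:multi-agent-paradox}) mixes Yablo forward edges with one extra backward inference. Your fallback claim that the paradox would then be derivable \emph{without} the Yablo statements, contradicting that they are ``involved'', does not follow either.

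The paper's proof sidesteps all of this by exploiting a structural feature of the language (\Cref{def:language_v2}) that you do not use. The first Yablo implication is not a bundle of pairwise edges but a \emph{single} atomic inference
\[
a_{\pi(1)}=1 \ \Rightarrow\ a_{\pi(2)}=0 \land \cdots \land a_{\pi(N)}=0,
\]
whose right-hand side is an atomic outcome statement about \emph{all} $N$ outcomes at once. The \textbf{reasoning about compatible agents} assumption in \Cref{def: paradox} then forces the entire set $\{a_1,\dots,a_N\}$ of measurements to be compatible (jointly measurable). That immediately gives a global joint distribution, hence a non-contextual empirical model, and \Cref{theorem: main} rules out any multi-agent paradox in the setup --- irrespective of whatever additional non-Yablo statements are present. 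This is both shorter and handles the ``includes'' case uniformly; the key idea you are missing is that a single Yablo-type atomic inference already references all $N$ outcomes and thereby collapses the context structure.
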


There is scope to generalise these considerations to certain infinite settings. One may then consider whether the original Yablo's paradox involving an acyclic but infinite reference graph (without adding any additional statements to induce cyclicity or direct self-reference) can be realised in an appropriate extension of multi-agent setups to infinite agents. Although we do not formalise and study this here, we discuss it in further detail and present a conjecture. 
One model of an infinite agent scenario would be through a protocol that extends in time (indefinitely) with a new agent $A_i$ acting at each new (discrete) time step $t_i$, and taking the limit of infinitely many time steps. This could be done by considering a sequence of multi-agent set-ups $\mathcal{MA}_1,…,\mathcal{MA}_k,….$ where the subscript denotes the number of agents in the set-up and each $\mathcal{MA}_i$ is obtained from the previous $\mathcal{MA}_{i-1}$ by simply adding an agent $A_i$ acting at a time $t_i>t_{i-1}$, and where $\mathcal{MA}_i$ otherwise includes the same description for all agents $A_1,…,A_{i-1}$ as $\mathcal{MA}_{i-1}$. This gives a sequence of outcomes $\{a_i\}_i$ that can be described as random variables each associated with a time step $t_i$, which would constitute a stochastic process. 

We conjecture that Yablo-type paradoxes involving an infinite acyclic reference graph would not occur in multi-agent setups involving such infinitely extended chain of agents, in any causally well-behaved physical theory. Let us further elaborate on the motivation for this conjecture. Suppose a Yablo-like infinite chain were possible, where the statements $s_i$ of the Yablo chain are instantiated through outcomes $a_i$ using the same ordering between the two i.e., for all $i\geq 1$
\begin{align}
\begin{split}
    &a_{i}=1 \implies a_{j}=0 \quad \forall j>i,\\
&a_{i}=0 \implies \exists j>i,\quad a_{j}=1.
\end{split}
\end{align}

Now if we impose minimal consistency conditions on the theory that ensures  that we can recover the predictions of $\mathcal{MA}_1,…,\mathcal{MA}_{i-1}$ from $\mathcal{MA}_i$ by ``ignoring’’ the latest agents’ action, then applying the arguments of \Cref{thm: yabloB} to every  $\mathcal{MA}_{i}$ for a finite $i$ we would conclude that the predictions of each such set-up is explained by a single joint distribution on all outcomes $a_1,…,a_i$ of that set-up where the marginals of the distribution recover the predictions of the previous set-ups in the sequence. For instance, this type of consistency condition is true in quantum theory and can be instantiated in general theories within frameworks such as process theories, operational probabilistic theories and generalised probabilistic theories through a causality condition \cite{coecke2016categoricalquantummechanicsi,Coecke_Kissinger_2017, Chiribella2010, dAriano2017, Henson_2014}, formalised by invoking a generalised notion of tracing out systems. Then applying the Kolmogorov extension theorem for classical stochastic processes \cite{Kolmogorov1933} and its generalisation to quantum and generalised probabilistic theories with well-defined causal order \cite{Milz_2020}, will allow to infer the existence of a well-defined probability measure on the infinite sequence of random variables (outcomes of measurements on classical or non-classical systems) that recovers all marginals on the finite sub-sequences. Finally, the existence of a well-defined global probability distribution over the outcomes of all the agents would imply the absence of any multi-agent paradox in this infinite case. Generalising our framework to model such infinite set-ups, extending our main theorem, \Cref{theorem: main} to this case and formally proving the above conjecture are left for future work.

\paragraph{No paradox with deterministic probabilities.}

Consider a chain of statements about $N$ disjoint sets of measurement outcomes $\{\vec{a}_i\}_{i=1}^N$, starting with a value assignment for $\vec{a}_1$ and ending with one for $\vec{a}_N$. Can a multi-agent paradox occur in a theory, involving such a chain and where the joint probability associated with the end points $\vec{a}_1$ and $\vec{a}_N$ is deterministic? So far none of the examples we know of in quantum theory or general theories have this property, and the following theorem establishes that this is generally impossible.

\begin{restatable}[No paradox with deterministic probabilities]{theorem}{postselect}
\label{theorem: postsel}
No theory admits a multi-agent paradox associated with the following chain of statements,
\begin{gather*}
\vec{a}_1=\vec{v}_1  \Rightarrow \vec{a}_2=\vec{v}_2 \Rightarrow \dots \Rightarrow   \vec{a}_N=\vec{v}_N,
\end{gather*}
where $\{\vec{a}_i\}_{i=1}^N$ are disjoint sets of outcomes with the sets of measurements associated with each $\vec{a}_i$ and $\vec{a}_{j}$ being compatible for $j=i+1 \text{ mod } N$, $\{\vec{v}_i\}_{i=1}^N$ are some value assignments for those outcomes, $N\geq 2$ and the end points of the chain have deterministic probability $P(\vec{a}_1=\vec{v}_1,\vec{a}_N=\neg \vec{v}_N)=1$.
\end{restatable}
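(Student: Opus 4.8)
The plan is to argue by contradiction: suppose such a multi-agent paradox exists with the stated chain and the deterministic endpoint condition $P(\vec a_1=\vec v_1,\vec a_N=\neg\vec v_N)=1$. By \Cref{lemma:multi-agent-paradox}, a multi-agent paradox always reduces to a half Liar's cycle, which here is precisely the closed chain obtained by adjoining the ``wrap-around'' inference $\vec a_N=\vec v_N\Rightarrow\vec a_1=\neg\vec v_1$ (or equivalently the step closing the cycle back to $\vec a_1$) to the given open chain $\vec a_1=\vec v_1\Rightarrow\dots\Rightarrow\vec a_N=\vec v_N$. The contradiction is then derived by an agent who, using the \textbf{reasoning about compatible agents} and \textbf{setting-independence} assumptions together with the Distribution and Trust Axioms, chains these implications. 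The key observation is that the deterministic endpoint hypothesis forces $\vec a_1=\vec v_1$ to be incompatible with $\vec a_N=\vec v_N$: since $P(\vec a_1=\vec v_1,\vec a_N=\neg\vec v_N)=1$, we have $P(\vec a_1=\vec v_1,\vec a_N=\vec v_N)=0$, so the atomic outcome statement $(\vec a_1=\vec v_1\land\vec a_N=\vec v_N)$ cannot belong to $\Sigma_{\mathcal{MA}}$ (by \Cref{def:language_v2}, which requires a strictly positive prediction).

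The next step is to show this makes the paradoxical chain vacuous rather than contradictory. Running the open chain from $\vec a_1=\vec v_1$ yields $\vec a_N=\vec v_N$ via the sequence of atomic inferences; but simultaneously $P(\vec a_1=\vec v_1,\vec a_N=\neg\vec v_N)=1$ gives the atomic inference $\vec a_1=\vec v_1\Rightarrow\vec a_N=\neg\vec v_N$ directly (since $P(\vec a_N=\neg\vec v_N\mid\vec a_1=\vec v_1)=1$, which is well-defined as $P(\vec a_1=\vec v_1)\geq P(\vec a_1=\vec v_1,\vec a_N=\neg\vec v_N)=1>0$). Hence an agent can derive both $K(\vec a_1=\vec v_1\Rightarrow\vec a_N=\vec v_N)$ and $K(\vec a_1=\vec v_1\Rightarrow\vec a_N=\neg\vec v_N)$. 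If additionally the statement $\vec a_1=\vec v_1$ itself were derivable, the agent would conclude $K(\vec a_N=\vec v_N\land\vec a_N=\neg\vec v_N)$, violating \textbf{non-contradictory outcomes}. The resolution hinges on showing that $\vec a_1=\vec v_1$ is \emph{not} a derivable atomic outcome statement: from $P(\vec a_1=\vec v_1,\vec a_N=\neg\vec v_N)=1$ we get $P(\vec a_1=\vec v_1)=1$, so actually $\vec a_1=\vec v_1$ \emph{is} in $\Sigma_{\mathcal{MA}}$. So the sharper argument is needed: the chain $\vec a_1=\vec v_1\Rightarrow\dots\Rightarrow\vec a_N=\vec v_N$ would then force $P(\vec a_N=\vec v_N)=1$ (propagating certainty, using that each intermediate prediction conditioned on the previous step is 1 and each preceding outcome has probability 1), but $P(\vec a_N=\neg\vec v_N)=1$ too — forcing $1=P(\vec a_N=\vec v_N)+P(\vec a_N=\neg\vec v_N)=2$, a contradiction. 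Therefore one of the intermediate atomic inferences in the chain must fail to be a valid member of $\Sigma_{\mathcal{MA}}$, meaning the assumed paradoxical chain cannot be constructed in the first place.

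Concretely, the argument I would write is: assume all $N$ atomic inferences $\vec a_i=\vec v_i\Rightarrow\vec a_{i+1}=\vec v_{i+1}$ hold (indices mod $N$, with the last being $\Rightarrow\vec a_1=\neg\vec v_1$). Starting from $P(\vec a_1=\vec v_1)=1$ (a consequence of the deterministic hypothesis), inductively propagate along the chain: $P(\vec a_i=\vec v_i)=1$ and $P(\vec a_{i+1}=\vec v_{i+1}\mid \vec a_i=\vec v_i)=1$ (the latter being the content of the atomic inference, and well-defined since $P(\vec a_i=\vec v_i)=1>0$) together give $P(\vec a_{i+1}=\vec v_{i+1})=1$ by the standard rules of conditional probability the theory is assumed to respect. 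Carrying this around the full cycle yields $P(\vec a_1=\neg\vec v_1)=1$, contradicting $P(\vec a_1=\vec v_1)=1$. Hence no such consistent family of inferences exists, so there is no multi-agent paradox of this form — in fact there is not even a well-defined chain of the required statements in $\Sigma_{\mathcal{MA}}$. I would also remark that this is the formal counterpart of the intuition that deterministic endpoints make post-selection impossible, connecting to the discussion in \Cref{sec: structure_quantum} on the role of post-selection.

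The main obstacle I anticipate is the careful bookkeeping of \emph{which} probabilities are conditioned on \emph{which} setting vectors, and ensuring that the ``propagation of certainty'' step is legitimate: the atomic inference $\vec a_i=\vec v_i\Rightarrow\vec a_{i+1}=\vec v_{i+1}$ corresponds by \Cref{def:language_v2} to $P(\vec a_{i+1}=\vec v_{i+1}\mid\vec a_i=\vec v_i,\vec s)=1$ for the appropriate default setting $\vec s$, and I must confirm (via the \textbf{setting-independence} assumption) that these conditional-probability manipulations compose coherently across different $\vec s$'s — exactly the kind of composition already justified in the discussion following \Cref{def: paradox}. A secondary subtlety is handling the case where the deterministic endpoint pair $(\vec a_1,\vec a_N)$ is itself one of the compatible pairs appearing as a link in the cycle (i.e. $N$ could be such that $\vec a_N$ and $\vec a_1$ are the compatible wrap-around pair), versus when it is a ``non-adjacent'' pair; I would check that the propagation argument goes through regardless, since it only uses the conditional structure of consecutive links plus the single deterministic constraint at the endpoints.
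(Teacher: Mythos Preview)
Your proposal is correct and follows essentially the same approach as the paper: use the deterministic hypothesis to extract $P(\vec a_1=\vec v_1)=1$, propagate certainty along the chain via $P(\vec a_{i+1}=\vec v_{i+1}\mid\vec a_i=\vec v_i)=1$ to obtain $P(\vec a_N=\vec v_N)=1$, and then contradict $P(\vec a_N=\neg\vec v_N)=1$ (equivalently, the paper phrases the final contradiction as $P(\vec a_1=\vec v_1,\vec a_N=\neg\vec v_N)=0$). The paper's proof is the clean version of your second paragraph---you can drop the exploratory detours of the first paragraph and the unnecessary wrap-around inference in the third, since the contradiction is already reached at $\vec a_N$ without closing the cycle.
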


\paragraph{Chain negation.}
Can we find multiple distinct chains leading to a paradox in a single multi-agent set-up? If we are given at least one, we can construct another one by simply considering its logical negation. This statement is justified by a following theorem, which is true in any theory not just the quantum case, as the proof follows purely from the rules of conditional probabilities.
\begin{restatable}[Negating an atomic inference]{theorem}{negation}
\label{claim:negating-inference}
Given an inference 
\begin{gather*}
\vec{a}=\vec{v}_a\Rightarrow \vec{b}=\vec{v}_b
\end{gather*}
where the measurements associated with the sets $\vec{a}$ and $\vec{b}$ of outcomes are compatible, the negation of the inference also holds:
\begin{gather*}
\neg (\vec{a}=\vec{v}_a \Rightarrow \vec{b}=\vec{v}_b) \ := \ (\vec{b}=\neg \vec{v}_b \Rightarrow \vec{a}=\neg \vec{v}_a)
\end{gather*}
\end{restatable}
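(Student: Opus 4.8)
The plan is to work directly from the definitions of atomic inference and the standard rules of conditional probability, translating the ``inference holds'' statement into an equivalent zero-probability statement and then manipulating marginals. By \Cref{def:language_v2}, the inference $\vec a = \vec v_a \Rightarrow \vec b = \vec v_b$ holds precisely when $P(\vec b = \neg \vec v_b,\, \vec a = \vec v_a) = 0$, where the joint probability is well-defined because the measurements associated with $\vec a$ and $\vec b$ are compatible (so by \Cref{def:compatibility_v2} there is a genuine joint distribution on the outcomes $\vec a \cup \vec b$, obtained from the primed jointly-measurable measurements applied to the initial state). The whole argument therefore takes place inside a single fixed joint distribution over the outcome set $\vec a \cup \vec b$.

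First I would observe that the event ``$\vec b = \neg \vec v_b$'' is the negation of the event ``$\vec b = \vec v_b$'', and similarly for $\vec a$; here $\neg$ on a vector of outcomes means ``not equal to the specified value assignment'' (consistent with the usage $P(\vec a_j = \neg \vec v_j, \vec a_l = \vec v_l | \vec s) = 0$ appearing in \Cref{def:language_v2}). Then the single hypothesis
$$P(\vec b = \neg \vec v_b,\, \vec a = \vec v_a) = 0$$
is manifestly symmetric in the two conjuncts: it equally says $P(\vec a = \vec v_a,\, \vec b = \neg \vec v_b) = 0$. Reading this back through \Cref{def:language_v2} in the other direction — conditioning on $\vec b = \neg \vec v_b$ instead of on $\vec a = \vec v_a$ — gives $P(\vec a = \vec v_a \mid \vec b = \neg \vec v_b) = 0$, hence $P(\vec a = \neg \vec v_a \mid \vec b = \neg \vec v_b) = 1$, which is exactly the atomic inference $\vec b = \neg \vec v_b \Rightarrow \vec a = \neg \vec v_a$. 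So the contrapositive is just the same zero-probability constraint viewed from the other side.

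The only genuine subtlety — and the step I would be most careful about — is the well-definedness of the conditional probability: passing from $P(\vec b = \neg \vec v_b,\, \vec a = \vec v_a)=0$ to $P(\vec a = \neg \vec v_a \mid \vec b = \neg \vec v_b)=1$ requires $P(\vec b = \neg \vec v_b) > 0$. This is where the hypotheses of the ambient setup enter: the inference being stated as an element of $\Sigma_{\mathcal{MA}}$ (\Cref{def:language_v2}) presupposes the relevant atomic outcome statements are themselves in the language, i.e. have nonzero probability; in particular $\vec b = \neg \vec v_b$ must have nonzero probability, since otherwise the original inference $\vec a = \vec v_a \Rightarrow \vec b = \vec v_b$ would be degenerate. (If one wants to be fully careful, one notes that $P(\vec b = \neg \vec v_b) = 0$ would make the negated inference vacuously true anyway, so the statement holds in all cases.) Assembling these pieces — compatibility gives the joint distribution, the defining zero-probability condition is symmetric, and the marginal $P(\vec b = \neg \vec v_b)$ is nonzero (or the claim is vacuous) — completes the proof; no theory-specific structure beyond the conditional-probability axioms already assumed of $\mathbb T$ is used.
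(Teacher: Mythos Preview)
Your proof is correct and follows essentially the same approach as the paper's: both reduce the claim to a routine conditional-probability manipulation on the joint distribution guaranteed by compatibility. Your version is in fact slightly more streamlined---you invoke the zero-joint-probability characterization from \Cref{def:language_v2} directly and observe its symmetry, whereas the paper starts from the conditional form $P(\vec b=\vec v_b\mid\vec a=\vec v_a)=1$ and expands $P(\vec a=\neg\vec v_a,\vec b=\neg\vec v_b)$ into four terms before simplifying; you also handle the degenerate case $P(\vec b=\neg\vec v_b)=0$ explicitly, which the paper's proof glosses over when dividing by that marginal.
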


The next corollary follows immediately from recursive application of the above theorem to an inference chain.
\begin{corollary}[Negating an inference chain]
    Given an inference chain where each implication relates outcomes of subsets of compatible measurements,
    \begin{gather*}
\dots \Rightarrow \vec{a}=\vec{v}_a\Rightarrow \vec{b}=\vec{v}_b \Rightarrow \vec{c}=\vec{v}_c\Rightarrow  \dots,
\end{gather*}
The negation of the inference chain also holds
\begin{gather*}
\dots \Rightarrow \vec{c}=\neg \vec{v}_c\Rightarrow \vec{b}=\neg \vec{v}_b \Rightarrow \vec{a}=\neg\vec{v}_a \Rightarrow \dots,
\end{gather*}
\end{corollary}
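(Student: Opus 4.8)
The plan is to treat the corollary as a straightforward inductive unfolding of the single-step result, \Cref{claim:negating-inference}, applied independently to each link of the chain. The first step is to observe that an inference chain of the displayed form is, by definition, nothing more than the conjunction of its constituent atomic inferences: writing the consecutive outcome-sets as $\dots,\vec{a},\vec{b},\vec{c},\dots$, the chain asserts that each two-statement implication $(\vec{a}=\vec{v}_a\Rightarrow\vec{b}=\vec{v}_b)$, $(\vec{b}=\vec{v}_b\Rightarrow\vec{c}=\vec{v}_c)$, and so on, holds as a valid atomic inference of the setup, i.e.\ lies in $\Sigma_{\mathcal{MA}}$ (\Cref{def:language_v2}). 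By hypothesis each such link relates outcomes of subsets of compatible measurements, so the compatibility precondition required by \Cref{claim:negating-inference} is satisfied separately for every link.

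Next I would apply \Cref{claim:negating-inference} to each link in isolation. This converts $(\vec{a}=\vec{v}_a\Rightarrow\vec{b}=\vec{v}_b)$ into its negation $(\vec{b}=\neg\vec{v}_b\Rightarrow\vec{a}=\neg\vec{v}_a)$, and $(\vec{b}=\vec{v}_b\Rightarrow\vec{c}=\vec{v}_c)$ into $(\vec{c}=\neg\vec{v}_c\Rightarrow\vec{b}=\neg\vec{v}_b)$, and so forth. Each of these negated inferences is again a valid atomic inference of the same setup, and hence can be used as a building block in its own right.

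The final step is reassembly. Reading the negated links in the reverse order to which they were produced, one checks that the consequent of each negated inference matches the antecedent of the next: the consequent $\vec{b}=\neg\vec{v}_b$ of the negated second link coincides with the antecedent of the negated first link, and likewise at every junction. Hence the negated links concatenate into the single chain $\dots\Rightarrow\vec{c}=\neg\vec{v}_c\Rightarrow\vec{b}=\neg\vec{v}_b\Rightarrow\vec{a}=\neg\vec{v}_a\Rightarrow\dots$, which is precisely the claimed negated chain. Formally this is an induction on the number of links: the base case is \Cref{claim:negating-inference} itself, and the inductive step appends one further link to the original chain, negates it via the theorem, and prepends the result to the already-negated sub-chain produced by the induction hypothesis.

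I expect no substantial obstacle here, since the single-link theorem does all the real work and the proof is purely combinatorial. The only points demanding care are bookkeeping: one must verify that the telescoping of antecedents and consequents is consistent at every junction (so that the reversed links genuinely chain together into a single inference chain), and that the compatibility hypothesis transfers link-by-link rather than being required globally. Both follow directly from the structure assumed in the corollary's statement, namely that each implication relates outcomes of subsets of compatible measurements.
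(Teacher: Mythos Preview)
Your proposal is correct and follows exactly the approach the paper takes: the paper states that the corollary ``follows immediately from recursive application of the above theorem to an inference chain,'' and you have simply spelled out that recursive application in detail, negating each link via \Cref{claim:negating-inference} and reassembling them in reverse order.
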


While the above theorem applies to general theories, it is useful to illustrate it with the well-known quantum example of FR where we had the following chain.
\begin{gather*}
    u = ok \Rightarrow b = 1 \Rightarrow a = 1 \Rightarrow w = fail.
\end{gather*}
We can negate this chain to obtain another chain, starting with the agent who was at the end point of the original chain:
\begin{gather*}
    w = ok \Rightarrow a = 0 \Rightarrow b = 0 \Rightarrow u = fail.
\end{gather*}
This also leads to a paradox if we post-select on (thought-)experimental runs with $u=w=ok$, as in the original argument of FR.

\paragraph{Link to extremal vertices in $n$-cycle scenarios.} An $n$-cycle scenario is a measurement scenario with a set $\{X_i\}_{i=1}^n$ of $n$ variables and where the contexts consist of each pair of adjacent variables i.e.,
\begin{equation}
  \mathcal{C}_{n-cyc}:=\{(X_1,X_2),(X_2,X_3),...,(X_{n-1},X_n),(X_n,X_1)\}=\{C_1,...,C_n\}.  
\end{equation}
An empirical model (\cref{def: emp_model}) would then specify for each $i$, a probability $e_{C_i}:=P(x_i,x_{i+1})$. Here we assume binary outcomes $x_i$ for each $X_i$ taking values in $\{0,1\}$. We can consider the following measure of correlations for each context $C_i$, that captures the expectation value
\begin{align}
\begin{split}
        <X_i,X_{i+1}>:= &P(x_i=x_{i+1})-P(x_i\neq x_{i+1})\\
        =&P(x_i=0,x_{i+1}=0)+P(x_i=1,x_{i+1}=1)\\-&P(x_i=0,x_{i+1}=1)-P(x_i=1,x_{i+1}=0)
\end{split}
\end{align}

Following \cite{Araujo_2013}, consider the function 
\begin{equation}
\label{eq: ncyc_omega}
  \Omega=\sum_{i=1}^n\gamma_i  <X_i,X_{i+1}>,\qquad   \gamma_i\in\{+1,-1\} \quad \forall i\in\{1,...,n\},
\end{equation}
where the number of $\gamma_i=-1$ is odd.
Different theories place different bounds on $\Omega$ (depending on the $\gamma_i$). A bound on $\Omega$ which holds in all classical theories (specifically, non-contextual hidden variable or NCHV theories, see \cite{Araujo_2013}) defines a contextuality inequality for the $n$-cycle scenario. Such contextuality inequalities for the $n$-cycle are derived in \cite{Araujo_2013}. For even $n$, the $n$-cycle can be understood as a bipartite Bell non-locality scenario with $n/2$ measurements per party and these would correspond to Bell inequalities in that case.  Notice that the maximum possible value of $\Omega$ for both the even and odd $n$ scenarios is $n$, which occurs when $\gamma_i=<X_i,X_{i+1}>$ for each $i$, these define the extremal vertices of the no-disturbance polytope (analogous to the non-signalling polytope in the case of Bell non-locality). For instance, the CHSH Bell scenario corresponds to an $n=4$ cycle, and the PR box is one such extremal vertex. Interestingly, the multi-agent paradox constructed using PR boxes in \cite{VNdR} is possible without any post-selection, in contrast to all known examples of quantum multi-agent paradoxes constructed based on an $n$-cycle scenario (e.g., Frauchiger-Renner's set-up which is a $4$-cycle as well as the KCBS or $5$-cycle based paradox we introduced in \Cref{sec:kcbs}, along with the similar one proposed in \cite{Walleghem2024-FR}).

We now define a sub-class of the multi-agent paradoxes allowed by \Cref{def: paradox}, which are possible without post-selection in a binary outcome $n$-cycle scenario and link them to such extremal vertices in general.

\begin{definition}[Post-selection free $n$-cycle paradox]
\label{def: paradox_no_ps}
Consider a multi-agent set-up with $n$ agents, each agent $A_i$ is associated with a measurement outcome $x_i$ taking binary values $\{0,1\}$, and only adjacent pairs of agents $(A_1,A_2)$, $(A_2,A_3)$,...,$(A_{n-1},A_n)$, $(A_n,A_1)$ are compatible (\Cref{def:compatibility_v2}). A post-selection free $n$-cycle paradox is any multi-agent paradox in such a set-up where agents reasoning using the assumptions of \Cref{def: paradox} obtain the following two chains for some $i\in \{1,...,n\}$.
\begin{align}
\label{eq: no_ps_paradox}
    \begin{split}
        &x_i=0\Rightarrow x_{i+1}=v_{i+1} \Rightarrow x_{i+2}=v_{i+2} ...\Rightarrow x_{i-1}=v_{i-1}\Rightarrow x_{i}=1\\
        &x_i=1\Rightarrow x_{i+1}=v'_{i+1} \Rightarrow x_{i+2}=v'_{i+2} ...\Rightarrow x_{i-1}=v'_{i-1}\Rightarrow x_{i}=0,
    \end{split}
\end{align}
    where for each $j\neq i\in \{1,...,n\}$, $v_j\neq v'_j$ and $v_j,v'_j\in\{0,1\}$ and the addition in the above equation is to be taken modulo $n$.
\end{definition}

\begin{restatable}[Post-selection free $n$-cycle paradox and extremal vertices]{theorem}{ncycleParadox}
\label{theorem: ncycle_paradox}
A physical theory $\mathbb{T}$ admits a post-selection free $n$-cycle paradox if and only if the theory contains an extremal vertex of the non-disturbance polytope of the associated $n$-cycle scenario.
\end{restatable}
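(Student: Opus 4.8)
The plan is to prove both directions of the equivalence, treating the ``extremal vertex $\Rightarrow$ paradox'' direction first since it is more constructive. The key observation is that the two chains in \Cref{eq: no_ps_paradox} of \Cref{def: paradox_no_ps} encode precisely the logical statements $x_j = f(x_i)$ and $x_j = \neg f(x_i)$ propagating around the $n$-cycle, and these are equivalent to a set of perfect correlations/anticorrelations between adjacent pairs $(x_j, x_{j+1})$. Concretely, each implication $x_j = v_j \Rightarrow x_{j+1} = v_{j+1}$ (together with its negation, which holds automatically by \Cref{claim:negating-inference} since adjacent agents are compatible) forces the bipartite distribution $P(x_j, x_{j+1})$ to be supported only on the two outcome pairs consistent with a fixed relation $x_{j+1} = x_j \oplus c_j$ for some bit $c_j \in \{0,1\}$; i.e., $\langle X_j, X_{j+1}\rangle = (-1)^{c_j}$. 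The requirement that the chain closes up inconsistently (starting from $x_i=0$ and returning $x_i=1$) is exactly the condition that $\sum_j c_j$ is odd, i.e.\ $\prod_j (-1)^{c_j} = -1$, which is the defining condition for the extremal vertex where $\Omega = \sum_i \gamma_i \langle X_i, X_{i+1}\rangle = n$ with $\gamma_i = (-1)^{c_i}$ and an odd number of $\gamma_i = -1$. So I would first show that an empirical model sitting at such an extremal vertex has exactly the deterministic-looking adjacent correlations needed to write down both chains of \Cref{eq: no_ps_paradox} as valid atomic inferences (\Cref{def:language_v2}), with \Cref{eq: KCBSpr5}-style strictly-positive endpoint conditions being unnecessary precisely because \emph{every} outcome assignment is locally allowed — this is the strong-contextuality / no-post-selection feature discussed after \Cref{def: logical_contextuality}.

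The second and more delicate step of the forward direction is realising these correlations \emph{within a multi-agent setup} in the theory: one must exhibit systems, memories, measurements and an initial state so that the default predictions (\Cref{def: default_predictions}) of the multi-agent setup reproduce the extremal-vertex empirical model on the compatible pairs, and so that adjacent agents genuinely satisfy \Cref{def:compatibility_v2}. Here I would invoke the construction already sketched for the KCBS example in \Cref{sec:kcbs} and \Cref{appendix: kcbs}: take the $n$ systems (or a single system acted on sequentially, with unitary ``undo'' operations $U_{X_i}$ on the memories of intermediate agents, as in the $t=0,\dots,n$ protocol) so that the $s_i=0$ descriptions of intermediate measurements can be grouped into the primed measurements of \Cref{def:compatibility_v2}, making adjacent agents compatible and yielding default predictions equal to the $n$-cycle empirical model. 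Then the agents, reasoning with \textbf{common knowledge}, \textbf{reasoning about compatible agents}, \textbf{setting-independence} and the Distribution and Trust axioms, chain the adjacent inferences around the cycle to derive both lines of \Cref{eq: no_ps_paradox}, and since no post-selection is invoked (the endpoint probabilities are not required to be conditioned on any rare event), this constitutes a post-selection-free $n$-cycle paradox — provided \textbf{non-contradictory outcomes} is assumed, which it is not, giving the contradiction.

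For the converse direction — a post-selection-free $n$-cycle paradox implies the theory contains such an extremal vertex — I would argue as follows. Given the two chains of \Cref{eq: no_ps_paradox}, each individual atomic inference $x_j = v_j \Rightarrow x_{j+1} = v_{j+1}$ holding in the setup means (by \Cref{def:language_v2}) that the corresponding default prediction satisfies $P(x_{j+1} = \neg v_{j+1}, x_j = v_j \mid \vec s) = 0$, and by \Cref{claim:negating-inference} together with the second chain giving the complementary implications, all four ``forbidden'' corner probabilities for the pair collapse to zero except the two consistent with a single relation $x_{j+1} = x_j \oplus c_j$. Because adjacent agents are compatible in the multi-agent setup, \Cref{def:compatibility_v2} hands us primed jointly-measurable measurements whose joint distribution equals $P(x_j, x_{j+1} \mid \vec s)$; collecting these over all $n$ adjacent pairs yields a genuine no-disturbance empirical model for the $n$-cycle scenario (the marginals on shared variables match because they are the same physical outcome distribution). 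This model has $\langle X_j, X_{j+1}\rangle = (-1)^{c_j}$ for each $j$, and the fact that the chains are \emph{inconsistent} (flipping $x_i$ as one goes around) forces $\sum_j c_j$ odd, hence $\Omega = n$ for the choice $\gamma_j = (-1)^{c_j}$ — an extremal vertex of the no-disturbance polytope, which therefore lies in the theory.

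The main obstacle I anticipate is the ``realisability'' leg of the forward direction: showing in full generality (not just for quantum, not just for box-world) that \emph{any} extremal vertex of the no-disturbance polytope can be turned into the default predictions of an honest multi-agent setup with genuinely compatible adjacent agents. One must be careful that the theory-independent notion of compatibility in \Cref{def:compatibility_v2} is actually met — i.e.\ that the ``grouping together the $s_i=0$ descriptions of intermediate measurements'' trick works in the given theory — and that the empirical model that the theory ``contains'' (as a set of states and measurements) can be embedded so that the first condition of \Cref{def:compatibility_v2} holds. I expect this to go through for any theory with a reasonable compositional/sequential structure (as the paper argues in the paragraph following \Cref{fig:agent-compatibility}), so the proof will likely either restrict to such theories or phrase ``the theory contains an extremal vertex'' in a way that already packages in enough structure (states, measurements, and the $n$-cycle compatibility relations) to run the multi-agent construction. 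The bookkeeping of setting vectors $\vec s$ across the different adjacent-pair predictions, and checking the marginals genuinely agree so that a single no-disturbance model is obtained, is the other place where care is needed but I expect no conceptual surprise.
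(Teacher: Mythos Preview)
Your approach is essentially the same as the paper's in both directions: the key mechanism is exactly the translation between the two chains of \Cref{eq: no_ps_paradox} and the deterministic adjacent correlations $\langle X_i,X_{i+1}\rangle=\pm 1$, with the parity condition (odd number of $-1$'s) corresponding to the chain closing up inconsistently. Your converse direction (paradox $\Rightarrow$ vertex) matches the paper's argument closely, including the observation that the two chains together force both off-diagonal probabilities to vanish for each adjacent pair.

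One difference worth noting: the paper's proof of the forward direction (vertex $\Rightarrow$ paradox) is considerably more terse than yours. It simply reads off the implications $x_i=0\Rightarrow x_{i+1}=v_{i+1}$ and $x_i=1\Rightarrow x_{i+1}=v'_{i+1}$ directly from each deterministic correlation $\gamma_i=\langle X_i,X_{i+1}\rangle$, assembles them into the two chains, and declares the paradox established. It does \emph{not} explicitly construct a multi-agent setup with memories and undo operations, nor does it verify the first condition of \Cref{def:compatibility_v2} for the default predictions. In other words, the ``realisability leg'' you identify as the main obstacle is precisely what the paper glosses over: it implicitly treats ``the theory contains an extremal vertex'' as already providing a state and $n$ measurements with the required pairwise compatibility structure, so that the multi-agent setup is just these agents performing these measurements. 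Your anticipated resolution --- that the phrase ``the theory contains an extremal vertex'' packages in enough structure to run the construction --- is exactly the reading the paper tacitly adopts. Your KCBS-style construction with undo unitaries is more explicit and would be needed if one wanted to spell out the setup in full, but the paper does not go there.
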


A proof of this theorem can be found in \Cref{appendix:epistemic-quantum}. 

\subsection{Quantum theory}
\label{sec: structure_quantum}
In quantum theory, we will take the measurement of each agent in a multi-agent set-up (\Cref{def: MAsetup}) to be a projective measurement. This is without loss of generality as a set-up where agents measure their systems using general POVM measurements, can, as a consequence of the Naimark dilation, always be transformed into a set-up where they use rank-1 projective measurements by expanding the set of systems. Thus, in the quantum case we use the same definition of multi-agent set-ups as was proposed in \cite{Vilasini2022}, which uses projective measurements, whose generality is also further justified in that paper.
Moreover, the \textit{compatibility} of agents will be instantiated in terms of commutativity of their measurement projectors. We will continue to apply \Cref{def: paradox} of a multi-agent paradox, in which case the structure of the paradox is given by~\Cref{lemma:multi-agent-paradox}. The proofs of the results presented in this section can be found in Appendix~\ref{appendix:epistemic-quantum}.

\paragraph{Chain reduction for triples.} Given a chain of inferences, possibly witnessing a multi-agent paradox, can one reduce it to a more minimal chain (in certain cases) that still witnesses an inconsistency?  It is possible to identify a rule which governs cases when we can safely remove some statements from the inference chain.

\begin{restatable}[Chain reduction]{theorem}{reduction}
\label{theorem:reduction}
Given a chain of inferences of the form
\begin{gather*}
    \dots \Rightarrow \vec{c}=\vec{v}_c \Rightarrow \vec{b}=\vec{v}_b \Rightarrow \vec{a}=\vec{v}_a \ \Rightarrow \dots,
\end{gather*}
where the outcome sets $\vec{a}$, $\vec{b}$ and $\vec{c}$ are associated with projective quantum measurements, where the measurements 
 projectors pairwise commute $[\pi_\vec{i}^{\vec{v}_i},\pi_\vec{j}^{\vec{v}_j}]=0$ for all value assignments $\vec{v}_i$, $\vec{v}_j$ with $i,j=\{a,b,c\}$, then the inference chain can be reduced to
\begin{gather*}
\dots \Rightarrow \vec{c}=\vec{v}_c \Rightarrow \vec{a}=\vec{v}_a \ \Rightarrow \dots
\end{gather*}
\end{restatable}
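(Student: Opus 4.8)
The plan is to show that the two inferences $\vec{c}=\vec{v}_c \Rightarrow \vec{b}=\vec{v}_b$ and $\vec{b}=\vec{v}_b \Rightarrow \vec{a}=\vec{v}_a$ together imply the single inference $\vec{c}=\vec{v}_c \Rightarrow \vec{a}=\vec{v}_a$, using only the pairwise commutativity of the relevant projectors and the rules of conditional probability. Recall (\Cref{def:language_v2}) that an atomic inference $\vec{x}=\vec{v}_x \Rightarrow \vec{y}=\vec{v}_y$ corresponds to the probabilistic statement $P(\vec{y}=\neg\vec{v}_y,\vec{x}=\vec{v}_x)=0$, equivalently $P(\vec{y}=\vec{v}_y\mid \vec{x}=\vec{v}_x)=1$ whenever $P(\vec{x}=\vec{v}_x)>0$. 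So the hypotheses read $P(\vec{b}=\neg\vec{v}_b,\vec{c}=\vec{v}_c)=0$ and $P(\vec{a}=\neg\vec{v}_a,\vec{b}=\vec{v}_b)=0$, and the goal is $P(\vec{a}=\neg\vec{v}_a,\vec{c}=\vec{v}_c)=0$. The key point is that since all three measurements' projectors pairwise commute, there is a single joint measurement — equivalently, a well-defined joint distribution $P(\vec{a},\vec{b},\vec{c})$ — whose pairwise marginals reproduce $P(\vec{a},\vec{b})$, $P(\vec{b},\vec{c})$ and $P(\vec{a},\vec{c})$ (this is exactly the content of joint measurability, \Cref{def:meas_compatibility}, applied to the commuting projective measurements, and it is the step that uses the full three-way, not merely pairwise, commutativity at the level of the statistics; here pairwise commutativity of projectors suffices because commuting projectors generate a common eigenbasis). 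First I would invoke this to pass from the inferences to conditions on the single joint distribution $P(\vec{a},\vec{b},\vec{c})$.

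Then the argument is an elementary marginalization. Decompose
\begin{align*}
P(\vec{a}=\neg\vec{v}_a,\vec{c}=\vec{v}_c)
&= \sum_{\vec{v}}P(\vec{a}=\neg\vec{v}_a,\vec{b}=\vec{v},\vec{c}=\vec{v}_c)\\
&= P(\vec{a}=\neg\vec{v}_a,\vec{b}=\vec{v}_b,\vec{c}=\vec{v}_c)
 + \sum_{\vec{v}\neq\vec{v}_b}P(\vec{a}=\neg\vec{v}_a,\vec{b}=\vec{v},\vec{c}=\vec{v}_c).
\end{align*}
The first term is bounded by $P(\vec{a}=\neg\vec{v}_a,\vec{b}=\vec{v}_b)=0$ by the second hypothesis. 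Each term in the sum is bounded by $P(\vec{b}=\vec{v},\vec{c}=\vec{v}_c)$ with $\vec{v}\neq\vec{v}_b$; summing over $\vec{v}\neq\vec{v}_b$ gives $P(\vec{b}=\neg\vec{v}_b,\vec{c}=\vec{v}_c)=0$ by the first hypothesis (here $\neg\vec{v}_b$ denotes any value assignment differing from $\vec{v}_b$, consistently with the notation used for atomic inferences). Hence $P(\vec{a}=\neg\vec{v}_a,\vec{c}=\vec{v}_c)=0$, which is precisely the reduced inference $\vec{c}=\vec{v}_c\Rightarrow\vec{a}=\vec{v}_a$; it can then be spliced back into the surrounding chain in place of the two removed links.

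The main obstacle — really the only subtle point — is justifying the existence of the joint distribution $P(\vec{a},\vec{b},\vec{c})$ with the correct pairwise marginals, since in the surrounding chain the three measurements need not all lie in a common context of the measurement scenario (only $\vec{a}\cup\vec{b}$ and $\vec{b}\cup\vec{c}$ are assumed compatible there, and in fact the theorem adds the hypothesis that $\vec{a}$ and $\vec{c}$ also commute). For projective quantum measurements, pairwise commutativity of all the projectors involved implies they can be simultaneously diagonalized, so the Born-rule probabilities of the outcomes are just the marginals of a single classical distribution over the common eigenbasis; this both supplies $P(\vec{a},\vec{b},\vec{c})$ and shows that $\vec{a}$ and $\vec{c}$ are themselves a compatible pair, so the reduced inference is a legitimate atomic inference between compatible outcome sets. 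I would also note that if $P(\vec{c}=\vec{v}_c)=0$ the reduced inference holds vacuously, so the conclusion is unconditional. One should additionally check the edge case where $\vec{a}$, $\vec{b}$, $\vec{c}$ are not literally disjoint sets of outcomes; but the chain-structure conventions of \Cref{lemma:multi-agent-paradox} already take successive outcome sets to be disjoint, so this does not arise.
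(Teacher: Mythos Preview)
Your proof is correct and follows essentially the same approach as the paper: both arguments insert a resolution of the identity over the values of $\vec{b}$ into the quantity $P(\vec{a}=\neg\vec{v}_a,\vec{c}=\vec{v}_c)$ and use the two hypotheses to kill each term. The only difference is packaging: you first invoke simultaneous diagonalizability of the pairwise commuting projectors to obtain a classical joint distribution $P(\vec{a},\vec{b},\vec{c})$ and then marginalize, whereas the paper carries out the identical decomposition directly at the level of traces $\tr(\pi_{\vec{a}}^0\pi_{\vec{c}}^1\rho)=\tr(\pi_{\vec{b}}^1\pi_{\vec{a}}^0\pi_{\vec{c}}^1\rho)+\tr(\pi_{\vec{b}}^0\pi_{\vec{a}}^0\pi_{\vec{c}}^1\rho)$, using commutation to ensure these three-fold traces are non-negative and bounded by the vanishing two-fold ones.
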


We note that the above chain reduction property is also a consequence of the fact that quantum theory admits no contextual correlations in the $n=3$ cycle and needs a minimum of $n=4$ to exhibit contextuality (here the three sets of measurements associated with $\vec{a}$, $\vec{b}$ and $\vec{c}$ form a $3$-cycle scenario). However, there do exist post-quantum GPTs that admit contextual correlations in $n=3$-cycle scenarios and would thus violate this property.

Now let us consider an arbitrary chain of statements which leads to a multi-agent paradox (cf. \Cref{def: paradox}) in quantum theory with projective measurements. Following Theorem~\ref{theorem:reduction}, we can take a look at all consecutive triples of statements, and exclude the middle statement for every case where all three statements in the triple pairwise commute. After this reduction procedure is carried out, we are left with the chain where for each triple the first and the last statements correspond to measurement projectors which don't commute. In the following for simplicity (and without any loss of generality) we represent each value assignment $\vec{v}_i$ by $1$.
\begin{gather*}
    \vec{u}=1 \Rightarrow \vec{a}_1=1 \Rightarrow \dots \Rightarrow \vec{a}_N=1 \Rightarrow \vec{w}=1 \\ \text{with} \ [\pi_{\vec{a}_i}^1,\pi_{\vec{a}_{i+2}}^1]\neq 0 \ \forall i\in\{1,\dots,N-2\}, \ [\pi_\vec{u}^1,\pi_{\vec{a}_2}^1]\neq 0 \ \text{and} \ [\pi_{\vec{a}_{N-1}}^1,\pi_\vec{w}^1]\neq 0.
\end{gather*}
Note that in the above, since every inference can only connect compatible agents, the projectors for every adjacent set of measurement outcomes in the above chain commute. The FR paradox scenario constitutes a maximally reduced chain of four agents.

\paragraph{Symmetric chains: always consistent.}
So far, we have considered inference chains with one way implications, here we consider chains related by equivalences and obtain the following reduction rule in quantum theory. This is turn implies the impossibility of quantum multi-agent paradoxes involving only symmetric inferences,

\begin{restatable}[Symmetric chain reduction]{theorem}{symmetricB}
\label{thm:symmetricB}
Consider an inference chain
\begin{gather*}
   \vec{a}_1= \vec{v}_1\Leftrightarrow \vec{a}_2= \vec{v}_2 \Leftrightarrow \dots \Leftrightarrow \vec{a}_N= \vec{v}_N,
\end{gather*}
where the measurement projectors for each adjacent pair of outcomes sets commute i.e., $[\pi_{\vec{a}_i}^{\vec{v}_i},\pi_{\vec{a}_j}^{\vec{v}_j}]=0$ whenever $j=i+1\text{ mod }N$ for $i\in\{1,...,N\}$ and for all value assignments $\vec{v}_i$, $\vec{v}_j$. The remaining sets of measurements need not necessarily commute. Then the chain can be reduced to the symmetric inference between its end points
\begin{gather*}
   \vec{a}_1= \vec{v}_1\Leftrightarrow \vec{a}_N= \vec{v}_N.
\end{gather*}
\end{restatable}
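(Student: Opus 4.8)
The plan is to induct on the length $N$ of the chain, so it suffices to treat the base case of a triple $\vec{a}_1 = \vec{v}_1 \Leftrightarrow \vec{a}_2 = \vec{v}_2 \Leftrightarrow \vec{a}_3 = \vec{v}_3$, where the projectors for the pairs $(\vec{a}_1, \vec{a}_2)$ and $(\vec{a}_2, \vec{a}_3)$ commute (for all value assignments), and conclude $\vec{a}_1 = \vec{v}_1 \Leftrightarrow \vec{a}_3 = \vec{v}_3$; the general statement then follows by collapsing one link at a time, noting that after collapsing the middle outcome the new adjacent pair $(\vec{a}_1, \vec{a}_3)$ need not commute, which is exactly why the conclusion is only a symmetric inference between the endpoints and not a statement about a shorter \emph{chain} with the same commutation structure. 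First I would unfold what the two symmetric inferences mean in terms of the predictions of \Cref{def:language_v2}: $\vec{a}_1 = \vec{v}_1 \Leftrightarrow \vec{a}_2 = \vec{v}_2$ means $P(\vec{a}_2 = \neg\vec{v}_2, \vec{a}_1 = \vec{v}_1) = 0$ and $P(\vec{a}_1 = \neg\vec{v}_1, \vec{a}_2 = \vec{v}_2) = 0$, and similarly for the second link. Since these outcomes arise from compatible (here, commuting) projective measurements, the joint probabilities can be computed as $\tr\!\big(\pi_{\vec{a}_1}^{\vec{v}_1}\pi_{\vec{a}_2}^{\vec{v}_2}\rho\,\pi_{\vec{a}_2}^{\vec{v}_2}\pi_{\vec{a}_1}^{\vec{v}_1}\big)$ and the like, and vanishing of such a probability is equivalent (using commutativity and that $\rho$ is a state) to the orthogonality relation $\pi_{\vec{a}_1}^{\vec{v}_1}\pi_{\vec{a}_2}^{\vec{v}_2}\sqrt{\rho} = 0$, or equivalently that the support of $\rho$ is annihilated by $\pi_{\vec{a}_1}^{\vec{v}_1}(\id - \pi_{\vec{a}_2}^{\vec{v}_2})$.

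The core algebraic step is then the following: working on the support of $\rho$, the four vanishing conditions say that $\pi_{\vec{a}_1}^{\vec{v}_1}$ and $\pi_{\vec{a}_2}^{\vec{v}_2}$ act identically on $\mathrm{supp}(\rho)$ (each implies the other on that subspace, since $P(\vec{a}_1 = \vec{v}_1, \vec{a}_2 = \neg\vec{v}_2) = 0$ and $P(\vec{a}_1 = \neg\vec{v}_1, \vec{a}_2 = \vec{v}_2) = 0$ together force $\pi_{\vec{a}_1}^{\vec{v}_1}\sqrt\rho = \pi_{\vec{a}_2}^{\vec{v}_2}\sqrt\rho$), and likewise $\pi_{\vec{a}_2}^{\vec{v}_2}$ and $\pi_{\vec{a}_3}^{\vec{v}_3}$ act identically on $\mathrm{supp}(\rho)$. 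Composing, $\pi_{\vec{a}_1}^{\vec{v}_1}\sqrt\rho = \pi_{\vec{a}_3}^{\vec{v}_3}\sqrt\rho$, which is precisely the pair of conditions $P(\vec{a}_1 = \neg\vec{v}_1, \vec{a}_3 = \vec{v}_3) = 0$ and $P(\vec{a}_1 = \vec{v}_1, \vec{a}_3 = \neg\vec{v}_3) = 0$ — but here one must be careful, since $\pi_{\vec{a}_1}^{\vec{v}_1}$ and $\pi_{\vec{a}_3}^{\vec{v}_3}$ need not commute, so these are joint probabilities computed in the sequential order in which the agents act (i.e.\ $P(\vec{a}_1 = \vec{v}_1, \vec{a}_3 = \neg\vec{v}_3) = \tr(\pi_{\vec{a}_3}^{\neg\vec{v}_3}\pi_{\vec{a}_1}^{\vec{v}_1}\rho\,\pi_{\vec{a}_1}^{\vec{v}_1})$), and one checks that $\pi_{\vec{a}_1}^{\vec{v}_1}\sqrt\rho = \pi_{\vec{a}_3}^{\vec{v}_3}\sqrt\rho$ indeed kills both such sequential probabilities. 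This yields exactly the two predictions underlying the symmetric inference $\vec{a}_1 = \vec{v}_1 \Leftrightarrow \vec{a}_3 = \vec{v}_3$ in the sense of \Cref{def:language_v2}, completing the base case.

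The induction step is then immediate: given the $N$-chain, apply the base case to the first triple to replace $\vec{a}_1 \Leftrightarrow \vec{a}_2 \Leftrightarrow \vec{a}_3$ by $\vec{a}_1 \Leftrightarrow \vec{a}_3$, obtaining an $(N-1)$-chain $\vec{a}_1 = \vec{v}_1 \Leftrightarrow \vec{a}_3 = \vec{v}_3 \Leftrightarrow \vec{a}_4 = \vec{v}_4 \Leftrightarrow \cdots \Leftrightarrow \vec{a}_N = \vec{v}_N$; the commutation hypothesis needed to continue is only that of the \emph{adjacent} pairs $(\vec{a}_3,\vec{a}_4),\ldots,(\vec{a}_{N-1},\vec{a}_N)$, which still holds, and the fact that $(\vec{a}_1,\vec{a}_3)$ may fail to commute is harmless because we never need to collapse at $\vec{a}_1$ again. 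Iterating down to a single link gives $\vec{a}_1 = \vec{v}_1 \Leftrightarrow \vec{a}_N = \vec{v}_N$. The main obstacle I anticipate is being careful with the non-commuting endpoint pairs throughout: all joint probabilities for non-adjacent outcomes must be interpreted as sequential (time-ordered) probabilities consistent with the multi-agent setup (\Cref{def: MAsetup}, \Cref{def: default_predictions}), and one has to verify that the support-annihilation identity $\pi_{\vec{a}_1}^{\vec{v}_1}\sqrt\rho = \pi_{\vec{a}_N}^{\vec{v}_N}\sqrt\rho$ is genuinely equivalent to the vanishing of those sequential probabilities — i.e.\ that $\tr\!\big((\id - \pi_{\vec{a}_N}^{\vec{v}_N})\pi_{\vec{a}_1}^{\vec{v}_1}\rho\,\pi_{\vec{a}_1}^{\vec{v}_1}\big) = 0 \iff (\id - \pi_{\vec{a}_N}^{\vec{v}_N})\pi_{\vec{a}_1}^{\vec{v}_1}\sqrt\rho = 0$, which follows from positivity but should be stated explicitly. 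Once this bookkeeping is in place the argument is purely a statement about projectors agreeing on $\mathrm{supp}(\rho)$ and transitivity of that relation.
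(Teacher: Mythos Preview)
Your core argument is correct and takes a somewhat different route from the paper. The paper works at the level of traces: from the adjacent equivalences it first establishes $\tr(\pi_{\vec{a}_i}^1\rho)=\tr(\pi_{\vec{a}_{i+1}}^1\rho)=\tr(\pi_{\vec{a}_i}^1\pi_{\vec{a}_{i+1}}^1\rho)$ for all $i$, then shows $\tr(\pi_{\vec{a}_1}^1\pi_{\vec{a}_k}^1\rho)=\tr(\pi_{\vec{a}_1}^1\rho)$ for increasing $k$ by inserting resolutions of the identity and iterating over the triples $(\vec{a}_1,\vec{a}_k,\vec{a}_{k+1})$, finally invoking the $1$--$N$ commutation only at the last step to interpret the resulting trace as a joint probability. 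Your observation that a symmetric inference between commuting adjacent projectors is equivalent to the operator identity $\pi_{\vec{a}_i}^{\vec{v}_i}\sqrt\rho=\pi_{\vec{a}_{i+1}}^{\vec{v}_{i+1}}\sqrt\rho$ is a sharper encoding of exactly the same information, and transitivity of this equality does the iteration for free; this is arguably cleaner than the paper's trace manipulations. One caveat about your packaging: framing the argument as induction on chain length, with intermediate ``equivalences'' $\vec{a}_1\Leftrightarrow\vec{a}_k$ interpreted via sequential probabilities for possibly non-commuting pairs, is both unnecessary and sits outside the paper's framework (atomic inferences in \Cref{def:language_v2} and the reasoning rule in \Cref{def: paradox} only involve compatible outcomes). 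Moreover, your base-case lemma as stated assumes both links commute, so it does not literally re-apply to the triple $(\vec{a}_1,\vec{a}_3,\vec{a}_4)$ after the first collapse. The clean fix---which you essentially identify in your final sentence---is to drop the intermediate equivalences altogether: derive $\pi_{\vec{a}_i}^{\vec{v}_i}\sqrt\rho=\pi_{\vec{a}_{i+1}}^{\vec{v}_{i+1}}\sqrt\rho$ once for each adjacent link, chain these equalities directly to obtain $\pi_{\vec{a}_1}^{\vec{v}_1}\sqrt\rho=\pi_{\vec{a}_N}^{\vec{v}_N}\sqrt\rho$, and then use the $1$--$N$ commutation (which is part of the hypothesis) to read off the two vanishing joint probabilities that constitute $\vec{a}_1=\vec{v}_1\Leftrightarrow\vec{a}_N=\vec{v}_N$.
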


Note that the above reduction does not hold in general if the implications go only in one direction. In \Cref{theorem:reduction} the one-way chain was reduced by assuming an additional commutation relation between the end-points of each triple, e.g., that the projectors of $\vec{a}_1$ and $\vec{a}_3$ also commute, which is not required here.

This means that such symmetric chains are always globally consistent in quantum theory, meaning that whenever local consistency is satisfied in individual equivalences, global consistency in chain as a whole  will be satisfied as well. This implies the following corollary which shows that quantum theory does not admit any multi-agent paradoxes arising from symmetric inference chains.
\begin{corollary}[No symmetric quantum paradoxes]
    \label{corollary:no_sym_paradox}

A multi-agent paradox based on the following inference chain along with a non-zero probability $P(\vec{a}_1= \vec{v}_1,\vec{a}_N= \neg \vec{v}_N)>0$ for the chain's end-points
\begin{gather*}
   \vec{a}_1= \vec{v}_1\Leftrightarrow \vec{a}_2= \vec{v}_2 \Leftrightarrow \dots \Leftrightarrow \vec{a}_N= \vec{v}_N,
\end{gather*}
cannot arise in quantum theory. Here the measurement projectors for each adjacent pair of outcomes sets commute i.e., $[\pi_{\vec{a}_i}^{\vec{v}_i},\pi_{\vec{a}_j}^{\vec{v}_j}]=0$ whenever $j=i+1\text{ mod }N$, for $i\in\{1,...,N\}$ and for all value assignments $\vec{v}_i$, $\vec{v}_j$
\end{corollary}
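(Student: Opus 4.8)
The plan is to obtain this statement as an immediate consequence of the reduction established in \Cref{thm:symmetricB}, so the real content lies in that theorem. For the corollary itself: suppose some quantum multi-agent set-up admitted, at once, the symmetric chain $\vec{a}_1 = \vec{v}_1 \Leftrightarrow \dots \Leftrightarrow \vec{a}_N = \vec{v}_N$ (with the projectors of each adjacent pair commuting) and the prediction $P(\vec{a}_1 = \vec{v}_1, \vec{a}_N = \neg\vec{v}_N) > 0$. By \Cref{thm:symmetricB} the chain collapses to the single equivalence $\vec{a}_1 = \vec{v}_1 \Leftrightarrow \vec{a}_N = \vec{v}_N$, whose forward half is, by \Cref{def:language_v2}, exactly the statement $P(\vec{a}_1 = \vec{v}_1, \vec{a}_N = \neg\vec{v}_N) = 0$ --- contradicting the hypothesis. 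Hence no such set-up exists, so no multi-agent paradox (\Cref{def: paradox}) can rest on a purely symmetric chain of this form in quantum theory.

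Since this is where the work really is, here is how I would prove \Cref{thm:symmetricB}. The first step is to rewrite each two-way link as an operator identity on a common state. For each adjacent pair $(A_i, A_{i+1})$ compatibility is instantiated by commutativity of the measurement projectors $P_i := \pi_{\vec{a}_i}^{\vec{v}_i}$ for all value assignments (so $P_i$ commutes with $\id - P_{i+1}$ as well), and \Cref{def:compatibility_v2} lets the relevant default prediction be written as a Born-rule trace $\tr(\cdot\,\rho)$ against one common effective state $\rho$ of the systems and memories (the $s_i = 0$ descriptions of any intervening measurements being folded into the primed measurements, as in the remark after \Cref{def:compatibility_v2}). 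Then $\vec{a}_i = \vec{v}_i \Rightarrow \vec{a}_{i+1} = \vec{v}_{i+1}$ becomes $\tr\big((\id - P_{i+1})P_i\,\rho\big) = 0$; as $P_i$ and $\id - P_{i+1}$ commute their product is again a projector, and a projector of zero expectation on a density operator annihilates it, so $(\id - P_{i+1})P_i\,\rho = 0$, i.e.\ $P_i\rho = P_i P_{i+1}\rho$. The reverse implication yields symmetrically $P_{i+1}\rho = P_i P_{i+1}\rho$. Thus each symmetric link is equivalent to $P_i\rho = P_{i+1}\rho$ (with common value $P_i P_{i+1}\rho$); this is where two-sidedness is essential, since a one-way link only gives $P_i\rho = P_i P_{i+1}\rho$, which does not chain.

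The second step is transitivity: $P_1\rho = P_2\rho = \dots = P_N\rho$, hence $P_1\rho = P_N\rho$. For the last step I would read off the endpoint equivalence. Since $\vec{a}_1$ and $\vec{a}_N$ are adjacent modulo $N$ their projectors commute, so $P(\vec{a}_1 = \vec{v}_1, \vec{a}_N = \neg\vec{v}_N) = \tr\big(P_1(\id - P_N)\rho\big) = \tr(P_1\rho) - \tr(P_1 P_N\rho)$, and $P_1 P_N\rho = P_1(P_N\rho) = P_1(P_1\rho) = P_1\rho$ forces this to vanish; symmetrically $P(\vec{a}_1 = \neg\vec{v}_1, \vec{a}_N = \vec{v}_N) = 0$. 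These two vanishing predictions are precisely the reduced equivalence $\vec{a}_1 = \vec{v}_1 \Leftrightarrow \vec{a}_N = \vec{v}_N$. The conclusion is in fact robust: even without assuming $P_1$ and $P_N$ commute, $P_1\rho = P_N\rho$ together with its adjoint $\rho P_1 = \rho P_N$ still kills both joint predictions, computed in whichever temporal order $A_1$ and $A_N$ act.

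The main obstacle I anticipate is bookkeeping rather than conceptual: justifying cleanly that every inference in the chain can be represented through one and the same effective state $\rho$ and a single coherent family of effective projectors, despite the measurements being performed sequentially with arbitrary operations of other agents interleaved. \Cref{def:compatibility_v2} and the grouping construction that follows it are exactly what supply this, but they must be invoked carefully, and one must keep track that the commutation hypothesis is used for each adjacent pair (and, for the cleanest version of the final step, for the endpoints too). Once the links are in the form $P_i\rho = P_{i+1}\rho$, everything else is a short computation with commuting projectors and positivity.
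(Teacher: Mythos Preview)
Your reduction of the corollary to \Cref{thm:symmetricB} matches the paper exactly, and your proof of \Cref{thm:symmetricB} is correct but follows a genuinely cleaner route than the paper's. The paper works entirely at the level of traces: from each link it extracts $\tr(\pi_{\vec a_i}^1\rho)=\tr(\pi_{\vec a_{i+1}}^1\rho)=\tr(\pi_{\vec a_i}^1\pi_{\vec a_{i+1}}^1\rho)$ together with the vanishing of the ``off-diagonal'' traces, and then has to walk through the chain triple by triple, inserting resolutions of the identity to propagate the relation from $(\vec a_1,\vec a_2,\vec a_3)$ to $(\vec a_1,\vec a_3,\vec a_4)$ and so on up to $(\vec a_1,\vec a_{N-1},\vec a_N)$. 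Your key observation---that a projector with zero expectation on a state annihilates the state, so $(\id-P_{i+1})P_i\rho=0$ lifts from a scalar identity to an operator one---immediately gives $P_i\rho=P_{i+1}\rho$ and makes the whole propagation a one-line transitivity. This buys you a shorter argument and the bonus remark that endpoint commutation is not strictly needed for the final vanishing; the paper's trace-level argument does not see this. The only place where care is warranted, which you already flag, is ensuring that all the links are expressed against a common effective $\rho$ and projectors via \Cref{def:compatibility_v2}; the paper simply assumes this in the preamble to its quantum proofs.
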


Again, however, this property is specific to quantum theory; it does not hold in PR-box world: there the analogue of the FR scenario leading to a paradox can in fact be formulated using equivalences instead of inferences~\cite{VNdR}.

\paragraph{No post-selection free $n$-cycle quantum paradoxes.} In \Cref{theorem: ncycle_paradox} we showed that any theory admits a particular class of multi-agent paradoxes, which we call post-selection free $n$-cycle paradoxes (\Cref{def: paradox_no_ps}) if and only if the theory contains an extremal vertex of the associated $n$-cycle non-disturbance polytope. Such vertices achieve $\Omega=n$ but the Tsirelson-type bounds for $n$-cycle scenarios derived in
\cite{Araujo_2013} show that the maximum $\Omega$ achievable in quantum theory is strictly less than $n$. This immediately yields the following corollary of \Cref{theorem: ncycle_paradox}.

\begin{corollary}[No post-selection free quantum paradoxes]
\label{corollary: postselection_quantum}
    Quantum theory does not admit any post-selection free $n$-cycle paradoxes.
\end{corollary}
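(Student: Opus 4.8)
The plan is to deduce the corollary directly from \Cref{theorem: ncycle_paradox} together with the known quantum (Tsirelson-type) bounds on the $n$-cycle functional $\Omega$ from \cite{Araujo_2013}. By \Cref{theorem: ncycle_paradox}, quantum theory admits a post-selection free $n$-cycle paradox (\Cref{def: paradox_no_ps}) if and only if it contains an extremal vertex of the non-disturbance polytope of the associated $n$-cycle scenario. Hence the whole task reduces to showing that, for every $n \ge 3$, quantum theory contains no such extremal vertex.

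First I would make precise which extremal vertices are at issue. As noted in the discussion preceding \Cref{def: paradox_no_ps}, a binary-outcome behaviour on the $n$-cycle sits on the relevant extremal vertex of the no-disturbance polytope precisely when it saturates $\Omega = n$ for the sign choice $\gamma_i = \langle X_i, X_{i+1}\rangle$ for all $i$ (with an odd number of $\gamma_i = -1$), cf. \eqref{eq: ncyc_omega}. Equivalently, these are the ``PR-box-like'' behaviours in which every adjacent pair $(X_i, X_{i+1})$ is perfectly (anti-)correlated in the pattern dictated by the $\gamma_i$, which is exactly what makes the outcome-assignment chain \eqref{eq: no_ps_paradox} close up into a full Liar's cycle with no post-selection. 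So the statement to be ruled out is simply: there is no quantum behaviour on the $n$-cycle with $\Omega = n$.

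Then I would invoke the quantum bound. Araujo et al.\ \cite{Araujo_2013} establish, for every $n \ge 3$, a strict quantum (Tsirelson-type) bound $\Omega^{Q}_{\max} < n$: for odd $n$ this is their odd-cycle bound, and for even $n$ it reduces to the chained-Bell Tsirelson bound $n\cos(\pi/n) < n$. Since no quantum behaviour attains $\Omega = n$, quantum theory contains no extremal vertex of the type that \Cref{theorem: ncycle_paradox} associates with post-selection free paradoxes, and therefore admits no post-selection free $n$-cycle paradox for any $n$.

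I do not expect a genuine obstacle here: the content is entirely carried by \Cref{theorem: ncycle_paradox} and the cited bounds, so the corollary really is ``immediate'' once those are in hand. The only points that need a little care are bookkeeping rather than mathematics — confirming that the extremal vertices whose presence \Cref{theorem: ncycle_paradox} ties to post-selection free paradoxes are exactly the $\Omega = n$ vertices (and not, e.g., the local-deterministic vertices, which are also extremal but lie in the non-contextual region), and checking that the functional $\Omega$ bounded in \cite{Araujo_2013} is literally the one defined in \eqref{eq: ncyc_omega}, so that the strict inequality $\Omega^{Q}_{\max} < n$ transfers without any rescaling or sign adjustment.
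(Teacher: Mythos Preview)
Your proposal is correct and matches the paper's primary argument exactly: the corollary is stated as immediate from \Cref{theorem: ncycle_paradox} together with the strict quantum bound $\Omega^{Q}_{\max}<n$ from \cite{Araujo_2013}. The paper additionally offers a second, self-contained route—negating the second chain of \eqref{eq: no_ps_paradox} via \Cref{claim:negating-inference} to produce a symmetric equivalence chain and then invoking \Cref{thm:symmetricB}—but this is presented as an alternative, not the main proof.
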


Another easy way to understand this result is to take a closer look at the two chains in~\Cref{eq: no_ps_paradox}: negating the second statement chain according to~\Cref{claim:negating-inference}, and combining it with the first one, we obtain an equivalence chain
\begin{align}
    x_i=0 \Leftrightarrow x_{i+1}=v_{i+1} \Leftrightarrow \dots \Leftrightarrow x_{i-1}=v_{i-1} \Leftrightarrow x_i=1,
\end{align}
which cannot exist according to~\Cref{thm:symmetricB}.

This explains the necessity of post-selection in the FR paradox as well as the KCBS paradox we introduced in \Cref{sec:kcbs}, which are both constructed in quantum theory and are linked to $n=4$ and $n=5$ cycles respectively. Whether an analogous result ruling out quantum paradoxes without post-selection holds for non-binary outcomes in $n$-cycle scenarios and in more general scenarios beyond the $n$-cycle are left as open questions for future work.

\section{Conclusions and outlook}
\label{sec:conclusions}

In this work, we explored a specific class of multi-agent paradoxes that arise when agents reason within a physical theory using a defined set of assumptions, as outlined in \Cref{def: paradox}. These assumptions generalize the quantum-theory-specific assumptions underlying the Frauchiger-Renner paradox \cite{Frauchiger2018} in a theory-independent manner, while also incorporating implicit assumptions related to Heisenberg cuts, shown to be necessary for FR-type quantum paradoxes \cite{Vilasini2022}. In \Cref{theorem: main}, we demonstrated that the existence of such a paradox implies the logical contextuality of the theory, a distinctly non-classical feature. Additionally, we characterized these paradoxes within quantum theory, highlighting their differences from analogous paradoxes in post-quantum theories identified in previous work \cite{VNdR}.

This study represents a step toward a unified understanding of existing no-go theorems in quantum foundations \cite{Frauchiger2018,Brukner2017, Brukner2018,Bong2020, Haddara2022, Ormrod2022, Ormrod2023}, shifting the focus from contradictions to a clear articulation of the assumptions underlying their derivations. By formalizing these assumptions in a general manner, our work lays the groundwork for systematically classifying different classes of multi-agent paradoxes in Wigner's Friend-like scenarios. These paradoxes can be generated by varying the assumptions that lead to contradictions, thereby providing a method to certify other non-classical resources within the theory. We discuss several concrete possibilities for such investigations in the following sections.

\paragraph{Modifying compatibility of agents/measurements.} One way to generalize \Cref{def: paradox} is to drop the compatibility assumption, allowing incompatible agents to reason about each other. As discussed in \Cref{sec: paradox_implies_contextual}, this broader definition includes Wigner's original quantum thought experiment which involves two agents. In quantum theory, as shown in \cite{Vilasini2022}, Heisenberg-cut dependence (formalized as setting-dependence) is necessary for such paradoxes, which is closely linked to measurement incompatibility. However, it remains an open question which property of the theory, its state, or its measurement structure is necessary for setting-dependence in general. This also depends on specific conditions related to the evolution dictated by the two settings; for instance, in quantum theory, $s_i=0$ corresponds to modeling the $i^{\text{th}}$ measurement unitarily, and in the paradox found in \cite{VNdR} in box-world, a post-quantum theory, this evolution corresponds to an information-preserving memory update of box-world.

The restriction to reasoning about compatible agents in our \Cref{def: paradox} corresponds to defining contexts in terms of jointly measurable or compatible sets of measurements (\Cref{def:meas_compatibility}) in the definition of a measurement scenario and logical contextuality (\Cref{subsec:contextuality}). For compatible sets, the measurement order is irrelevant due to joint measurability. To study multi-agent paradoxes without this restriction, it would be interesting to consider an alternative definition of a measurement scenario by taking the order of measurements into account, thus taking the set of variables $X$ in a measurement scenario to be ordered, and allowing subsets of potentially incompatible variables in $\mathcal{C}$\footnote{The ordering information can be used to ensure that there is no ambiguity in considering incompatible sets of measurements, where the order in which the measurements are performed generally matters.}. Given such an ordered scenario, one could explore \textit{ordered empirical models}, as well as ordered compatibility notions for agents and measurements. Whether this approach can lead to a meaningful notion of non-classicality, analogous to contextuality in the unordered case, is an intriguing question for future work, which could also benefit from comparisons with other extensions of the sheaf-theoretic approach to contextuality that account for causal order \cite{Gogioso2023topology, Gogioso2023geometry}.

\paragraph{Modifying language of the theory and its logic.}
In this work, we focused on reasoning with statements of a specific form (\Cref{def:language_v2}) and combination rules such as the trust and distributive rules, which generalize the reasoning in the quantum FR paradox. For the class of paradoxes defined through this reasoning conditions, we showed that they always admit a half Liar cycle form (\Cref{lemma:multi-agent-paradox}) and indicate logical contextuality of the underlying theory (\Cref{theorem: main}). 
However it is not necessary that all logically contextual scenarios lead to paradoxes of this specific type. The multi-agent paradoxes based on Hardy's and KCBS scenarios of quantum theory (i.e., the FR paradox and our example in \Cref{sec:kcbs}, and the similar one in \cite{Walleghem2024-FR}) and the PR box scenario of box-world (the paradox of \cite{VNdR}) are examples of such Liar's cycle type paradoxes involving binary outcomes, which are linked to the $n$-cycle contextuality scenario. In such scenarios, quantum paradoxes necessarily involve post-selection (\Cref{corollary: postselection_quantum}). On the other hand, the GHZ scenario in quantum theory exhibits strong logical contextuality that can be certified without post-selection, it is not based on an $n$-cycle and does not appear to involve a half Liar cycle form as in \Cref{lemma:multi-agent-paradox}. However, the GHZ scenario has been used to construct a Wigner's Friend like paradox \cite{Walleghem2024_GHZ} under a different set of quantum theory-dependent assumptions. One future direction building on our framework would be to adapt the types of statements and combination rules allowed in the reasoning to explore  different classes of Wigner's Friend type paradoxes in a theory-independent manner, and subsequently use this to certify and characterise other forms of contextuality.

\paragraph{Shifting the setting variable: language, label, or agent's property.} 
In this paper we have considered setting as a label for the statement produced by an agent about theirs or other agent's outcome. There are however few other ways to include it in the formal description, which mathematically yield equivalent results (in the context of this paper), but can be ascribed different interpretations. Formally, including setting as a label for a proposition means that the truth function proving a truth value of the statements in the language, is (at least) a two-variable function, taking as arguments the statement itself and the setting it was produced in. If such labelled statements are combined in a setting-dependent scenario, the labels have to compared separately.
Alternatively, one can consider incorporating settings in the language of the theory itself by logically adding it to the statement (as done in \cite{Vilasini2022})
\begin{align*}
    \phi_s \ \rightarrow \ \phi\wedge s.
\end{align*}
In this case, when statements are combined the consistency of settings is ensured simply by the rules of propositional logic. 
Finally, one can consider promoting the setting from labelling the proposition to labelling the knowledge operator for the agent producing the statement,
\begin{align*}
    K_A \phi_s \ \rightarrow \ K_A^s \phi.
\end{align*}
This way of expressing the setting variable can highlight the scenarios where it is attributed to the agents producing the statements, and not the statements themselves. In all three cases, formalising setting-independence leads to amending different elements of the logical structure: language or distribution rules for proposition or knowledge operators.

\paragraph{Absoluteness of events and non-logical forms of contextuality.} 
We focused on Wigner’s Friend arguments that highlight inconsistencies in agents’ reasoning under certain assumptions, which includes the Frauchiger-Renner's argument. Another class of no-go arguments, including Brukner’s work \cite{Brukner2017, Brukner2018} and the Local Friendliness (LF) theorem \cite{Bong2020}, demonstrates the failure of Absoluteness of Observed Events (AoE) under additional assumptions. AoE posits a single well-defined joint probability distribution over all outcomes in a multi-agent scenario, recovering correct marginal distributions for compatible measurements.

While we linked logical non-contextuality to the absence of Wigner’s Friend type paradoxes (generalizing FR arguments), our methods suggest broader connections between non-contextual theories and those respecting AoE. In non-contextual theories, empirical models admit globally consistent joint distributions matching observed marginals. In the proof of \Cref{theorem: main}, we mapped multi-agent Wigner’s Friend setups to empirical models with setting-conditioned predictions corresponding to the model's probabilities. This suggests that non-contextuality in the model would lift to objective setting-independent joint distributions for all agents in the Wigner's Friend scenario, thereby implying an absolute notion of observed measurement events for all agents. We leave a full formalization and proof of this observation for future work.

In this direction, recent work \cite{Ormrod2023} studies the measurement problem from the perspective of the AoE assumption, linking this to Bell non-locality as well as information preserving evolutions \cite{VNdR} of agents' labs in a theory. Building on our work and \cite{Ormrod2023}, it would be interesting to address whether contextuality, rather than Bell non-locality is the more general necessary property for having a measurement problem in a theory. Furthermore, another recent and relevant work \cite{Walleghem2024} links LF-type arguments to non-contextuality for scenarios with parties acting on distinct subsystems. Applying our approach to study AoE and its links to contextuality as suggested above, could enable such considerations to be extended to multi-agent setups with sequential measurements on single systems, potentially under different sets of assumptions. Our framework also provides a basis for generalizing connections between paradox-based and AoE-based Wigner’s Friend arguments, as explored in \cite{Walleghem2024} (Theorem 4) for a class of set-ups in quantum theory, to arbitrary theories and multi-agent setups.

\paragraph{Limit of infinite time and infinite measurements.} As discussed in \Cref{sec: structure_general}, there is scope to generalise the present framework and results to consider protocols extended indefinitely in time, which would involve an infinite sequence of measurements. In the literature on classical semantic paradoxes (see also \Cref{sec:background}), while all paradoxes based on finitely many statements necessarily have a cyclic reference graph structure ~\cite{Jongeling2002, Rabern2013, Beringer2017} (analogous to our \Cref{lemma:multi-agent-paradox} for multi-agent paradoxes), new types of semantic paradoxes based on acyclic reference graphs, such as Yablo's paradox \cite{Yablo1993} become possible in the infinite-statement case. In the paragraph following \Cref{thm: yabloB}, we have presented a conjecture that we do not expect multi-agent paradoxes with a Yablo-type structure to be possible in a class of infinite-measurement extensions of multi-agent set-ups to physical theories with a well-defined causal order. Proving this conjecture will potentially involve an interface with studies on quantum and non-classical generalisations of stochastic processes and generalisations of the Kolmogorov extension theorem \cite{Milz_2020}.

\paragraph{Understanding the structure of quantum theory through multi-agent paradoxes.} A central question in the foundations of physics is understanding why nature follows quantum theory, often studied by identifying physical principles that might single out quantum behavior among other probabilistic theories. Our work highlights similarities between quantum theory and other theories like boxworld, which admit multi-agent paradoxes showing that logical contextuality is a necessary in both cases. Going further, we have identified key properties that hold for quantum multi-agent paradoxes but can fail in other theories such as boxworld, such as the necessity of post-selection for free $n$-cycle paradoxes and impossibility of paradoxes through symmetric reasoning chains. We believe our work, along with the extensions discussed above, could inform a research program to address the foundational question of "why quantum theory" through the lens of multi-agent paradoxes in Wigner's Friend type scenarios.

\paragraph{Multi-agent paradoxes with causal assumptions and general causal scenarios.} 
Our focus has been on multi-agent scenarios where operations occur in an acyclic order consistent with the time order, and agents communicate only from earlier to later moments in time. By studying the structure and connectivity of the agents' communication channels, one can specify an operational causal structure for such a scenario. By also accounting for assumptions on the causal structure of the multi-agent setup, such as the independence of sources shared between different sets of parties, one can consider whether Wigner's Friend type paradoxes based on other non-classical resources such as network non-locality (see \cite{Tavakoli_2022} for a review) can be constructed, and conversely used to certify said resources.

Alternatively, one might challenge the very assumption of an acyclic operational causal structure. Quantum theory allows for scenarios where agents' operations do not occur in a fixed acyclic order \cite{Hardy2005,Oreshkov2012, Chiribella2013}, as seen in the \emph{quantum switch} \cite{Chiribella2013}, where Alice acts before or after Bob depending coherently on a control qubit. More broadly, both classical and quantum theories permit consistent causal loops or cyclic causal structures that avoid paradoxes like the grandfather paradox \cite{Deutsch1991,Lloyd_2011,Baumeler_2016, Bongers_2021}. It would be intriguing to extend our formalism for multi-agent reasoning and paradoxes to these more general causal scenarios without an acyclic structure, both within and beyond quantum theory. This could reveal novel types of Wigner's Friend type arguments by combining logical and causal features of non-classical theories, as well as explore the interplay between Wigner's Friend type and causal paradoxes.

From a physical standpoint, it is also essential to consider the role of relativistic causality in spacetime, which is crucial for integrating quantum theory with general relativity. The relationship between relativistic causality (from a spatio-temporal context) and cyclic or indefinite causality (formulated in an abstract information-theoretic context) has been studied in \cite{Portmann2017,VilasiniRennerPRA, VilasiniRennerPRL}. It was shown that physical realizations of any indefinite causal order process in a background spacetime, and in particular performed table-top experiments of the quantum switch (e.g., \cite{Procopio2015, Rubino2017}), will necessarily admit a fine-grained explanation in terms of a definite acyclic causal order process and involve the non-localization of systems in spacetime.\footnote{We note that non-localization of systems in spacetime as per \cite{VilasiniRennerPRA, VilasiniRennerPRL} is distinct from the notion of time-delocalized subsystems introduced in \cite{Oreshkov2019} to argue for the implementation of indefinite causal order in quantum switch experiments. In particular, the former is consistent with the existence of a fine-grained definite causal order explanation for such experiments and also features in physical protocols in spacetime \cite{Portmann2017} which do not involve any indefinite causal order (as also clarified in \cite{Vilasini2025}).} Extending such studies into Wigner's Friend scenarios in general theories, by modeling agents as physical systems of the theory whose actions must comply with relativistic principles in spacetime (even at a fine-grained level), remains a relevant avenue for future research. 


\paragraph{Acknowledgements.}
We thank L\'{i}dia del Rio for interesting discussions.
VV's research has been supported by an ETH Postdoctoral Fellowship. VV acknowledges support from ETH Zurich Quantum Center, the Swiss National Science Foundation via project No.\ 200021\_188541 and the QuantERA programme via project No.\ 20QT21\_187724.
NN acknowledges support from the Swiss
National Science Foundation through 
SNSF project No.\ $200020\_165843$ and through the 
the National Centre of
Competence in Research \emph{Quantum Science and Technology}
(QSIT). 


\bibliographystyle{unsrtnat}


\addcontentsline{toc}{section}{\sc{Appendix}}

\section{A multi-agent paradox in a genuine contextuality scenario: further details}
\label{appendix: kcbs}

In this appendix, we give further details of the Wigner's Friend scenario captured by the multi-agent set-up of \cref{sec:kcbs} based on the $5$-cycle KCBS contextuality scenario. Here, we explicitly show how the theory-independent assumptions of \cref{def: paradox} can be instantiated in quantum theory to demonstrate a multi-agent paradox in that set-up. The set-up consists of 5 agents $\{A_i\}_{i=1}^5$ and a single system $S$, along with a memory $\{M_i\}_{i=1}^5$ for each agent. The schematic depicting the protocol of the set-up and the unitary circuit description of the set-up (corresponding to modelling each measurement as a pure unitary, using setting $s_i=0$) are illustrated in \cref{fig:KCBS_scheme} and \cref{fig:KCBS_circuit} respectively. We first describe how the settings (capturing the idea of Heisenberg cuts, \cref{sec:meas_model}) are instantiated in this case, and analyse the details of the setup before using that analysis to demonstrate the reasoning steps leading to the paradox according to \cref{def: paradox}.

\paragraph{Instantiating the settings.} In our general framework, the setting labels $s_i\in\{1,0\}$ were treated abstractly as distinguishing whether a measurement $\mathcal{M}_i$ is modeled as producing classical records or as a pure transformation of a theory. We instantiate these in quantum theory according to \cite{Vilasini2022} in this particular example. As in the FR set-up (\cref{fig:fr-entanglement}), in this quantum set-up as well, the setting $s_i=0$ case for $A_i$'s measurement will correspond to the unitary description of the measurement: taking the memory $M_i$ to be initialised to $\ket{0}_{M_i}$ this will be a CNOT gate on the system measured by $A_i$ (control) and $M_i$ (target) in the basis of $A_i$'s measurement. For example, if $A_i$ measures a qubit state $\alpha\ket{0}_S+\beta\ket{1}_S$ in the $\{\ket{0}_S,\ket{1}_S\}$ and stores the outcome in their memory, this unitary is a CNOT which would yield the final state $\alpha\ket{0}_S\ket{0}_{M_i}+\beta\ket{1}_S\ket{1}_{M_i}$ in the $s_i=0$ model of the measurement. The setting $s_i=1$ case will correspond to applying the measurement projectors, which will only be used when reasoning about the resulting probabilities via the Born rule. Notice that the probabilities for $\{\ket{0}_S,\ket{1}_S\}$ in the state $\alpha\ket{0}_S+\beta\ket{1}_S$ are identical to the probabilities for $\{\ket{0}_S\ket{0}_{M_i},\ket{1}_S\ket{1}_{M_i}\}$ in the state $\alpha\ket{0}_S\ket{0}_{M_i}+\beta\ket{1}_S\ket{1}_{M_i}$, for all amplitudes $\alpha$ and $\beta$. Thus the setting $s_i=1$ case can be seen as first applying the unitary description of $s_i=0$ and then the joint projectors on $SM_i$ specified by the basis $\{\ket{0}_S\ket{0}_{M_i},\ket{1}_S\ket{1}_{M_i}\}$. 

This will be useful to keep in mind in our analysis. Our example will only involve pure states, and it will be sufficient to consider the states at each time step obtained via the purely unitary, $s_i=0$ description of each measurement (see \cref{fig:fr-kcbs} for the corresponding unitary circuit), and the default predictions of the scenario which involve using $s_i=1$ will be evident simply by expressing the unitarily evolved state in the basis of the relevant measurement projectors. The idea in the following is to reduce the relevant default predictions of this multi-agent set-up to the probabilities obtained in the KCBS model, equations~\eqref{eq: KCBSpr1}-\eqref{eq: KCBSpr5}, through this instantiation of the settings and thereby show a contradiction between the assumptions of \cref{def: paradox}. Note that the KCBS probabilities of equations~\eqref{eq: KCBSpr1}-\eqref{eq: KCBSpr5} imply the following for conditional probabilities.
\begin{subequations}
\begin{equation}
\label{eq: KCBSpr1_cond}
    P(a_1=0|a_2=1,\psi)=1,
\end{equation}
\begin{equation}
\label{eq: KCBSpr2_cond}
    P(a_2=1|a_3=0,\psi)=1,
\end{equation}
\begin{equation}
\label{eq: KCBSpr3_cond}
    P(a_3=0|a_4=1,\psi)=1,
\end{equation}
\begin{equation}
\label{eq: KCBSpr4_cond}
    P(a_4=1|a_5=0,\psi)=1,
\end{equation}
\begin{equation}
\label{eq: KCBSpr5_cond}
    P(a_5=0,a_1=1|\psi)>0.
\end{equation}
\end{subequations}

We now describe how the same conditional probabilities are reproduced via default setting-conditioned predictions of the multi-agent set-up described in \cref{sec:kcbs}.

\paragraph{Details of the setup and its predictions: }

We now describe each step of the above setup in detail, characterising the states and measurements involved that enable the agents to reason as in Equation~\eqref{eq: 5chain}. Firstly note that the 5 measurements of equations~\eqref{eq: KCBSmmt1}-\eqref{eq: KCBSmmt5} can be implemented using projectors corresponding to the following basis elements, by making a particular basis choice for the subspace corresponding to the ``0'' outcome in each case. Since we are considering a qutrit state, we will have 3 basis vectors and since the measurements are binary valued with the ``1'' outcome corresponding to the basis element $\ket{v_i}$, the complement space  $\mathds{1}-\ket{v_i}\bra{v_i}$ is two-dimensional. We denote this space as span$\{\ket{\overline{v}_i^1}, \ket{\overline{v}_i^2}\}$ where  $\ket{\overline{v}_i^1}$ and $\ket{\overline{v}_i^2}$ are orthogonal to each other and to $\ket{v_i}$ and both correspond to the ``0'' outcome of measurement $X_i$. 
\begin{subequations}
\begin{equation}
\label{eq: basis1}
    \{\ket{v_1},\ket{\overline{v}_1^1},\ket{\overline{v}_1^2}\}:= \{\frac{1}{\sqrt{3}}(\ket{0}-\ket{1}+\ket{2}), \frac{1}{\sqrt{2}}(\ket{1}+\ket{2}),\frac{1}{\sqrt{6}}(-2\ket{0}-\ket{1}+\ket{2})\},
\end{equation}
\begin{equation}
\label{eq: basis2}
    \{\ket{v_2},\ket{\overline{v}_2^1},\ket{\overline{v}_2^2}\}:= \{\frac{1}{\sqrt{2}}(\ket{0}+\ket{1}),\frac{1}{\sqrt{3}}(\ket{0}-\ket{1}+\ket{2}),\frac{1}{\sqrt{6}}(\ket{0}-\ket{1}-2\ket{2})\},
\end{equation}
\begin{equation}
\label{eq: basis3}
    \{\ket{v_3},\ket{\overline{v}_3^1},\ket{\overline{v}_3^2}\}:= \{\ket{2},\frac{1}{\sqrt{2}}(\ket{0}+\ket{1}),\frac{1}{\sqrt{2}}(\ket{0}-\ket{1})\},
\end{equation}
\begin{equation}
\label{eq: basis4}
    \{\ket{v_4},\ket{\overline{v}_4^1},\ket{\overline{v}_4^2}\}:= \{\ket{0},\ket{1},\ket{2}\},
\end{equation}
\begin{equation}
\label{eq: basis5}
    \{\ket{v_5},\ket{\overline{v}_5^1},\ket{\overline{v}_5^2}\}:= \{\frac{1}{\sqrt{2}}(\ket{1}+\ket{2}),\frac{1}{\sqrt{2}}(\ket{1}-\ket{2}),\ket{0}\}.
\end{equation}
\end{subequations}
Now at time $t=0$, $A_2$ measures the state $\ket{\psi^{t=0}}_S:=\ket{\psi}_S$ (Equation~\eqref{eq: KCBSstate}), which when expressed in the basis of $X_2$ (Equation~\eqref{eq: basis2}) is
\begin{equation}
\label{eq: t=0psi}
    \ket{\psi^{t=0}}_S=\frac{\sqrt{2}}{\sqrt{3}}\ket{v_2}_S+\frac{1}{3}\ket{\overline{v}_2^1}_S-\frac{\sqrt{2}}{3}\ket{\overline{v}_2^2}_S.
\end{equation}

Using the von-Neumann picture of a measurement as a unitary evolution on the initial state $\ket{\psi^{t=0}}_S\otimes\ket{0}_{M_2}$ (which constitutes the description of $A_2$'s measurement with setting $s_2=0$, with the memory initialised to $\ket{0}_{M_2}$), the state of the system $S$ and $A_2$'s memory $M_2$ after the measurement is
\begin{equation}
\label{eq: t=1psi}
\begin{split}
       \ket{\psi^{t=1}}_{SM_2}&=\frac{\sqrt{2}}{\sqrt{3}}\ket{v_2}_S\ket{1}_{M_2}+\frac{1}{3}\ket{\overline{v}_2^1}_S\ket{0}_{M_2}-\frac{\sqrt{2}}{3}\ket{\overline{v}_2^2}_S\ket{0}_{M_2}\\
    &=\frac{\sqrt{2}}{\sqrt{3}}\ket{\overline{v}_3^1}_S\ket{1}_{M_2}+\frac{1}{\sqrt{3}}\ket{v_3}_S\ket{0}_{M_2}.
\end{split}
\end{equation}

{\it Reproducing the KCBS probability \cref{eq: KCBSpr1_cond}: } 
Consider what happens when $A_2$ sees the outcome $a_2=1$. Given the knowledge of the steps of the protocol, the state evolution at each time step, from $A_2$'s perspective (upon seeing $a_2=1$) would be as follows, where we use the bold superscript $\mathbf{A_2}$ to denote that the following states are calculated from the perspective of $A_2$. The first step involves projecting the state in \cref{eq: t=1psi} using $\ket{\overline{v}_3^1}_S\ket{1}_{M_2}$ (hence using setting $s_2=1$ as this is a classical record for $A_2$). It does not impact the argument whether we use the sub-normalised or normalised version of the projected state and we drop the normalisation for brevity. Then we evolve this projected state using the purely unitary description of $A_3$, $A_4$, $A_5$'s measurements (i.e., settings $s_3=s_4=s_5=0$) and we will consider what outcome $A_1$ who measures last, would obtain (this uses setting $s_1=1$ as we will refer to their classical outcome). This allows to compute the default prediction $P(a_1=0|a_2=1,s_1=s_2=1,s_3=s_4=s_5=0)$. When $A_2$ observes the outcome $a_2=1$ at time $t=1$, they ascribe the state
    \begin{subequations}
    \begin{equation}
        \label{eq: t=1psiA2}
       \ket{\psi^{t=1}}_{SM_2}^{\mathbf{A_2}}=\ket{v_2}_S\ket{1}_{M_2}=\ket{\overline{v}_3^1}_S\ket{1}_{M_2},
    \end{equation}

    After $A_3$'s measurement (modelled unitarily, with $s_3=0$), this yields
        \begin{equation}
        \label{eq: t=2psiA2}
       \ket{\psi^{t=2}}_{SM_2M_3}^{\mathbf{A_2}}=\ket{\overline{v}_3^1}_S\ket{1}_{M_2}\ket{0}_{M_3},
    \end{equation}

    $A_4$'s unitary $U_{X_2}$ undoes $X_2$'s memory update, yielding
    \begin{equation}
        \label{eq: t=2psi2A2}
        \begin{split}
    \ket{\psi^{t=2}_{U_{X_2}}}_{SM_3M_2}^{\mathbf{A_2}}&=\ket{\overline{v}_3^1}_S\ket{0}_{M_3}\ket{0}_{M_2}\\
    &=\big(\frac{1}{\sqrt{2}}\ket{v_4}\ket{0}_{M_3}+\frac{1}{\sqrt{2}}\ket{\overline{v}_4^2}\ket{0}_{M_3}\big)\ket{0}_{M_2},
        \end{split}
    \end{equation}

    $A_4$'s measurement modeled as a unitary evolution (i.e., setting $s_4=0$) then gives the following, after which we apply $A_5$'s unitary $U_{X_3}$, $A_5$'s measurement modeled unitarily (i.e., setting $s_5=0$), and $A_1$'s
 unitary $U_{X_4}$  to obtain the subsequent equations for the state after each step
 
 \begin{equation}
 \label{eq: t=3psiA2}
     \ket{\psi^{t=3}_{U_{X_2}}}_{SM_3M_4M_2}^{\mathbf{A_2}}=\big(\frac{1}{\sqrt{2}}\ket{v_4}\ket{0}_{M_3}\ket{1}_{M_4}+\frac{1}{\sqrt{2}}\ket{\overline{v}_4^1}\ket{0}_{M_3}\ket{0}_{M_4}\big)\ket{0}_{M_2}
 \end{equation}
 \begin{equation}
   \label{eq: t=3psi2A2}  
   \begin{split}
  \ket{\psi^{t=3}_{U_{X_3}U_{X_2}}}_{SM_4M_3M_2}^{\mathbf{A_2}}&=\big(\frac{1}{\sqrt{2}}\ket{v_4}\ket{1}_{M_4}+\frac{1}{\sqrt{2}}\ket{\overline{v}_4^1}\ket{0}_{M_4}\big)\ket{0}_{M_3}\ket{0}_{M_2}\\
  &=\big(\frac{1}{\sqrt{2}}\ket{\overline{v}_5^2}\ket{1}_{M_4}+\frac{1}{2}\ket{v_5}\ket{0}_{M_4}+\frac{1}{2}\ket{\overline{v}_5^1}\ket{0}_{M_4}\big)\ket{0}_{M_3}\ket{0}_{M_2},
   \end{split}
 \end{equation}
 \begin{equation}
 \label{eq: t=4psiA2} 
 \ket{\psi^{t=4}_{U_{X_3}U_{X_2}}}_{SM_4M_5M_3M_2}^{\mathbf{A_2}}= \big(\frac{1}{\sqrt{2}}\ket{\overline{v}_5^2}\ket{1}_{M_4}\ket{0}_{M_5}+\frac{1}{2}\ket{v_5}\ket{0}_{M_4}\ket{1}_{M_5}+\frac{1}{2}\ket{\overline{v}_5^1}\ket{0}_{M_4}\ket{0}_{M_5}\big)\ket{0}_{M_3}\ket{0}_{M_2}   
 \end{equation}
 \begin{equation}
     \label{eq: t=4psi2A2} 
     \begin{split}
         \ket{\psi^{t=4}_{U_{X_4}U_{X_3}U_{X_2}}}_{SM_5M_4M_3M_2}^{\mathbf{A_2}}&=\big(\frac{1}{\sqrt{2}}\ket{\overline{v}_5^2}\ket{0}_{M_5}+\frac{1}{2}\ket{v_5}\ket{1}_{M_5}+\frac{1}{2}\ket{\overline{v}_5^1}\ket{0}_{M_5}\big)\ket{0}_{M_4}\ket{0}_{M_3}\ket{0}_{M_2}\\
         &=\big(-\frac{\sqrt{3}}{2}\ket{\overline{v}_1^2}\ket{0}_{M_5}+\frac{1}{2}\ket{\overline{v}_1^1}\ket{1}_{M_5}\big)\ket{0}_{M_4}\ket{0}_{M_3}\ket{0}_{M_2},
     \end{split}
 \end{equation}
    \end{subequations}

This state has zero overlap with $A_1$'s measurement projector $\ket{v_1}_S$ which corresponds to $a_1=1$, this is also explicitly seen when one calculates $A_1$'s post-measurement state applying the unitary description of $A_1$'s measurement (as below) and seeing that the projector $\ket{v_1}_S\ket{1}_{M_1}$ (associated with $s_1=1$) annihilates this state.
 \begin{equation}
   \label{eq: t=5psiA2} 
  \ket{\psi^{t=5}_{U_{X_4}U_{X_3}U_{X_2}}}_{SM_5M_1M_4M_3M_2}^{\mathbf{A_2}}=\big(-\frac{\sqrt{3}}{2}\ket{\overline{v}_1^2}\ket{0}_{M_5}\ket{0}_{M_1}+\frac{1}{2}\ket{\overline{v}_1^1}\ket{1}_{M_5}\ket{0}_{M_1}\big)\ket{0}_{M_4}\ket{0}_{M_3}\ket{0}_{M_2}.   
 \end{equation}
 
 Therefore $A_2$ would obtain the following default prediction which mirrors \cref{eq: KCBSpr1_cond}.
 \begin{equation}
     \label{eq: KCBS_prediction1}
     P(a_1=0|a_2=1,s_1=s_2=1,s_3=s_4=s_5=0)=1
 \end{equation}

{\it Reproducing the KCBS probability \cref{eq: KCBSpr2_cond}: } For this, we need to consider $A_3$'s measurement. First modelling $A_3$'s measurement on $S$ as a unitary evolution on $SM_3$ with $M_3$ initialised to $\ket{0}$ gives the following state on $SM_2M_3$ at the next time step, $t=2$.
\begin{equation}
    \label{eq: t=2psi}
\begin{split}
       \ket{\psi^{t=2}}_{SM_2M_3}
    &=\frac{\sqrt{2}}{\sqrt{3}}\ket{\overline{v}_3^1}_S\ket{1}_{M_2}\ket{0}_{M_3}+\frac{1}{\sqrt{3}}\ket{v_3}_S\ket{0}_{M_2}\ket{1}_{M_3}\\
    &=\frac{\sqrt{2}}{\sqrt{3}}\ket{v_2}_S\ket{1}_{M_2}\ket{0}_{M_3}+\frac{1}{3}\ket{\overline{v}_2^1}_S\ket{0}_{M_2}\ket{1}_{M_3}-\frac{\sqrt{2}}{3}\ket{\overline{v}_2^2}_S\ket{0}_{M_2}\ket{1}_{M_3}.
\end{split}
\end{equation}

Now, to consider the case where $A_3$ observes $a_3=0$, we must apply the projector $\ket{v_2}_{S}\ket{0}_{M_3}$ (thus the setting $s_3=1$), but this has zero-overlap with terms where $\ket{0}_{M_2}$ i.e. $A_2$ observed $a_2=0$. Since the remaining parties act after $A_2$ and $A_3$, the joint probability of $a_2$ and $a_3$ will be independent of the settings $s_4$, $s_5$ and $s_1$ (follows from Theorem of \cite{Vilasini2022}) and we can w.l.o.g. take these to be 0. This yields the following default prediction analogous to \cref{eq: KCBSpr2_cond}.

\begin{equation}
    \label{eq: KCBS_prediction2}
    P(a_2=1|a_3=0,s_1=0,s_2=s_3=1,s_4=s_5=0)=1
\end{equation}

{\it Reproducing the KCBS probability \cref{eq: KCBSpr3_cond}: } 
This concerns $A_3$ and $A_4$'s outcomes. 
In the given set-up, $A_4$ jointly operates upon $SM_2$ as follows: they first perform a unitary $U_{X_2}$ on $SM_2$ that ``undoes'' the unitary corresponding to $A_2$'s measurement (which was incompatible to $A_4$'s measurement) i.e., after the unitary, the state on $M_2$ (obtained using the $s_2=0$ unitary description) gets reset to $\ket{0}$, allowing $A_4$ to measure $X_4$ on $S$, as though the only measurement preceding it was the compatible measurement of $X_3$. The unitary $U_{X_2}:=\ket{v_2}\bra{v_2}_S\otimes(\sigma_X)_{M_2}+(\mathds{1}-\ket{v_2}\bra{v_2})_S\otimes\mathds{1}_{M_2}$ is CNOT in the basis of $X_2$, with $S$ as the control and $M_2$ as the target, where $\sigma_X$ denotes the Pauli $X$ operator that implements a qubit bit flip. Then the joint state of $SM_2M_3$ after the application of $U_{X_2}$ is denoted as $\ket{\psi^{t=2}_{U_{X_2}}}_{SMM_3M_2}$ and given as
\begin{equation}
\label{eq: t=2psi2}
\begin{split}
       \ket{\psi^{t=2}_{U_{X_2}}}_{SM_3M_2}&=\big(\frac{\sqrt{2}}{\sqrt{3}}\ket{\overline{v}_3^1}_S\ket{0}_{M_3}+\frac{1}{\sqrt{3}}\ket{v_3}_S\ket{1}_{M_3}\big)\ket{0}_{M_2}\\
   &=\big(\frac{1}{\sqrt{3}}\ket{v_4}_S\ket{0}_{M_3}+\frac{1}{\sqrt{3}}\ket{\overline{v}_4^1}_S\ket{0}_{M_3}+\frac{1}{\sqrt{3}}\ket{\overline{v}_4^2}_S\ket{1}_{M_3}\big)\ket{0}_{M_2}
\end{split}
\end{equation}
Then, after $A_4$'s measurement of $X_4$ on $S$, described in unitary picture, the state of $SM_3M_4M_2$ would be
\begin{equation}
    \label{eq: t=3psi}
    \ket{\psi^{t=3}_{U_{X_2}}}_{SM_3M_4M_2}=\big(\frac{1}{\sqrt{3}}\ket{v_4}_S\ket{0}_{M_3}\ket{1}_{M_4}+\frac{1}{\sqrt{3}}\ket{\overline{v}_4^1}_S\ket{0}_{M_3}\ket{0}_{M_4}+\frac{1}{\sqrt{3}}\ket{\overline{v}_4^2}_S\ket{1}_{M_3}\ket{0}_{M_4}\big)\ket{0}_{M_2}.
\end{equation}

To consider the case where $A_4$ observes $a_4=1$, we apply the projector $\ket{v_4}_S\ket{1}_{M_4}$ (which corresponds to setting $s_4=1$), but this has zero overlap with $\ket{1}_{M_3}$, and yields the following default prediction replicating \cref{eq: KCBSpr3_cond}. Here, recall that we chose $s_2=0$ for $A_2$'s measurement while $A_5$ and $A_1$ act later allowing their setting to be taken as $a_1=s_5=0$ as before (w.l.o.g as the prediction is independent of this setting).
\begin{equation}
\label{eq: KCBS_prediction3}
    P(a_3=0|a_4=1,s_1=s_2=0, s_3=s_4=1,s_5=0)=1
\end{equation}

{\it Reproducing the KCBS probability \cref{eq: KCBSpr4_cond}: } This involves $A_4$ and $A_5$'s outcomes. 
Similarly to the previous step, $A_5$'s measurement on $SM_3$ consists of a unitary $U_{X_3}:=\ket{v_3}\bra{v_3}_S\otimes(\sigma_X)_{M_3}+(\mathds{1}-\ket{v_3}\bra{v_3})_S\otimes\mathds{1}_{M_3}$ (a CNOT in the $X_3$ basis that ``undoes'' $A_3$'s incompatible measurement), followed by a measurement of $X_5$ on $S$. The state $\ket{\psi^{t=3}_{U_{X_3}U_{X_2}}}_{SM_4M_3M_2}$ after $A_5$ applies $U_{X_3}$ is
\begin{equation}
\label{eq: t=3psi2}
\begin{split}
       \ket{\psi^{t=3}_{U_{X_3}U_{X_2}}}_{SM_4M_3M_2}&=\big(\frac{1}{\sqrt{3}}\ket{v_4}_S\ket{1}_{M_4}+\frac{1}{\sqrt{3}}\ket{\overline{v}_4^1}_S\ket{0}_{M_4}+\frac{1}{\sqrt{3}}\ket{\overline{v}_4^2}_S\ket{0}_{M_4}\big)\ket{0}_{M_3}\ket{0}_{M_2}\\
   &=\big(\frac{1}{\sqrt{3}}\ket{\overline{v}_5^1}_S\ket{1}_{M_4}+\frac{\sqrt{2}}{\sqrt{3}}\ket{v_5}_S\ket{0}_{M_4}\big)\ket{0}_{M_3}\ket{0}_{M_2},
\end{split}
\end{equation}
and the state of $SM_4M_5M_3M_2$ after $A_5$'s measurement (applying the unitary model of the measurement, or setting $s_5=0$) is
\begin{equation}
\label{eq: t=4psi}
    \ket{\psi^{t=4}_{U_{X_3}U_{X_2}}}_{SM_4M_5M_3M_2}=\big(\frac{1}{\sqrt{3}}\ket{\overline{v}_5^1}_S\ket{1}_{M_4}\ket{0}_{M_5}+\frac{\sqrt{2}}{\sqrt{3}}\ket{v_5}_S\ket{0}_{M_4}\ket{1}_{M_5}\big)\ket{0}_{M_3}\ket{0}_{M_2}.
\end{equation}

Reasoning exactly as in the previous steps, we see that projecting on $A_5$ seeing the outcome $a_5=0$ (which will involve setting $s_5=1$) we have zero-overlap with the projector for $a_4=0$ (involves $s_4=1$) and we have used the unitary description for the remaining measurements, which gives the following default prediction which reproduces \cref{eq: KCBSpr4_cond}.
\begin{equation}
\label{eq: KCBS_prediction4}
    P(a_4=1|a_5=0,s_1=s_2=s_3=0,s_4=s_5=1).
\end{equation}

{\it Reproducing the KCBS probability \cref{eq: KCBSpr5_cond}: } 
Finally $A_1$, performs the unitary $U_{X_4}:=\ket{v_4}\bra{v_4}_S\otimes(\sigma_X)_{M_4}+(\mathds{1}-\ket{v_4}\bra{v_4})_S\otimes\mathds{1}_{M_4}$ (a CNOT in the $X_4$ basis that ``undoes'' $A_4$'s incompatible measurement, when it is described using $s_4=0$) after which the state is
\begin{equation}
    \label{eq: t=4psi2}
    \begin{split}
    \ket{\psi^{t=4}_{U_{X_4}U_{X_3}U_{X_2}}}_{SM_5M_4M_3M_2}&=\big(\frac{1}{\sqrt{3}}\ket{\overline{v}_5^1}_S\ket{0}_{M_5}+\frac{\sqrt{2}}{\sqrt{3}}\ket{v_5}_S\ket{1}_{M_5}\big)\ket{0}_{M_4}\ket{0}_{M_3}\ket{0}_{M_2}\\
    &=\big(\frac{1}{3}\ket{v_1}_S\ket{0}_{M_5}-\frac{\sqrt{2}}{3}\ket{\overline{v}_1^2}_S\ket{0}_{M_5}+\frac{\sqrt{2}}{\sqrt{3}}\ket{\overline{v}_1^1}_S\ket{1}_{M_5}\big)\ket{0}_{M_4}\ket{0}_{M_3}\ket{0}_{M_2},
    \end{split}
\end{equation}
and then $A_1$ measures $X_1$ on $S$ giving the following final state by modelling $A_1$'s measurement unitarily.
\begin{align}
    \label{eq: t5psi}
  \begin{split}      &\ket{\psi^{t=5}_{U_{X_4}U_{X_3}U_{X_2}}}_{SM_5M_1M_4M_3M_2}\\=&\big(\frac{1}{3}\ket{v_1}_S\ket{0}_{M_5}\ket{1}_{M_1}-\frac{\sqrt{2}}{3}\ket{\overline{v}_1^2}_S\ket{0}_{M_5}\ket{0}_{M_1}+\frac{\sqrt{2}}{\sqrt{3}}\ket{\overline{v}_1^1}_S\ket{1}_{M_5}\ket{0}_{M_1}\big)\ket{0}_{M_4}\ket{0}_{M_3}\ket{0}_{M_2}.
        \end{split}  
\end{align}

To compute the joint probability for $a_5=0$ and $a_1=1$ we must apply a projector on $\ket{0}_{M_5}\ket{1}_{M_1}$ (this involves settings $s_1=s_5=1$). This has non-zero overlap with the above state and yields the default prediction below, which mirrors \cref{eq: KCBSpr5_cond}.
\begin{equation}
    \label{eq: KCBS_prediction5}
    P(a_5=0,a_1=1|s_1=1,s_2=s_3=s_4=0,s_5=1)>0.
\end{equation}

\paragraph{Reasoning: } 
Since for every pair of adjacent agents, neither agent measures the others' memory and both perform mutually compatible measurements on the same system, the agents 
can reason using the predictions of equations \eqref{eq: KCBS_prediction1} to \eqref{eq: KCBS_prediction5}
while being compatible with the {\bf reasoning about compatible agents} assumption. Moreover, we take the setup and quantum theory to be in the {\bf common knowledge} of all agents (this was used in deriving the above-mentioned predictions), and further imposing the {\bf setting-independence} assumption allows us to ignore all setting-labels in the statements derived from these default predictions (i.e., drop the $\vec{s}$ subscript in the statements given by applying \Cref{def:language_v2} to equations \eqref{eq: KCBS_prediction1} to \eqref{eq: KCBS_prediction5}). 
When the post-selection $a_1=1\land a_5=0$ succeeds (which it will at some point, according to \eqref{eq: KCBS_prediction5}), Agent $A_5$ reasons as follows:

\begin{enumerate}
   \item \emph{$A_5$ reasons about $A_4$'s outcome.} $A_5$ considers the state $\ket{\psi^{t=3}_{U_{X_3}U_{X_2}}}$ just before measuring $X_5$ on $S$ at $t=3$ (and after applying the unitary $U_{X_3}$). $A_5$ then reasons using the second line of Eq.~\eqref{eq: t=3psi2} that having observed the outcome $a_5=0$, the state of the system must be $\ket{\overline{v}_5^1}_S$ in which case, the state of $M_4$ must be $\ket{1}_{M_4}$. This gives the prediction of \eqref{eq: KCBS_prediction4}, which allows $A_5$ to conclude that $A_4$ must have observed $a_4=1$ obtaining the following statement, after using {\bf setting independence}. 
   $$ K_5 (a_5=0 \implies a_4=1).$$
   \item \emph{$A_5$ reasons about $A_4$'s reasoning.} $A_5$ reasons that $A_4$ would have seen the pre-measurement state $\ket{\psi^{t=2}_{U_{X_2}}}$ just before measuring $X_4$ on $S$ at $t=2$. Then, $A_4$ would have reasoned using the second line of Eq.~\eqref{eq: t=2psi2} that having observed $a_4=1$, the state of the system is $\ket{v_4}_S$ in which case, the state of $M_3$ is $\ket{0}_{M_3}$. That is $A_4$ would establish the prediction of \cref{eq: KCBS_prediction3} and $A_4$ would therefore conclude that $A_3$ must have observed $a_3=0$, and obtain the following statement after ignoring settings
    $$ K_5K_4 (a_4=1 \implies a_3=0).$$
    \item \emph{$A_5$ reasons about $A_4$'s reasoning about $A_3$'s reasoning.} $A_5$ reasons that $A_4$ would have reasoned that $A_3$
    would have seen the pre-measurement state $\ket{\psi^{t=1}}$ just before measuring $X_3$ on $S$ at $t=1$. Then, $A_3$ would have reasoned using the second line of Eq.~\eqref{eq: t=1psi} that having observed $a_3=0$, the state of the system is $\ket{\overline{v}_3^1}_S$ in which case, the state of $M_2$ is $\ket{1}_{M_2}$, and $A_3$ would therefore conclude using \cref{eq: KCBS_prediction2} that $A_2$ must have observed $a_2=1$, obtain the following after dropping the settings. 
    $$ K_5K_4K_3 (a_3=0 \implies a_2=1).$$
    \item \emph{$A_5$ reasons about $A_4$'s reasoning about $A_3$'s reasoning about $A_2$'s reasoning.} $A_5$ reasons that $A_4$ would have reasoned that $A_3$ would have concluded that $A_2$ must have obtained the outcome $a_2=1$ and upon observing this outcome, $A_2$ can reason about what $A_1$ would see, following the steps leading to the default prediction computed in \cref{eq: KCBS_prediction1}. 
This prediction used along with the {\bf setting-independence} assumption of \cref{def: paradox}, would allow $A_2$ to conclude that $a_2=1\Rightarrow a_1=0$ and we have,   
       $$ K_5K_4K_3K_2 (a_2=1 \implies a_1=0).$$
 \item \emph{$A_5$ applies trust relations.}  $A_5$ can then combine the statements obtained in the above items, by applying the trust relations (as entailed by the {\bf reasoning about compatible agents} assumption of \Cref{def: paradox}). As mentioned before, each pair of adjacent parties in the 5-cycle trust each other since they perform mutually compatible measurements on the same system (and do not tamper with each other's memories). The trust relations are applied to each statement as shown in the following example,
    \begin{align*}
         K_5 K_4 K_3 K_2 &(a_2=1 \implies a_1=0) \\
         \implies K_5 K_4 K_3  &(a_2=1 \implies a_1=0) \qquad \because A_2 \leadsto A_3\\
         \implies K_5 K_4  &(a_2=1 \implies a_1=0) \qquad \because A_3 \leadsto A_4\\
         \implies K_5 &(a_2=1 \implies a_1=0), \qquad \because A_4 \leadsto A_5
    \end{align*}
    and similarly for the other statements, so that we obtain 
    \begin{align*}
         & K_5  \big[ (a_5=0 \implies a_4=1) \wedge (a_4=1 \implies a_3=0) \wedge  (a_3=0 \implies a_2=1) \wedge (a_2=1 \implies a_1=0) \big]
    \end{align*} 
    Therefore in any run of the protocol where the post-selection of $a_1=1\land a_5=0$ succeeds, $K_5$ can reason about other agents' as above, using the first three assumptions of \cref{def: paradox} to obtain the paradoxical statement
        $$K_5 (a_1=1\land a_5=0 \implies a_1=0).$$
\end{enumerate}
In other words, assuming the first three assumptions of \Cref{def: paradox}, we have arrived at a contradiction to the fourth assumption, which reveals a multi-agent paradox according to that definition.

\section{Proof of \Cref{theorem: main}}
\label{appendix: proof_main}
\ParadoxImpliesContextuality*
\begin{sloppypar}

\begin{proof}

Consider an $N$-agent multi-agent setup $\mathcal{MA}$ in a theory $\mathbb{T}$. We now consider agents reasoning in this theory using all the assumptions of \Cref{def: paradox} and show that if they arrive at a contradiction, then there exists a logically contextual empirical model in the theory. The first assumption about {\bf common knowledge} allows us to take $\mathbb{T}$ and the setup $\mathcal{MA}$ to be in the common knowledge of the agents $\{A_i\}_{i=1}^N$ of the scenario.

Using the given multi-agent setup $\mathcal{MA}$ we can construct a measurement scenario (\Cref{def: meas_scenario}) and an empirical model for that scenario (\Cref{def: emp_model}), as we detail below.

Recall that a measurement scenario is specified by a set $X$ of variables, a set $\mathcal{C}$ of contexts which are families of jointly measurable subsets of variables in $X$, a set of outcomes $O$ (which assign possible values $O_x$ for every $x\in X$) and an initial state $\rho$. An empirical model for the scenario then specifies, a probability distribution $P_C$ for every context $C\in \mathcal{C}$ of the scenario.

For our given multi-agent setup $\mathcal{MA}$, $X$ is given by the set of all outcome variables $X:=\{a_i\}_{i=1}^N$, one variable $a_i$ for each agent $A_i$ of the setup. Then the contexts for such an $X$ must be $\mathcal{C}\subseteq \text{Powerset}(\{a_i\}_{i=1}^N)$ by definition. More precisely $\mathcal{C}$ is constructed by including, for every compatible set of measurements $\{\mathcal{M}_{j_k}\}_{k=1}^{k=p}$ in the multi-agent setup, the corresponding outcome set $C:=\{a_{j_k}\}_{k=1}^{k=p}$ in $\mathcal{C}$. 
The outcomes $O_x$ for each $x\in X$ are given by the outcome sets $\mathtt{O}_i$ assigned to each measurement outcome $a_i$ of the setup (\Cref{def: MAsetup}). The multi-agent set-up specifies, by definition, an initial state of the system and memories, we take this to be $\rho$. We therefore have a measurement scenario given by $(\{a_i\}_{i=1}^N,\mathcal{C},\{\mathtt{O}_i\}_{i=1}^N, \rho)$.

By \Cref{def:compatibility_v2}, for each set $\{\mathcal{M}_{j_k}\}_{k=1}^{k=p}$ of compatible measurements in the setup, the associated default prediction $P(a_{j_1},...,a_{j_p}|\vec{s})$ can be equivalently computed by applying an operationally equivalent set of measurements (the primed measurements) on the state $\rho$, $P(a_{j_1},...,a_{j_p}|\rho,\{\mathcal{M}'_{j_k}\}_{k=1}^p)$. 
Observe that the primed  measurements are in one-to-one correspondence with the elements of chosen context $C$, and moreover the definition of compatibility ensures that the primed measurements are jointly measurable, thus $C$ defines a valid context. Since we have associated to each context $C:=\{a_{j_k}\}_{k=1}^{k=p}$, a distribution   $e_C:=P(a_{j_1},...,a_{j_p}|\rho,\{\mathcal{M}'_{j_k}\}_{k=1}^p)$, we have now specified an empirical model for the measurement scenario $(\{a_i\}_{i=1}^N,\mathcal{C},\{\mathtt{O}_i\}_{i=1}^N, \rho)$.

The above construction uses the default prediction $P(a_{j_1},...,a_{j_p}|\vec{s})$ associated with the given subset of measurements. Recall that the setting vector $\vec{s}$ in such a prediction involves the setting values $s_{i}=1$ if $i\in\{j_1,...,j_p\}$ and $s_i=0$ otherwise, where $\{j_1,...,j_p\}$ is specified by the subset of measurements under consideration. Therefore the setting values are distinct for the different contexts $C$ that we consider.

The {\bf setting-independence} assumption allows agents in the setup to ignore the specification of these setting values in all the reasoning. Since the details of the setup such as the initial state $\rho$, the measurements $\mathcal{M}_i$ (and therefore the measurements $\mathcal{M}'_i$ constructed from this) are all in the common knowledge of the agents, we also do not condition on these explicitly when referring to statements in the setup. Thus having imposed the {\bf common knowledge} and {\bf setting-independence} assumptions of \Cref{def: paradox}, agents will only be reasoning using statements of the form $\vec{a}_j=\vec{v}_j$ or $\vec{a}_l=\vec{v}_l\Rightarrow \vec{a}_j=\vec{v}_j$, which are the allowed form of statements according to \Cref{def:language_v2} but without setting labels.

Moreover, imposing the {\bf reasoning about compatible agents} assumption, the statements $\vec{a}_j=\vec{v}_j$ and $\vec{a}_l=\vec{v}_l\Rightarrow \vec{a}_j=\vec{v}_j$ can each be only composed of outcomes of compatible measurements. Moreover, applying \Cref{def:language_v2} then tells us that $\vec{a}_j=\vec{v}_j$ is only possible when at least one context $C\supseteq \vec{a}_j$ supports (through the probability $e_C$) the value assignment $\vec{v}_j$ and $\vec{a}_l=\vec{v}_l\Rightarrow \vec{a}_j=\vec{v}_j$ only when at least one context $C\supseteq \vec{a}_j\cup \vec{a}_l$ has a distribution $e_C$ which has zero probability of $ \vec{a}_j=\neg \vec{v}_j \land \vec{a}_l=\vec{v}_l$.

Now suppose that agents reason using the first three assumptions of \Cref{def: paradox} to arrive  at a violation of the last assumption, {\bf non-contradictory outcomes} of \Cref{def: paradox}, i.e., there exists an agent $A$ and a set $\vec{a}_j$ of outcomes associated with a compatible set of measurements such that $K_A(\vec{a}_j=\vec{v}_j\land \vec{a}_j=\neg \vec{v}_j)$. Without loss of generality, suppose that in a given run of the experiment, $\vec{a}_j=\vec{v}_j$ is observed. From what we have established above, it follows that agents reasoning using the first three assumptions can only arrive at $\vec{a}_j=\neg \vec{v}_j$ in this run of the experiment if there exists an agent $A$ who obtains a set of statements of the following form,

\begin{align}
\begin{split}
\label{eq: proof_liarcycle}
    \vec{a}_j&=\vec{v}_j \\
    \land \quad  \vec{a}_j=\vec{v}_j   &\Rightarrow \vec{a}_{l_1}=\vec{v}_{l_1}\\
    \land \quad \vec{a}_{l_1}=\vec{v}_{l_1} &\Rightarrow \vec{a}_{l_2}=\vec{v}_{l_2}\\
    &.\\
   & .\\
    &.\\   
\land\quad  \vec{a}_{l_q}=\vec{v}_{l_q} &\Rightarrow  \vec{a}_j=\neg \vec{a}_j,    
\end{split}
\end{align}

where each pair of sets of outcomes related by an implication $\Rightarrow$ are associated with compatible measurement sets.

Note that in arriving at such a chain, we have already used the trust rule (entailed by the {\bf reasoning about compatible agents} assumption) which allows one agent to inherit the knowledge of another compatible agent e.g., if $K_A(\vec{a}_j=\vec{v}_j \Rightarrow \vec{a}_{l_1}=\vec{v}_{l_1})$ and $K_AK_B(\vec{a}_{l_1}=\vec{v}_{l_1} \Rightarrow \vec{a}_{l_2}=\vec{v}_{l_2})$ where $A$ and $B$ are compatible with each other and all agents whose outcomes appear in the associated statements, then the later implies  $K_A(\vec{a}_{l_1}=\vec{v}_{l_1} \Rightarrow \vec{a}_{l_2}=\vec{v}_{l_2})$ by the trust rule and together with the former yields $K_A(  \vec{a}_j=\vec{v}_j \Rightarrow \vec{a}_{l_1}=\vec{v}_{l_1} \land \vec{a}_{l_1}=\vec{v}_{l_1} \Rightarrow \vec{a}_{l_2}=\vec{v}_{l_2})$. 
This means that the empirical model specifies a non-zero probability for $  \vec{a}_j=\vec{v}_j$, but a zero probability for $\vec{a}_j=\vec{v}_j  \land \vec{a}_{l_1}=\neg \vec{v}_{l_1}$, a zero probability for $\vec{a}_{l_1}=\vec{v}_{l_1}  \land \vec{a}_{l_2}=\neg \vec{v}_{l_2}$,...., and a zero probability for $\vec{a}_{l_q}=\vec{v}_{l_q}  \land \vec{a}_{j}= \vec{v}_{j}$.

Let us assume logical non-contextuality, and prove a contradiction. Logical non-contextuality implies that the value assignment $\vec{a}_j=\vec{v}_j$ must belong to  compatible family of value assignments, which in-particular means that the assignment must respect the support of the distributions in the empirical model, e.g., a zero probability for $\vec{a}_j=\vec{v}_j  \land \vec{a}_{l_1}=\neg \vec{v}_{l_1}$ implies that given the value assignment $\vec{a}_j=\vec{v}_j $ (which we have taken w.l.o.g. to be the observed outcome), we must have the value assignment $ \vec{a}_{l_1}= \vec{v}_{l_1}$ according to the support of the empirical model. Similarly,  a zero probability for $\vec{a}_{l_1}=\vec{v}_{l_1}  \land \vec{a}_{l_2}=\neg \vec{v}_{l_2}$ implies that given the value assignment $\vec{a}_{l_1}=\vec{v}_{l_1} $ (which we have established), we must have the value assignment $ \vec{a}_{l_2}= \vec{v}_{l_2}$ according to the support of the empirical model. Applying this logic to each step of the chain, we finally have that given $\vec{a}_{l_q}=\vec{v}_{l_q}$, we must have $\vec{a}_j=\neg \vec{a}_j$, which contradicts the initial assignment $\vec{a}_j=\vec{a}_j$. This shows that $\vec{a}_j=\vec{a}_j$ cannot belong to a compatible  family of value assignments, and certifies the logical contextuality of the constructed empirical model. This completes the proof.

\end{proof}
    
\end{sloppypar}

\section{Proofs of Section~\ref{sec:character}}
\label{appendix:epistemic-quantum}

\subsection{Proofs for \cref{sec: structure_general}}

\yabloB*
\begin{proof}
A Yablo type chain for the $N$ agents would be one where for some permutation $\pi$ of $\{1,…,N\}$, the following holds for every $1\leq i <N$ 
\begin{align}
\begin{split}
    &a_{\pi(i)}=1 \implies a_{\pi(j)}=0 \quad \forall j>i,\\
&a_{\pi(i)}=0 \implies \exists j>i,\quad a_{\pi(j)}=1.
\end{split}
\end{align}

In particular, $a_{\pi(1)}=1\implies a_{\pi(2)}=…=a_{\pi(N)}=0$. Suppose such a statement arose under any setting choice $\vec{s}$ in the reasoning for multi-agent paradox i.e., $\phi_{\vec{s}}=\Big(a_{\pi(1)}=1\implies a_{\pi(2)}=…=a_{\pi(N)}=0\Big)_{\vec{s}}$ is a statement of the set-up (\cref{def:language_v2}) that is used in the reasoning in accordance with the assumptions of \cref{def: paradox}. Then the {\bf reasoning about compatible agents} assumption would imply that the set $\{a_{\pi(1)},…,a_{\pi(N)}\}$ of outcomes correspond to a compatible set of measurements. Since $\pi$ is simply a permutation, this is in fact the set of all outcomes $\{a_1,..,a_N\}$ of the set-up implying that all $N$ measurements must be mutually compatible or jointly measurable (\cref{def:compatibility_v2}). Therefore the predictions of the protocol are explained by a single well-defined probability distribution on $\{a_1,..,a_N\}$ (independent of settings) which recovers the marginals of all the subsets of measurements. In other words, the associated model is non-contextual (hence logically non-contextual) and cannot admit any multi-agent paradox, according to \cref{theorem: main}.

\end{proof}

\postselect*

\begin{proof}
The setting-independence assumption in the definition of a multi-agent paradox allows us to ignore all the setting labels in the probabilities and statements, as also explained in the paragraph following \cref{def: paradox}. Thus the relevant probabilities will only refer to measurement outcomes of agents. Keeping this in mind, we proceed as follows. We have $P(\vec{a}_1=\vec{v}_1)=P(\vec{a}_1=\vec{v}_1,\vec{a}_N=\neg \vec{v}_N)+P(\vec{a}_1=\vec{v}_1,\vec{a}_N= \vec{v}_N)$, the probabilities on the right are well-defined since $\vec{a}_1$ and $\vec{a}_N$ are jointly measurable. Using $P(\vec{a}_1=\vec{v}_1,\vec{a}_N=\neg \vec{v}_N)=1$ in this along with the non-negativity of all probabilities gives $P(\vec{a}_1=\vec{v}_1)=1$ and $P(\vec{a}_1=\vec{v}_1,\vec{a}_N= \vec{v}_N)=0$. 

Next, the implication $\vec{a}_1=\vec{v}_1\Rightarrow \vec{a}_2=\vec{v}_2$ gives $P(\vec{a}_2=\vec{v}_2|\vec{a}_1=\vec{v}_1)=\frac{P(\vec{a}_2=\vec{v}_2,\vec{a}_1=\vec{v}_1)}{P(\vec{a}_1=\vec{v}_1)}=1$. Using this with $P(\vec{a}_1=\vec{v}_1)=1$, we get $P(\vec{a}_2=\vec{v}_2,\vec{a}_1=\vec{v}_1)=1$. Since $P(\vec{a}_2=\vec{v}_2)=P(\vec{a}_2=\vec{v}_2,\vec{a}_1=\vec{v}_1)+P(\vec{a}_2=\vec{v}_2,\vec{a}_1=\neg \vec{v}_1)$, we must have $P(\vec{a}_2=\vec{v}_2)=1$ and $P(\vec{a}_2=\vec{v}_2,\vec{a}_1=\neg \vec{v}_1)=0$. Repeating the same argument, we get $P(\vec{a}_i=\vec{v}_i)=1$ for all $i\in\{1,…,N\}$.

In particular, this implies $P(\vec{a}_N=\neg\vec{v}_N)=0=P(\vec{a}_N=\neg\vec{v}_N,\vec{a}_1=\vec{v}_1)+P(\vec{a}_N=\neg\vec{v}_N,\vec{a}_1=\neg \vec{v}_1)$ which in turn implies  $P(\vec{a}_1=\vec{v}_1,\vec{a}_N=\neg\vec{v}_N)=0$. This contradicts our initial assumption that  $P(\vec{a}_1=\vec{v}_1,\vec{a}_N=\neg\vec{v}_N)=1$. Therefore such a chain cannot exist in any theory. 
\end{proof}

\negation*

\begin{proof}

As noted in the previous proof, in analysing the structure of multi-agent paradoxes according to \cref{def: paradox}, we can ignore setting labels in probabilities and statements, without loss of generality, to simplify the analysis. Let us consider the joint probability distribution $P(\vec{a},\vec{b})$ over measurement outcomes $\vec{a}$ and $\vec{b}$. From the inference $\vec{a}=\vec{v}_a \Rightarrow \vec{b} = \vec{v}_b$ we have:
\begin{gather*}
    P(\vec{b} = \vec{v}_b | \vec{a} = \vec{v}_a) = \frac{ P(\vec{a} = \vec{v}_a, \vec{b} = \vec{v}_b)}{ P(\vec{a} = \vec{v}_a) } = 1 \Rightarrow \\
    P(\vec{a} = \vec{v}_a, \vec{b} = \vec{v}_b) = P(\vec{a} = \vec{v}_a)=1 \ \text{and} \ P(\vec{a} = \vec{v}_a, \vec{b} = \neg \vec{v}_b) = 0.
\end{gather*}
Then the conditional probability $P(\vec{a}=\neg\vec{v}_a|\vec{b}=\neg\vec{v}_b)$ is given as
\begin{align*}
 P(\vec{a}=\neg\vec{v}_a|\vec{b}=\neg\vec{v}_b) &= \frac{P(\vec{a}=\neg\vec{v}_a,\vec{b}=\neg\vec{v}_b)}{P(\vec{b}=\neg\vec{v}_b)} \\ &= \frac{1 - P(\vec{a}=\neg\vec{v}_a, \vec{b}=\vec{v}_b) - P(\vec{a}=\vec{v}_a, \vec{b}=\neg\vec{v}_b) - P(\vec{a}=\vec{v}_a, \vec{b}=\vec{v}_b)}{P(\vec{b}=\neg\vec{v}_b)} \\ &= \frac{1 - P(\vec{a}=\neg\vec{v}_a, \vec{b}=\vec{v}_b) - P(\vec{a}=\vec{v}_a, \vec{b}=\vec{v}_b)}{P(\vec{b}=\neg\vec{v}_b)} \\ &= \frac{1 - P(\vec{b}=\vec{v}_b) }{P(\vec{b}=\neg\vec{v}_b)} = 1.
\end{align*}

From this unit conditional probability we obtain the negated inference $\vec{b}=\neg\vec{v}_b \Rightarrow \vec{a}=\neg\vec{v}_a$, which proves the claim.
\end{proof}

\ncycleParadox*

\begin{proof} We carry out the two directions of the proof separately.

\paragraph{Paradox implies extremal vertex} Suppose that a theory $\mathbb{T}$ admits a post-selection free $n$-cycle paradox. Then it must lead to the two reasoning chains given in \cref{eq: no_ps_paradox}. Consider the first implications of the two chains: $x_i=0\Rightarrow x_{i+1}=v_{i+1}$ and $x_i=1\Rightarrow x_{i+1}=v'_{i+1}$. 

Using \cref{def:language_v2}, the former implies $P(x_i=0,x_{i+1}\neq v_{i+1})=0$ and the latter implies $P(x_i=1,x_{i+1}\neq v'_{i+1})=0$. Here we have used the fact that the assumptions of \cref{def: paradox} allow us to translate all arguments relative to the setting-condition predictions appearing in \cref{def:language_v2} equivalently at the level of the probabilities $e_{C_i}:=P(x_i,x_{i+1})$ of the associated empirical model (as explicitly shown in the proof of \cref{theorem: main}). Since $v_{i+1}\neq v'_{i+1}$ and $v_{i+1},v'_{i+1}\in\{0,1\}$ by \Cref{def: paradox_no_ps}, we have the following cases
\begin{itemize}
    \item If $v_{i+1}=0$, then we must have $P(x_i=0,x_{i+1}=1)=0$ and $P(x_i=1,x_{i+1}=0)=0$. This implies $P(x_i=x_{i+1})=1$ and hence $<X_i,X_{i+1}>=1$. In this case, choose $\gamma_i=+1$.
    \item If $v_{i+1}=1$, then we must have $P(x_i=0,x_{i+1}=0)=0$ and $P(x_i=1,x_{i+1}=1)=0$. This implies $P(x_i\neq x_{i+1})=1$ and hence $<X_i,X_{i+1}>=-1$. In this case, choose $\gamma_i=-1$.
\end{itemize}
 By repeating the above argument for each implication in the two chains, we obtain $\{\gamma_i\in\{+1,-1\}\}_{i=1}^n$ and an $\Omega$ defined as in \Cref{eq: ncyc_omega} relative to this choice where $\gamma_i=<X_i,X_{i+1}>$ for every $i\in\{1,...,n\}$. This is precisely the condition on correlations that defines an extremal vertex of the $n$-cycle polytope. Moreover, since each chain starts with a value of $x_i$ and arrives at an opposite value of $x_i$, this is only possible if the value was flipped an odd number of times as we go along the implications of the chain. From the above construction, this implies that the number of $\gamma_i=-1$ is odd as required in the definition of $\Omega$. Hence we have shown that the theory $\mathbb{T}$ contains such an extremal vertex of the non-disturbance polytope of the given $n$-cycle scenario.

\paragraph{Extremal vertex implies paradox} Suppose $\mathbb{T}$ contains an extremal vertex of an $n$-cycle scenario i.e., for some $\{\gamma_i\in\{+1,-1\}\}_{i=1}^n$, the theory admits correlations between the $n$-measurements $X_1,...,X_n$ that achieve $\gamma_i=<X_i,X_{i+1}>$ for every $i\in\{1,...,n\}$ and hence $\Omega=n$ for the associated function $\Omega$ as per \Cref{eq: ncyc_omega}. Then select any $i\in\{1,...,n\}$ and consider the following two possible cases
\begin{itemize}
    \item If $\gamma_i=<X_i,X_{i+1}>=+1$, this implies $P(x_i=x_{i+1})=1$. Then we have $x_i=0\Rightarrow x_{i+1}=0$ and $x_i=1\Rightarrow x_{i+1}=1$. Use these as the first implication in the first and second chains of \cref{eq: no_ps_paradox} respectively in the corresponding multi-agent paradox. 
    \item If $\gamma_i=<X_i,X_{i+1}>=-1$, this implies $P(x_i\neq x_{i+1})=1$. Then we have $x_i=0\Rightarrow x_{i+1}=1$ and $x_i=1\Rightarrow x_{i+1}=0$. Use these as the first implication in the first and second chains of \cref{eq: no_ps_paradox} respectively in the corresponding multi-agent paradox. 
\end{itemize}
For the remaining implications of the chain, follow the same procedure. Since the number of $\gamma_i=-1$ is odd, it is guaranteed through this construction that both chains end up with the opposite value assignment to $x_i$ than the starting value. This establishes the existence of a post-selection free multi-agent paradox in the theory $\mathbb{T}$.

\end{proof}

\subsection{Proofs of \cref{sec: structure_quantum}}
In the following proofs concerning quantum multi-agent paradoxes, we will simplify the analysis by applying the {\bf reasoning about compatible agents} and {\bf setting-independence assumptions} involved in multi-agent paradoxes (\cref{def: paradox}). Specifically, when computing a default prediction such as $P(a=1,b=1|\vec{s})$ where $a$ and $b$ are outcomes of compatible measurements, \cref{def:compatibility_v2} ensures that we can compute this equivalently through (primed) measurements acting directly on the initial state $\rho$ of the multi-agent set-up. In the quantum case, we have modeled these as projective measurements w.l.o.g. as explained in \cref{sec: structure_quantum}. Setting-independence allows to ignore the setting labels $\vec{s}$ in the probabilities and associated statements. Hence we can express such default predictions as $P(a=1,b=1)=\tr(\pi_a^1\pi_b^1\rho)$ after ignoring the settings, where the projectors $\pi_a^1$ and $\pi_b^1$ corresponding to the outcomes $a=1$ and $b=1$ are those of the primed measurements of \cref{def:compatibility_v2}, which are by construction associated with the same outcomes as the original ones.

\reduction*

\begin{proof}
In this proof, for brevity and clarity of notation, we will represent each value assignment $\vec{v}_i$ for $i\in\{a,b,c\}$ as the value 1. This is without loss of generality as 1 is an arbitrary label, just as the initial value assignments $\vec{v}_i$ are. In this case the set of values given by the negation $\neg \vec{v}_i$ will simply be denoted as 0.

Let us denote projectors corresponding to the three sets of measurements as $\{\pi_\vec{a}^0,\pi_\vec{a}^1\}$, $\{\pi_\vec{c}^0,\pi_\vec{c}^1\}$ and $\{\pi_\vec{b}^1,\pi_\vec{b}^0\}$ respectively, and consider the density matrix of the global joint state of the whole chain $\rho$. Then we can rewrite the conditional probabilities of the chain as:

\begin{gather*}
    P(\vec{b}=1|\vec{c}=1)=\frac{\tr(\pi_\vec{b}^1\pi_\vec{c}^1\rho)}{\tr(\pi_\vec{c}^1\rho)}=1 \Rightarrow \\ \tr(\pi_\vec{b}^1\pi_\vec{c}^1\rho) = tr(\pi_\vec{c}^1\rho) \ \text{and} \ \tr(\pi_\vec{b}^0\pi_\vec{c}^1\rho)=0 \\
    P(\vec{a}=1|\vec{b}=1)=\frac{\tr(\pi_\vec{a}^1\pi_\vec{b}^1\rho)}{\tr(\pi_\vec{b}^1\rho)}=1 \Rightarrow \\ \tr(\pi_\vec{a}^1\pi_\vec{b}^1\rho)=\tr(\pi_\vec{b}^1\rho) \ \text{and} \ \tr(\pi_\vec{a}^0\pi_\vec{b}^1\rho)=0
\end{gather*}

Starting with $\tr(\pi_\vec{c}^1\rho)$ and $\tr(\pi_\vec{a}^0\pi_\vec{c}^1\rho)$, by inserting an identity, resolved in terms of a relevant set of projectors $\id=\pi_{\vec{a}}^0+\pi_{\vec{a}}^1=\pi_{\vec{b}}^0+\pi_{\vec{b}}^1$ and using the linearity of the trace, we obtain

\begin{gather*}
    \tr(\pi_\vec{a}^1\pi_\vec{c}^1\rho)=\tr(\pi_\vec{c}^1\rho)-\tr(\pi_\vec{a}^0\pi_\vec{c}^1\rho) \\
    \tr(\pi_\vec{a}^0\pi_\vec{c}^1\rho)=\tr(\pi_\vec{b}^1\pi_\vec{a}^0\pi_\vec{c}^1\rho) + \tr(\pi_\vec{b}^0\pi_\vec{a}^0\pi_\vec{c}^1\rho)
\end{gather*}
Since $\tr(\pi_\vec{b}^0\pi_\vec{a}^0\pi_\vec{c}^1\rho)=\tr(\pi_\vec{a}^0\pi_\vec{b}^0\pi_\vec{c}^1\rho)$ (from the given commutation relations) and $\tr(\pi_\vec{b}^0\pi_\vec{c}^1\rho)=0$, then $\tr(\pi_\vec{b}^0\pi_\vec{a}^0\pi_\vec{c}^1\rho)=0$ (using the same type of argument with resolving the identity and linearity of trace). Analogously from $\tr(\pi_\vec{a}^0\pi_\vec{b}^1\rho)=0$ it follows that $\tr(\pi_\vec{b}^1\pi_\vec{a}^0\pi_\vec{c}^1\rho)=0$, and hence $\tr(\pi_\vec{a}^0\pi_\vec{c}^1\rho)=0$. This implies $\tr(\pi_\vec{a}^1\pi_\vec{c}^1\rho)=\tr(\pi_\vec{c}^1\rho)$, and we can conclude
\begin{gather*}
P(\vec{a}=1|\vec{c}=1)=\frac{\tr(\pi_\vec{a}^1\pi_\vec{c}^1\rho)}{\tr(\pi_\vec{c}^1\rho)}=1 \Longleftrightarrow \ (\vec{c}=1 \Rightarrow \vec{a}=1).
\end{gather*}
\end{proof}

\symmetricB*

\begin{proof}
Throughout this proof, without loss of generality, we will denote $\vec{v}_i=1$ for all $i\in\{1,...,N\}$ as in earlier proofs. Thus the chain of equivalence under consideration becomes
\begin{gather*}
    \vec{a}_1=1\Leftrightarrow \vec{a}_2=1\Leftrightarrow\dots\Leftrightarrow \vec{a}_N=1.
\end{gather*}

Consider the first equivalence $\vec{a}_1=1\Leftrightarrow \vec{a}_2=1$. This implies that $P(\vec{a}_2=1|\vec{a}_1=1)=P(\vec{a}_1=1|\vec{a}_2=1)=1$. Expanding the conditional probabilities allows us to conclude that $P(\vec{a}_1=1,\vec{a}_2=1)=P(\vec{a}_1=1)=P(\vec{a}_2=1)$, which within quantum theory gives us $\tr(\pi_{\vec{a}_2}^{1}\pi_{\vec{a}_1}^{1}\rho)=\tr(\pi_{\vec{a}_1}^{1}\rho)=\tr(\pi_{\vec{a}_2}^{1}\rho)$.
Applying the same argument to every equivalence in the chain yields the following

\begin{align}
 \label{eq: sym_proof_1}
\begin{split}
      &\tr (\pi_{\vec{a}_1}^1 \rho) = \tr (\pi_{\vec{a}_2}^1 \rho) = \tr (\pi_{\vec{a}_3}^1 \rho)=\dots=\tr (\pi_{\vec{a}_N}^1 \rho)\\  
      = &\tr (\pi_{\vec{a}_1}^1 \pi_{\vec{a}_2}^1 \rho) = \tr (\pi_{\vec{a}_2}^1 \pi_{\vec{a}_3}^1 \rho)=\tr (\pi_{\vec{a}_3}^1 \pi_{\vec{a}_4}^1 \rho)=\dots=\tr (\pi_{\vec{a}_{N-1}}^1 \pi_{\vec{a}_N}^1 \rho). 
\end{split}
\end{align}

Additionally, $P(\vec{a}_2=1|\vec{a}_1=1)=P(\vec{a}_1=1|\vec{a}_2=1)=1$ implies $P(\vec{a}_1=1,\vec{a}_2=0)=P(\vec{a}_1=0,\vec{a}_2=1)=0$, which gives $\tr(\pi_{\vec{a}_2}^1(\id-\pi_{\vec{a}_1}^1))=\tr(\pi_{\vec{a}_1}^1(\id-\pi_{\vec{a}_2}^1))=0$. Applying the same argument to each equivalence in the chain gives us

\begin{gather}
\label{eq: sym_proof1}
    \tr(\pi_{\vec{a}_{i+1}}^1(\id-\pi_{\vec{a}_i}^1))=\tr(\pi_{\vec{a}_i}^1(\id-\pi_{\vec{a}_{i+1}}^1))=0 \quad \forall i\in\{1,...,N-1\}.
\end{gather}

Focusing on the triple $\vec{a}_1$, $\vec{a}_2$, $\vec{a}_3$, this implies
\begin{gather}
\label{eq: sym_proof_2}
      \tr (\pi_{\vec{a}_1}^1 \rho) = \tr (\pi_{\vec{a}_2}^1 \rho) = \tr (\pi_{\vec{a}_3}^1 \rho) = \tr (\pi_{\vec{a}_1}^1 \pi_{\vec{a}_2}^1 \rho) = \tr (\pi_{\vec{a}_2}^1 \pi_{\vec{a}_3}^1 \rho).
\end{gather}

We can rewrite
\begin{gather*}
    \tr (\pi_{\vec{a}_1}^1 \pi_{\vec{a}_3}^1 \rho) = \tr (\pi_{\vec{a}_1}^1 \pi_{\vec{a}_3}^1 \pi_{\vec{a}_2}^1 \rho) + \tr (\pi_{\vec{a}_1}^1 \pi_{\vec{a}_3}^1 (\id - \pi_{\vec{a}_2}^1) \rho),
\end{gather*} 

where $\tr (\pi_{\vec{a}_1}^1 \pi_{\vec{a}_3}^1 (\id - \pi_{\vec{a}_2}^1) \rho) = 0$ as $\tr (\pi_{\vec{a}_3}^1 (\id - \pi_{\vec{a}_2}^1) \rho) = \tr (\pi_{\vec{a}_1}^1 \pi_{\vec{a}_3}^1 (\id - \pi_{\vec{a}_2}^1) \rho)+\tr ((\id-\pi_{\vec{a}_1}^1) \pi_{\vec{a}_3}^1 (\id - \pi_{\vec{a}_2}^1) \rho)= 0$ (from \cref{eq: sym_proof1}). Hence, since $[\pi_{\vec{a}_2}^1, \pi_{\vec{a}_3}^1] = 0$,

\begin{gather*}
    \tr (\pi_{\vec{a}_1}^1 \pi_{\vec{a}_3}^1 \rho) = \tr (\pi_{\vec{a}_1}^1 \pi_{\vec{a}_3}^1 \pi_{\vec{a}_2}^1 \rho) = \tr (\pi_{\vec{a}_1}^1 \pi_{\vec{a}_2}^1 \pi_{\vec{a}_3}^1 \rho) = \tr (\pi_{\vec{a}_1}^1 \pi_{\vec{a}_2}^1 \rho) - \tr (\pi_{\vec{a}_1}^1 \pi_{\vec{a}_2}^1 (\id - \pi_{\vec{a}_3}^1) \rho).
\end{gather*}
Since $\tr (\pi_{\vec{a}_2}^1 (\id - \pi_{\vec{a}_3}^1) \rho) = 0$ (from \cref{eq: sym_proof1}), $\tr (\pi_{\vec{a}_1}^1 \pi_{\vec{a}_2}^1 (\id - \pi_{\vec{a}_3}^1) \rho) = 0$, and we have $  \tr (\pi_{\vec{a}_1}^1 \pi_{\vec{a}_3}^1 \rho) = \tr (\pi_{\vec{a}_1}^1 \pi_{\vec{a}_2}^1 \rho)$. Together with \cref{eq: sym_proof_1}, we arrive to
\begin{gather}
\label{eq: sym_proof_3}
    \tr (\pi_{\vec{a}_1}^1 \pi_{\vec{a}_3}^1 \rho) = \tr (\pi_{\vec{a}_1}^1 \rho)= \tr (\pi_{\vec{a}_3}^1 \rho).
\end{gather}

Using \cref{eq: sym_proof_1}, this further yields the following for the triple $\vec{a}_1$, $\vec{a}_3$ and $\vec{a}_4$, which is identical to \cref{eq: sym_proof_2} up to shifting 2 and 3 to 3 and 4 respectively.

\begin{gather}
    \label{eq: sym_proof_4}
          \tr (\pi_{\vec{a}_1}^1 \rho) = \tr (\pi_{\vec{a}_3}^1 \rho) = \tr (\pi_{\vec{a}_4}^1 \rho) = \tr (\pi_{\vec{a}_1}^1 \pi_{\vec{a}_3}^1 \rho) = \tr (\pi_{\vec{a}_3}^1 \pi_{\vec{a}_4}^1 \rho).
\end{gather}
 Running through the same argument, we would obtain 
 \begin{gather*}
    \tr (\pi_{\vec{a}_1}^1 \pi_{\vec{a}_4}^1 \rho) = \tr (\pi_{\vec{a}_1}^1 \rho)= \tr (\pi_{\vec{a}_4}^1 \rho).
\end{gather*}

Repeating the argument in a similar manner for the triples $(\vec{a}_1,\vec{a}_4,\vec{a}_5)$,...,$(\vec{a}_1,\vec{a}_{N-1},\vec{a}_{N})$, we would finally arrive at
 \begin{gather}
 \label{eq: sym_proof_5}
    \tr (\pi_{\vec{a}_1}^1 \pi_{\vec{a}_N}^1 \rho) = \tr (\pi_{\vec{a}_1}^1 \rho)= \tr (\pi_{\vec{a}_N}^1 \rho).
\end{gather}

In this case, we additionally have the commutation $[\pi_{\vec{a}_1}^1,\pi_{\vec{a}_N}^1]=0$ and hence $ \tr (\pi_{\vec{a}_1}^1 \pi_{\vec{a}_N}^1 \rho)= \tr (\pi_{\vec{a}_N}^1 \pi_{\vec{a}_1}^1 \rho)=P(\vec{a}_1=1,\vec{a}_N=1)$. Moreover, $\tr (\pi_{\vec{a}_1}^1 \rho)=P(\vec{a}_1=1)$ and $\tr (\pi_{\vec{a}_N}^1 \rho)=P(\vec{a}_N=1)$. Putting this together with \cref{eq: sym_proof_5}, we obtain
\begin{gather*}
    P(\vec{a}_1=1|\vec{a}_N=1)=P(\vec{a}_N=1|\vec{a}_1=1)=1.
\end{gather*}
This implies the equivalence $\vec{a}_1=1\Leftrightarrow \vec{a}_N=1$, and concludes the proof.
\end{proof}

\end{document}